\algnewcommand{\LineComment}[1]{\Statex \(\triangleright\) #1}
\algnewcommand{\LineCommentSpaced}[2]{\Statex \(#2 \triangleright\) #1}
\newcolumntype{R}[2]{%
    >{\adjustbox{angle=#1,lap=\width-(#2)}\bgroup}%
    l%
    <{\egroup}%
}
\newcommand*\rot{\multicolumn{1}{R{25}{1em}}}
\newcommand{\spara}[1]{\smallskip\noindent{\bf #1}}
\newtheorem{remark}{Remark}
\newtheorem{theorem}{Theorem}
\newtheorem{corollary}{Corollary}
\newtheorem{lemma}{Lemma}
\newtheorem{problem}{Problem}
\DeclareMathOperator*{\argmin}{arg\,min}
\DeclareMathOperator*{\argmax}{arg\,max}
\newcommand{\NPhard}{$\mathbf{NP}$-hard}
\newcommand{\squishlist}{
 \begin{list}{$\bullet$}
  {  \setlength{\itemsep}{0pt}
     \setlength{\parsep}{3pt}
     \setlength{\topsep}{3pt}
     \setlength{\partopsep}{0pt}
     \setlength{\leftmargin}{2em}
     \setlength{\labelwidth}{1.5em}
     \setlength{\labelsep}{0.5em}
} }
\newcommand{\squishlisttight}{
 \begin{list}{$\bullet$}
  { \setlength{\itemsep}{0pt}
    \setlength{\parsep}{0pt}
    \setlength{\topsep}{0pt}
    \setlength{\partopsep}{0pt}
    \setlength{\leftmargin}{2em}
    \setlength{\labelwidth}{1.5em}
    \setlength{\labelsep}{0.5em}
} }
\newcommand{\squishdesc}{
 \begin{list}{}
  {  \setlength{\itemsep}{0pt}
     \setlength{\parsep}{3pt}
     \setlength{\topsep}{3pt}
     \setlength{\partopsep}{0pt}
     \setlength{\leftmargin}{1em}
     \setlength{\labelwidth}{1.5em}
     \setlength{\labelsep}{0.5em}
} }
\newcommand{\squishend}{
  \end{list}
}
\newcommand{\ourprob}{\textsc{Min Wiener Connector}}
\newcommand{\ourprobweak}{\textsc{Min-A Connector}}
\newcommand{\steiner}{\textsc{Steiner Tree}}
\newcommand{\steinernode}{\textsc{Node-weighted Steiner Tree}}
\newcommand{\setcover}{\textsc{Set Cover}}
\newcommand{\weakrootedprob}{\textsc{Min Weak-A Rooted Connector}}
\newcommand{\weirdprob}{\textsc{Min-B Rooted Steiner Tree}}
\newcommand{\vertexcovershort}{\textsc{Vertex Cover}}
\newcommand{\vertexcover}{\textsc{Vertex Cover in Bounded Degree Graphs}}
\newcommand{\ouralg}{\textsf{WienerSteiner}}
\newcommand{\wsq}{\textsc{ws-q}}
\newcommand{\cep}{\textsc{cps}}
\newcommand{\ctp}{\textsc{ctp}}
\newcommand{\st}{\textsc{st}}
\newcommand{\ppr}{\textsc{ppr}}
\providecommand{\poly}{{\operatorname{poly}}}
\newcommand{\dataset}[1]{\textsf{#1}}
\newcommand{\mycomment}[1]{}
\newcommand{\reals}{{\mathbb R}}
\newcommand{\naturals}{{\mathbb N}}
\DeclareRobustCommand{\calH}[0]{{\mathcal H}}
\DeclareRobustCommand{\calP}[0]{{\mathcal P}}
\newcommand{\W}{\ensuremath{\mathbf{W}}}
\newcommand{\Wsub}{\ensuremath{W}}
\newcommand{\Wone}{\ensuremath{\mathbf{A}}}
\newcommand{\Wonew}{\ensuremath{\mathbf{\widetilde{A}}}}
\newcommand{\Wonesub}{\ensuremath{A}}
\newcommand{\Wonesubw}{\ensuremath{{A}}}
\newcommand{\Wtwo}{\ensuremath{\mathbf{B}}}
\newcommand{\Wtwosub}{\ensuremath{B}}
\newenvironment{prooftext}[1]{\par\noindent{\bf Proof#1.}\quad}{\nopagebreak$\qed$\\}
\newenvironment{myproof}{\begin{prooftext}{}}{\nopagebreak\end{prooftext}}
\begin{document}
\conferenceinfo{SIGMOD'15,}{May 31--June 4, 2015, Melbourne, Victoria, Australia.}
\copyrightetc{Copyright \copyright~2015 ACM \the\acmcopyr}
\crdata{978-1-4503-2758-9/15/05\ ...\$15.00.\\
http://dx.doi.org/10.1145/2723372.2749449}
\title{The Minimum Wiener Connector Problem}

\numberofauthors{2}
\author{
\begin{tabular}{cc}
Natali Ruchansky &  Francesco Bonchi \hspace{2mm}  David Garc\'ia-Soriano \\
\affaddr{Computer Science Dept.}& Francesco Gullo \hspace{2mm} Nicolas Kourtellis \\
 \affaddr{Boston University, USA} &  \affaddr{Yahoo Labs, Barcelona}\\
\sf{natalir@bu.edu} & \sf{\{bonchi,davidgs,gullo,kourtell\}@yahoo-inc.com}
\end{tabular}
}

\maketitle
\sloppy

\begin{abstract}
The Wiener index of a graph is the sum of all pairwise shortest-path distances between its vertices.
In this paper we study the novel problem of finding a \emph{minimum Wiener connector}: given a connected graph $G=(V,E)$ and a set $Q\subseteq V$ of query vertices, find a
subgraph of $G$ that connects all query vertices and has minimum Wiener index.

We show that \ourprob\ admits a polynomial-time (albeit impractical) exact algorithm for the special case where the number of query vertices is
bounded.  We show that in general the problem is \NPhard, and has no PTAS unless $\mathbf{P} = \mathbf{NP}$.
Our main contribution is a constant-factor approximation algorithm running in
time~$\widetilde{O}(|Q||E|)$.

A thorough experimentation on a large variety of real-world graphs confirms that our method returns smaller and denser solutions
than other methods, and does so by adding to the query set $Q$ a small number of ``important'' vertices (i.e., vertices with high \emph{centrality}).
\end{abstract}

\section{Introduction}
\label{sec:intro}

\enlargethispage*{\baselineskip}
Suppose we have identified a set of subjects in a terrorist network suspected of organizing an attack. Which other subjects, likely to be
involved, should we keep under control? Similarly, given a set of patients infected with a viral disease, which other people should we monitor?
Given a set of proteins of interest, which other proteins participate in pathways with them?

Each of these questions can be modeled as a graph-query problem: given a graph $G = (V,E)$ and a set of query vertices $Q \subseteq V$,
find a subgraph $H$ of $G$ which \emph{``explains''} the connections existing among the nodes in $Q$, that is to say that $H$ must be connected and contain
all query vertices  in~$Q$. We call this query-dependent subgraph a \emph{connector}.

While there exist many methods for query-dependent subgraph extraction  (discussed later in Section~\ref{sec:related}), the bulk of this literature
aims at finding a ``community'' around the set of query vertices $Q$: the implicit assumption is that the vertices in $Q$  belong to the same community, and a good solution will contain other vertices belonging to the same community of $Q$. When such an assumption is satisfied, these methods return reasonable subgraphs. But when the query vertices belong to different modules of the input graph, these methods tend to return too large a subgraph, often so large as to be meaningless and unusable in real applications.

The goal of this paper is different, as we do not aim at reconstructing a community. Instead we seek a \emph{small} connector: a connected subgraph of the input graph which contains $Q$ and a small set of \emph{important additional vertices}. These additional vertices could explain the relation among the vertices in $Q$, or could participate in some function by acting as important links among the vertices in~$Q$.  We achieve this by defining a new, \emph{parameter-free} problem where, although the size of the solution connector is left unconstrained, the objective function itself takes care of keeping it small.

Specifically, given a graph   $G = (V,E)$ and a set of query vertices $Q \subseteq V$, our problem asks for the connector $H^*$ minimizing the
sum of shortest-path distances among all pairs of vertices  (i.e., the \emph{Wiener index}~\cite{wiener1947structural}) in the solution~$H^*$:
$$
H^* = \argmin_{G[S] : Q\subseteq S \subseteq V} \sum_{\{u,v\} \in S} d_{G[S]}(u,v)
$$
where $G[S]$ denotes the subgraph induced by a set of nodes~$S$, and $d_{G[S]}(u,v)$ denotes the shortest-path distance between nodes $u$ and $v$ in $G[S]$.
We call $H^*$ the \emph{minimum Wiener connector} for query $Q$.

\enlargethispage*{\baselineskip}
This is a very natural problem to study: shortest paths define fundamental structural properties of graphs,  playing a role in all the basic
mechanisms of networks such as their evolution \cite{kossinets2006empirical} and the formation of communities \cite{girvan02community}. The fraction of shortest paths that a vertex takes part
in is called its \emph{betweenness centrality} \cite{bavelas48}, and is a well established measure of the importance of a vertex, i.e.,  the extent to which an
actor has control over information flow. As our experiments in Section \ref{sec:experiments} show, a consequence of our
definition of minimum Wiener connector is that our solutions tend to include vertices which hold an important position in the network, i.e., vertices with high betweenness centrality.

Consider social and biological networks with their modular structure~\cite{girvan02community} (i.e., the existence of communities of vertices densely connected inside, and sparsely connected with the outside). When the query vertices $Q$ belong to the same community, the additional nodes
added to $Q$ to form the minimum Wiener connector will tend to belong to the same community. In particular, these will typically be vertices with
higher ``centrality'' than those in $Q$: these are likely to be influential vertices playing leadership roles in the community. These might be good
users for spreading information, or to target for a \emph{viral marketing} campaign~\cite{kempe03}.

Instead, when the query vertices in $Q$ belong to different communities, the additional vertices added to $Q$ to form the minimum Wiener connector will
contain vertices adjacent to edges that ``bridge'' the different communities. These also have strategic importance: information has to go over these
bridges to propagate from a community to others, thus the vertices incident to bridges enjoy a strategically favorable position because they can block information, or access it before other individuals in their community.
These vertices are said to span a \emph{``structural hole''}~\cite{Burt92structuralholes}: they are the best candidates to target for blocking the
spread of rumors or viral diseases in a social network, or  the spread of malware in a network of computers. In a protein-protein interaction network these vertices can represent proteins that play a key role in linking modules and whose removal can have different phenotypic effects.

\begin{figure}[t!]
  \vspace{-4mm}
  \centering
   \hspace{-6mm}\includegraphics[width= .52\textwidth]{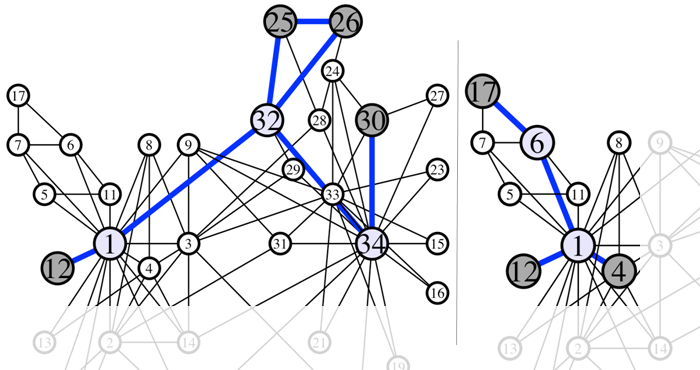}
    \vspace{-6mm}
  \caption{Example of two minimum Wiener connectors on the Zachary's ``karate club'' social network: on the left the query vertices $Q$ (in dark gray)  belong to different communities, on the right they belong to the same community. \label{fig:karate1} }
  \vspace{-3mm}
\end{figure}

As an example, consider the classic  Zachary's ``karate club'' toy social network \cite{karate} with known community structure: a dispute between the club president
(vertex 34) and the instructor (vertex 1) led to the club splitting into two. In Figure~\ref{fig:karate1} we show two different minimum Wiener connectors: the one on the left has the query nodes $Q$ (in dark gray)  belonging to the two different communities, while in the example on the right, all the query vertices belong to the same community.
As discussed above, we can observe that when the query vertices span over different communities, the minimum Wiener connector will include vertices incident to bridging edges. This is the case in our example in Figure~\ref{fig:karate1} (left): given $Q=\{12,25,26,30\}$ the solution subgraph $H^*$ adds to $Q$ the vertices 1 and 34 (the leaders of the two communities) and the vertex 32, which is one of the few vertices connecting 1 and 34 (which do not have a direct connection) and thus practically bridging the two communities.
By contrast, in the example on the right, the query vertices $Q =\{4,12,17\}$ belong to the same community, and as expected the solution
remains inside the community: in this case we just add two vertices, one of which is the community leader (vertex 1), which holds a very central position.


\subsection{Related work}
\label{sec:related}

At a high level our problem can be described as the problem of finding an interesting connected subgraph of $G$ containing a set of query vertices $Q$. Several problems of this type have been studied under different names, depending on the objective function adopted: local community detection, seed set expansion, connectivity subgraphs, just to mention a few.
As discussed above, most of the existing approaches aim at finding a community around the seeds in $Q$: these methods end up producing very large
solutions, especially when the query vertices are not in the same community. Our goal instead is to produce a small connector by adding a few central vertices. Another important distinction is that many researchers have considered only the cases where $|Q| = 1$~\cite{Asur,OverlappingCSSIGMOD13,SozioLocalSIGMOD14,KtrussSIGMOD14} or $|Q| = 2$~\cite{connect}. Finally, our method is \emph{parameter-free}, while several existing methods have many parameters which make a direct comparison complicated.  In the following we provide a brief overview of this body of literature, highlighting the distinctions w.r.t. our proposal.

\spara{Random-walk methods.} Many authors have adopted random-walk-based approaches to the problem of finding vertices related to a seed of vertices: this is the basic idea of Personalized PageRank \cite{PPR1,PPR2}.
Spielman and Teng propose methods that start with
a seed  and sort all other vertices by their degree-normalized
PageRank with respect to the seed \cite{Spielman}. Andersen and Lang \cite{Andersen1} and Andersen et al. \cite{Andersen2} build on these methods to formulate an algorithm for detecting overlapping
communities in networks. In a recent work, Kloumann and Kleinberg \cite{Kloumann} provide a systematic evaluation of different methods for \emph{seed set expansion} on graphs with known community structure. They assume that the seed set $Q$ is made of vertices belonging to the same community $C$: under this assumption they measure precision and recall in reconstructing $C$. Their main findings are that $(i)$ PageRank-based methods outperform other methods, $(ii)$ few iterations (two or three) of the PageRank update rule are sufficient for convergence, and $(iii)$ standard PageRank is to be preferred over degree-normalized PageRank \cite{Andersen1,Andersen2}.

Closer to our goals, Faloutsos et al.~\cite{connect} address the problem of finding a subgraph that
connects two query vertices ($|Q| = 2$) and contains at most $b$ other vertices, optimizing a measure of proximity based on \emph{electrical-current flows}.
Tong and Faloutsos \cite{CenterpieceKDD06} extend the work of \cite{connect} to deal with query sets of any size, but again having a budget $b$ of additional vertices.
They introduce the concept of \emph{Center-piece Subgraph}, the computation of which is based on the Hadamard (i.e., component-wise)  product of a set
of vectors, where each vector is obtained by doing a random walk with restart from a query vertex. The efficiency and scalability of the method is severely limited by the processing time of random walks with restart.
Koren et al.~\cite{KorenTKDD07} redefine proximity using the notion of \emph{cycle-free effective conductance}  and propose a branch and bound algorithm.

All the approaches described above require several parameters: common to all is the size of the required solution, plus all the usual parameters of
PageRank methods, e.g., the jumpback probability, or the number of iterations. We recall that instead our problem definition and algorithms are
completely parameter-free.

\spara{Other methods.} Asur and Parthasarathy \cite{Asur} introduce the concept of \emph{viewpoint
neighborhood analysis} in order to identify neighbors
of interest to a particular source in a dynamically evolving
network. The authors also show a connection of their measure with heat
diffusion. However, the method of  Asur and Parthasarathy has several parameters, such as the budget, the stopping threshold, and minimum number of viewpoint neighborhoods for a vertex.

More recently, Sozio and Gionis~\cite{SozioKDD10} provide a parameter-free  combinatorial optimization formulation. Their problem asks to find  a connected subgraph containing $Q$ and maximizing the minimum degree. Sozio and Gionis show that the problem is solvable in polynomial time and propose an efficient
algorithm. However, their algorithm tends to return extremely large solutions 
(it should be noted that for the same query $Q$ many different optimal solutions of different sizes exist).
To circumnavigate this drawback they also study a constrained version of
their problem, with an upper bound on the size of the output community. In this case the problem becomes \NPhard.
The authors propose a heuristic where the quality of the solution produced (i.e., its minimum degree) can be arbitrarily far away from
the optimal value of a solution to the unconstrained problem.

Cui et al. \cite{SozioLocalSIGMOD14} propose a local-search method to improve the
efficiency of the algorithm by Sozio and Gionis~\cite{SozioKDD10}; however, their method does not solve the issue of the size of the solutions produced. Moreover, their method works only for the special case $|Q| = 1$.


\spara{Steiner Tree and MAD Spanning Trees.}
Given a graph and a set of terminal vertices, the \emph{Steiner tree} problem asks to find a minimum-cost tree that connects all terminals.
This is an extremely well-studied problem: a plethora of methods to solve/approximate it and many variants of the problem have been defined~\cite{Hwang1992}.
We will explain in detail (Section~\ref{sec:problem}) how our {\ourprob} problem differs from the Steiner tree problem.

Another related problem is \emph{Minimum Average Distance}~(MAD)~\emph{Spanning Trees}: given a graph $G$, find a spanning tree of $G$ that minimizes the average shortest-path distance among all pairs of vertices~\cite{Hu1974}.
This problem is related to Wiener index as a MAD Spanning Tree is a spanning tree that minimizes the Wiener index. However, this problem still remains different from our \ourprob\ as the latter asks for subgraphs containing a given set of query vertices rather than asking to span the whole input graph.
In a sense, our problem is to MAD Spanning Trees as Steiner Tree is to Minimum Spanning Tree.

\spara{Wiener index.}
The notion of \emph{Wiener index} is rooted in chemistry, where in 1947 Harry Wiener introduced it to characterize the topology of chemical compounds~\cite{wiener1947structural}.
In general, the Wiener index captures how well connected a set of vertices are, thus bearing resemblance to centrality measures and finding
application in several fields, such as communication theory, facility location, and cryptography~\cite{Dobrynin2001}.
A recent work also considers the Wiener index in the context of event detection in activity networks~\cite{Rozenshtein2014}.
Existing literature on Wiener index focuses on computing it efficiently~\cite{Mohar1988}, finding a tree that minimizes/maximizes it among all trees with a prescribed degree sequence~\cite{Wang2008,Zhang2008,CEla2011}, characterizing  the trees which minimize~\cite{Fischermann2002} or maximize~\cite{Fischermann2002,Stefanovic2008} the Wiener index among all trees of a given size and maximum degree, or solving the inverse Wiener index problem~\cite{Fink2012}.

\emph{To the best of our knowledge, the problem of finding a minimum-Wiener-index subgraph containing a given set of query vertices has never been studied before.}

In our experiments in Section \ref{sec:experiments} we will compare our method with prior contributions which allow $|Q| > 2$: following the findings of \cite{Kloumann} we will use a standard PageRank (with no normalization) personalized over the query vertices $Q$ (\ppr\ for short),  the so-called
\emph{Center-piece Subgraph} \cite{CenterpieceKDD06} (\cep\ for short) which is closer in spirit to our goal of finding a connector and not a community, the so-called \emph{Cocktail Party Problem}~\cite{SozioKDD10} (\ctp) which is parameter free, and the classic Steiner Tree (\st\ for short).

\subsection{Contributions and roadmap}
In this paper we initiate the study of the \ourprob\ problem, a novel \emph{parameter-free} graph query problem,
whose objective function favors small connected subgraphs, obtained by adding few \emph{central} vertices to the query vertices. Beyond this main
contribution, we provide a series of theoretical and empirical results:
\squishlist
    \item We show that, when the number of query vertices is small, \ourprob\ can be solved exactly in polynomial time (\textsection\ref{sec:exactalg}). However, in the general case
    our problem is \NPhard\ and it has no PTAS unless $\mathbf{P}=\mathbf{NP}$ (\textsection\ref{sec:problem}): note that, while the inapproximability result says that the problem cannot be approximated within \emph{every} constant, it leaves open the possibility of approximating it within \emph{some} constant.

    \item In fact, our central result is an efficient constant-factor
    approximation algorithm for \ourprob\ (\textsection\ref{sec:approxalg}), which runs in $\widetilde{O}{(|Q| |E|)}$ time.

    \item We devise integer-programming formulations of our problem (\textsection\ref{sec:lowerbounds}). We use them to compare our solutions for small graphs with those found using
    state-of-the art solvers, and show empirically that our solutions are indeed close to optimal (\textsection\ref{subsec:exp_approx}).

    \item We empirically confirm that existing methods for query-dependent community extraction tend to produce large solutions, which become even larger when the query set $Q$ is made of vertices belonging to different communities (\textsection\ref{subsec:exp_gtc}). Our method instead produces solution subgraphs which are smaller in size, denser, and which include more central nodes  (\textsection\ref{subsec:exp_chara}), regardless of whether the query vertices belong to the same community or not.

    \item We show interesting case-studies in biological and social networks, confirming that our method returns small solutions that include important vertices  (\textsection\ref{sec:casestudies}).

    %

\squishend

\section{Problem statement}
\label{sec:problem}

%
%

\spara{Preliminaries.}
We consider simple, connected, undirected, unweighted graphs. We denote the vertex set (resp., edge set) of a graph $G$ by $V(G)$ (resp., $E(G)$).

    Given a graph $G$ and $S \subseteq V(G)$, we denote by $G[S]$ the subgraph of $G$ induced by $S$: $G[S] = (S, E|S)$, where $E|S = \{(u,v) \in E \mid u \in S, v \in S\}$.
For any connected graph $H$ and $u, v \in V(H)$, let $d_{H}(u,v)$ denote the shortest-path distance between $u$ and $v$ in $H$.
Clearly, if $H$ is a subgraph of $G$, it holds that
$d_G(u,v) \leq d_H(u,v)$.

The \emph{Wiener index} $\W(H)$ of a (sub)graph $H$ is the sum of pairwise distances between vertices in
$H$~\cite{wiener1947structural}:
\begin{equation}\label{eq:wiener}
\W(H) = \sum_{\{u,v\} \subseteq V(H)} d_{H}(u,v),
\end{equation}
where the sum is taken over unordered pairs.

For ease of notation, we identify any $S \subseteq V(G)$ with its induced subgraph $G[S]$.
Thus, we use the shorthand $d_S(u, v)$  (resp., $\W(S)$)
to denote $d_{G[S]}(u, v)$ (resp., $\W(G[S])$.

The input to our problem is a connected graph $G$ and a set $Q \subseteq V(G)$ of \emph{query vertices} (or
        \emph{terminals}). A \emph{connector} for $Q$ in $G$ is a connected subgraph of $G$ containing $Q$.

\spara{Problem definition.}
In this work we aim at finding subgraphs of the input graph that connect a given set of query vertices while minimizing the Wiener index.

\begin{problem}[\ourprob]\label{prob:ourproblem}
Given a graph $G = (V,E)$ and a query set $Q \subseteq V$, find a connector $H^*$ for $Q$ in $G$ with the smallest Wiener index.
\end{problem}
Clearly we may restrict the search to \emph{vertex sets} and their corresponding induced subgraphs.

Note that $\W(H)$ may be written as the product of ${|V(H)| \choose 2}$ and the average distance between  pairs of
distinct vertices of~$H$.
Therefore,  Problem~\ref{prob:ourproblem} encourages solutions that attain a proper tradeoff between having small pairwise distances \emph{and} using few vertices.
In fact, while adding vertices may decrease distances, it also increases the number of terms to be summed up in Eq~\eqref{eq:wiener}.

\spara{Hardness results.} We next prove that the problem is \NPhard, hence unlikely to admit efficient exact solutions. In fact we show a
stronger result, namely that it does not admit a polynomial-time approximation scheme: unless $\mathbf{P} = \mathbf{NP}$, one cannot obtain a
polynomial-time $c$-factor approximation algorithm for every $c > 1$.
This inapproximability result says that the problem cannot be approximated within \emph{every} constant, but leaves open the possibility
of approximating it to \emph{some} constant; in fact we will show this possible.

Our proof makes use of the following inapproximability result for \vertexcover:

\begin{theorem}[Dinur and Safra~\cite{DinurSafra05}]\label{theorem:dinur}
There exist constants $d \in \naturals$, $\alpha \in \reals^+$ with the following property:
given a degree-$d$ graph $G$ and an integer $k \in \naturals$, it is \NPhard\ to distinguish instances where the minimum vertex cover of $G$ has size
larger than $k (1 + \alpha)$ from instances where the minimum vertex cover of $G$ has size at most~$k$.


\end{theorem}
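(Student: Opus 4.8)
The plan is to derive this statement from the PCP theorem through a \emph{gap-preserving} reduction that converts a hard constraint-satisfaction gap into a vertex-cover gap, and then to force the maximum degree down to a fixed constant~$d$. I would begin from a \emph{Label Cover} instance, equivalently a two-prover one-round projection game: combining the PCP theorem with parallel repetition (Raz) yields, for every constant $\varepsilon > 0$, an \NPhard\ gap problem on bipartite label-cover instances with perfect completeness and soundness at most $\varepsilon$, meaning no labeling satisfies more than an $\varepsilon$ fraction of the projection constraints. The target vertex-cover gap $1+\alpha$ will be secured by choosing $\varepsilon$ sufficiently small at the very end.

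The core of the reduction is a \emph{long-code gadget}. For each variable of the label-cover instance I would place a block of vertices indexed by the points of a folded, biased long code over the label alphabet, adding edges both \emph{inside} a block — encoding the intersection/dictatorship structure forcing an honest assignment to be a single coordinate — and \emph{across} blocks joined by a constraint — encoding the projection. Independent sets in the resulting graph $G$ then correspond to families of mutually consistent codeword-assignments, and a vertex cover is exactly the complement of such an independent set. \emph{Completeness} is the easy direction: a labeling satisfying all constraints produces an independent set of a prescribed density $\rho$ in every block, hence a vertex cover of size at most $k := (1-\rho)\,|V(G)|$, which lands in the ``small'' case.

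The main obstacle is \emph{soundness}: I must show that if $G$ has a vertex cover of size at most $k(1+\alpha)$ — equivalently an independent set of density noticeably above $\rho$ — then one can decode a labeling satisfying more than an $\varepsilon$ fraction of the constraints, contradicting the hardness of Label Cover. This is where the genuinely hard combinatorial content sits. One analyzes the large independent sets inside each long-code block via Fourier analysis on the biased hypercube and invokes a structural theorem — the Dinur--Safra strengthening of the Frankl--R\"odl intersecting-families bound — which asserts that any independent set substantially denser than the dictator value must be essentially a \emph{junta}, concentrated on a constant number of influential coordinates. Those coordinates supply a short list of candidate labels per variable, and a standard list-decoding argument shows that with non-negligible probability the candidates jointly satisfy a projection constraint, producing the contradiction. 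Balancing the bias of the long code against $\varepsilon$ is precisely what pins down the admissible value of $\alpha$.

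Finally, to meet the \emph{bounded-degree} requirement I would observe that the raw gadget has degree growing with the label alphabet, and then apply a standard degree-reduction step: replace every high-degree vertex by a constant-degree expander gadget (the Papadimitriou--Yannakakis expander-replacement technique), which perturbs the vertex-cover value only by an additive amount controlled by the expander's edge count and therefore preserves the multiplicative gap after rescaling $k$ and shrinking $\alpha$ by a constant factor. This yields instances of bounded degree $d$ with the required $(1+\alpha)$ gap, establishing the theorem. I expect the soundness/combinatorial step to dominate the difficulty; completeness and degree reduction are comparatively mechanical.
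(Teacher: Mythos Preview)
The paper does not prove this theorem. Theorem~\ref{theorem:dinur} is stated with attribution to Dinur and Safra~\cite{DinurSafra05} and used as a black box: the paper's own contribution begins with Theorem~\ref{theorem:nphard}, whose proof invokes Theorem~\ref{theorem:dinur} as a known inapproximability result for \vertexcover. So there is no ``paper's own proof'' of this statement to compare against.

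That said, your sketch is a reasonable high-level outline of how the Dinur--Safra result is actually obtained: Label Cover via PCP and parallel repetition, biased long-code gadgets, a structural theorem on large independent sets in the gadget (their ``junta''-type result for $p$-biased families), and list-decoding for soundness. Two small cautions. First, the bounded-degree property in the Dinur--Safra construction is obtained somewhat differently from the Papadimitriou--Yannakakis expander-replacement you describe; the constraint graph underlying the PCP already has bounded degree, and the long-code gadget size depends only on the (constant) label alphabet, so the graph degree is bounded without a separate post-processing step. Second, the specific constant $\alpha$ achievable (Dinur--Safra obtain roughly $10\sqrt{5}-21 \approx 0.36$) is tied to the precise combinatorial bound and bias parameter, not just to ``choosing $\varepsilon$ small enough''; your sketch glosses over how the numerical gap emerges. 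But for the purposes of this paper none of that matters: only the \emph{existence} of some constants $d$ and $\alpha$ is used, and that is simply cited.
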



\begin{theorem}\label{theorem:nphard}
There is some constant $\varepsilon \in \reals^+$ such that Problem~\ref{prob:ourproblem} is \NPhard\ to approximate within a $1 + \varepsilon$ factor.
\end{theorem}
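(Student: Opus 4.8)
The plan is to prove inapproximability via a gap-preserving reduction from \vertexcover\ in bounded-degree graphs, using the Dinur--Safra result (Theorem~\ref{theorem:dinur}) as the hardness source. Given a degree-$d$ instance $G=(V,E)$ of \vertexcover\ and a target $k$, I would construct a graph $G'$ together with a query set $Q$ so that the optimal Wiener index of a connector for $Q$ in $G'$ encodes the size of a minimum vertex cover of $G$. The natural encoding is to place a query vertex on (or near) every \emph{edge} of $G$, forcing any connector to ``touch'' all edges; the additional non-query vertices that a connector must include to remain connected will then correspond to a vertex cover of $G$. Concretely, I would subdivide each edge $e=\{u,v\}$ of $G$ with a new query vertex $q_e$, and add a global apex/root vertex $r$ adjacent to every original vertex of $V$ (and possibly add $r$ to $Q$). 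Then connecting the query set $\{q_e : e\in E\}$ through $r$ requires selecting, for each edge $e$, at least one of its endpoints $u,v$ to route $q_e$ to $r$; the set of selected endpoints is exactly a vertex cover of $G$.

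The key steps, in order, are as follows. First, fix the construction of $(G',Q)$ and argue it is polynomial-time computable and that $G'$ is connected. Second, establish the forward direction: from a vertex cover $C$ of $G$ with $|C|\le k$, build a connector $H_C$ consisting of $r$, the vertices of $C$, and all the edge-vertices $q_e$, and compute $\W(H_C)$ as a function of $|C|$, $|E|$, and the fixed degree parameter $d$. Because $r$ is adjacent to all of $C$ and each $q_e$ is adjacent to a covering endpoint in $C$, the pairwise distances in $H_C$ are small and bounded (most pairs are within distance $2$ or $3$ through $r$), so $\W(H_C)$ is a controlled increasing function of $|C|$. Third, establish the reverse/soundness direction: any connector $H$ for $Q$ must route every $q_e$ to the rest of the connector, and since the only neighbors of $q_e$ in $G'$ are $u$ and $v$, the original vertices of $V$ appearing in $H$ form a vertex cover of $G$. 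I would then lower-bound $\W(H)$ in terms of the number of original vertices it contains, so that a connector of small Wiener index yields a small vertex cover.

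The crux is to make this a genuine \emph{gap} reduction rather than merely an NP-hardness reduction: I must show that the additive gap of $\alpha k$ in the vertex cover size (instances with cover $\le k$ versus $> k(1+\alpha)$) translates into a multiplicative gap of $1+\varepsilon$ in the optimal Wiener index. This requires a careful accounting of $\W$ as a function of the connector's vertex count $n' = |V(H)|$. Writing $\W(H)$ as roughly $\binom{n'}{2}$ times an average distance that stays bounded (thanks to the apex $r$ keeping diameter $O(1)$), the dominant dependence on the cover size is quadratic in $n'$, and since $|E|=\Theta(|V|)$ in a bounded-degree graph while the cover size is $\Theta(|V|)$, one can choose the construction so that varying the cover size from $k$ to $k(1+\alpha)$ changes $\W$ by a constant multiplicative factor. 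The main obstacle I anticipate is precisely this bookkeeping: the Wiener index mixes the (good) pairs $\{q_e,q_{e'}\}$, which are numerous and roughly fixed, with the (discriminating) pairs involving chosen original vertices, and I must ensure the discriminating contribution is a large enough fraction of the total that the additive vertex-cover gap does not get ``washed out'' to a $1+o(1)$ factor.

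To control this, the degree bound $d$ from Theorem~\ref{theorem:dinur} is essential: it guarantees $|E|\le d\,|V|/2$, so the number of edge-query vertices is linear (not quadratic) in $|V|$, preventing the fixed $q_e$--$q_e'$ pairs from dominating. I would tune any free parameters in the construction (for instance, the number of copies of $r$, or weights introduced by further subdivision to keep the graph simple and unweighted) so that the ratio of optimal Wiener indices between the two cases is bounded below by some explicit $1+\varepsilon>1$ depending only on $d$ and $\alpha$. Once that constant gap is exhibited, a polynomial-time $(1+\varepsilon)$-approximation for Problem~\ref{prob:ourproblem} would distinguish the two \vertexcover\ cases, contradicting Theorem~\ref{theorem:dinur} under $\mathbf{P}\ne\mathbf{NP}$, which completes the proof.
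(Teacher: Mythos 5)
Your proposal follows essentially the same route as the paper's own proof: the identical gadget (an apex root $r$ adjacent to all original vertices, plus one query vertex subdividing each edge, which is exactly the paper's $G'$ with $Q=\{e_1,\dots,e_m\}\cup\{r\}$), the same Dinur--Safra hardness source, and the same gap accounting in which the degree bound ensures the discriminating terms (quadratic and linear in the cover size $t$) are a constant fraction of the fixed $\approx 2m^2$ contribution from query-pair distances. Two small points to firm up when writing details: commit to $r\in Q$ (the paper does), since that is what forces every connector to have the form $Q\cup\mathcal{X}$ with $\mathcal{X}$ a vertex cover, and replace your ``$|E|=\Theta(|V|)$ versus cover size $\Theta(|V|)$'' heuristic by the paper's clean normalization $m/d\le k\le m$, which is what makes the final gap $\varepsilon=\Theta(\alpha/d^2)$ come out.
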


\begin{proof}
We present a gap-preserving reduction from \vertexcover\ to \ourprob. Let $\alpha$ and $d$ be as in Theorem~\ref{theorem:dinur}.
Let $\langle G, k \rangle$ be an instance of the decision version of \vertexcover, where the degree of $G$ is at most $d$.
We need to show that there is some constant $\varepsilon > 0$ such that, given $\langle G, k \rangle$, we can construct in polynomial time an instance
$\langle G', Q \rangle$ of
\ourprob\ and a bound $B \in \naturals$ with the following properties:
\begin{enumerate}[(a)]
    \item if $G$ has a vertex cover of size at most $k$, then $\langle G', Q\rangle$ has a connector with Wiener index at most $B$;
    \item if every vertex cover of $G$ has size larger than $k (1 + \alpha)$, then every connector of $\langle G', Q\rangle$ has Wiener index larger than
    $B(1 + \varepsilon)$.
\end{enumerate}

Let $G$ have $n$ vertices and $m$ edges. We may assume that $m \ge k \ge \frac{m}{d}$, for otherwise the vertex cover instance can be answered trivially.
Furthermore, for any fixed constant $c$ we may assume that $k > c\cdot d$ (if not, we can solve the vertex cover
        instance in polynomial time.)

Our graph $G'$ is built as follows.
Let the vertex set of $G'$ be composed of:
\squishlist
    \item a distinguished ``root'' node $r$;
    \item $n$ vertices $v_1, \ldots, v_n$ corresponding to vertices of $G$;
    \item $m$ vertices $e_1, \ldots, e_m$ corresponding to edges of $G$.
\squishend
Put an edge between $r$ and every vertex in $\{v_1, \ldots, v_n\}$, and connect $v_i$ to $e_j$ if and only if $v_i$ is an endpoint of
$e_j$ in $G$. Note that to every $v_i$ correspond at most $d$ such $e_j$'s, and the degree of each $e_j$ is exactly two.
Finally, let $Q = \{e_1, \ldots,  e_m\} \cup \{r\}$ be the set of query vertices.

Observe that any solution to the original \vertexcovershort\ instance gives rise to a feasible solution to \ourprob\  that contains
$Q$ and a subset $\mathcal{X} \subseteq \{v_1, \ldots, v_s\}$; conversely, any solution to \ourprob\ is of the form $Q \cup \mathcal{X}$, where
$\mathcal{X} \subseteq \{v_1, \ldots, v_s\}$ is a vertex subset whose corresponding vertices in $G$
cover all the edges of $G$.

We claim that, if $|\mathcal{X}| = t$ and $Q \cup \mathcal{X}$ forms a connected subgraph, then the Wiener index of the induced subgraph of $G'$ containing $Q$ and $\mathcal{X}$
is at most $u(t) = t^2 + 3 m t + 2m^2$ and at least $l(t) = u(t) - O(m d)$.
Indeed, the contribution to the Wiener index from $r$ to the $t$ chosen vertices is exactly
$t$, and its contribution to the $m$ edges is exactly $2m$. The sum of induced distances among the $t$ chosen vertices is precisely $\binom{t}{2} \cdot 2 = t^2 - t$.
The contribution of the $t$ chosen vertices to $e_1, \ldots, e_n$ is $t (3m - O(d))$, because the distance from $v_i$ to $e_j$ is exactly 3
except in the case that $e_j$ has an edge to $v_i$ (meaning that $e_j$ was an endpoint of $v_i$ in the original graph $G$, and there are at most $d$
        such edges for any $v_i$). Similarly, the contribution from $e_1, \ldots, e_m$ to themselves is $\frac{1}{2} m (4 m - O(d)) = 2m^2 - O(m d)$.
The total is $t^2 + 2m^2 + 3tm - O(m d)$.

If we pick $B = u(k)$, we know that condition (a) is satisfied.
Regarding condition (b), a straightforward computation shows that $u(k) \le 6 d^2 k^2$ and $ u(k(1+\alpha)) \ge u(k) + 5 \alpha k^2,$ hence using the fact
that $m \le k d$ we get
$$ \frac{l((1+\alpha)k) - u(k)}{u(k)} \ge \frac{5 \alpha k^2 - O(m d)}{6 d^2 k^2} \ge \frac{5 \alpha}{6 d^2} - \frac{O(1)}{k}. $$
For some large enough $k_0 = \Theta(d^2/\alpha)$, let $\varepsilon = \frac{5 \alpha}{6 d^2} - \frac{O(1)}{k_0}$; the above shows that $\varepsilon > 0$ and
$$ \frac{l((1 + \alpha) k) - u(k)}{u(k)} > \varepsilon,$$
so condition (b) holds as well, completing the proof.
\mycomment{
    given an undirected universe of elements $U$ and a collection $\mathcal{X} = \{X_1, \ldots, X_s\} \subseteq 2^U$ of subsets of $U$ such that $\bigcup_{i=1}^s X_i = U$, find a smallest subset $\mathcal{Y}^* \subseteq \mathcal{X}$ that covers $U$, i.e., $\bigcup_{Y \in \mathcal{Y^*}} Y = U$.
    We show that any given instance of \setcover\ can be transformed in polynomial time to an instance of the \ourprob\ problem in such a way that solving \ourprob\ on the transformed instance corresponds to solving the original instance of \setcover.

    Given an instance $\langle U, \mathcal{X} = \{X_1, \ldots, X_s\} \rangle$ of \setcover, we build an instance $\langle G, Q \rangle$ of \ourprob\ as follows.
    Let $G$ be composed of ($i$) a path $a_1, \ldots, a_M$ of length $M$, where $M = 1 + 2{k \choose 2} + 2ks$, ($ii$) another $s = |\mathcal{X}|$ vertices $x_1, \ldots,
        x_s$ corresponding to sets in $\mathcal{X}$, and ($iii$) $k = |U|$ additional vertices $e_1, \ldots, e_k$, one for each element of $U$.
    Put an edge between $a_M$ and every vertex in $\{x_1, \ldots, x_s\}$, and connect vertex $x_i$ to vertex $e_j$ if and only if the element
    corresponding to $e_j$ belongs to the set $X_i$.
    Finally, let be the set of query vertices be $Q = \{e_1, \ldots,  e_k\} \cup \{a_1, \ldots, A_M\}$.


    Note that any solution to the original \setcover\ instance of size $t$ gives rise to a solution to \ourprob\  that contains
    $Q$ and a subset$\mathcal{X}' \subseteq \{x_1, \ldots, x_s\}$ of size $t$; conversely, any solution to \ourprob\  necessarily contains $Q$ and a subset
    of vertices $\mathcal{X}' \subseteq \{x_1, \ldots, x_s\}$ such that the corresponding sets are a solution to the original \setcover\ instance
    (i.e., they cover the whole universe $U$).

    We need to show that, for any two solutions $\mathcal{X}_1$ and $\mathcal{X}_2$ to \ourprob\  with $t_1 = |\mathcal{X}_1| < t_2 = |\mathcal{X}_2|$, the Wiener index of
    the first is smaller than that of the second.
    To see this, note that the sum of distances from the chosen elements to $\{ a_1, \ldots, a_M \}$ is larger in $\mathcal{X}_2$ by a term $(1+\ldots+M)
        (t_2 - t_1) \ge M$. Any decrease in cost in $\mathcal{X}_2$ can only be attributed to a total decrease larger than $M$ in
    %
     (a) the distances among the vertices $e_1, \ldots, e_k$, and (b) the distances between vertices in $\{ x_1,
         \ldots, x_s \}$ and the vertices $e_1, \ldots, e_k$, as all other distances are fixed.
    It is easy to see that the range of such distances is $[2,4]$ for (a) and $[1,3]$ for (b); therefore, the maximum gain in these distances brought by solution
    $\mathcal{X}_2$ is at most $2{k \choose 2} + 2ks < M$, which cannot compensate for the increase of at least $M$ due to the distance of additional vertices to $\{a_1, \ldots, a_M\}$. This
    completes the proof.
    %
    }
\end{proof}

\begin{figure}[t]
\centering
\vspace{-4mm}
\includegraphics[width=0.48\textwidth]{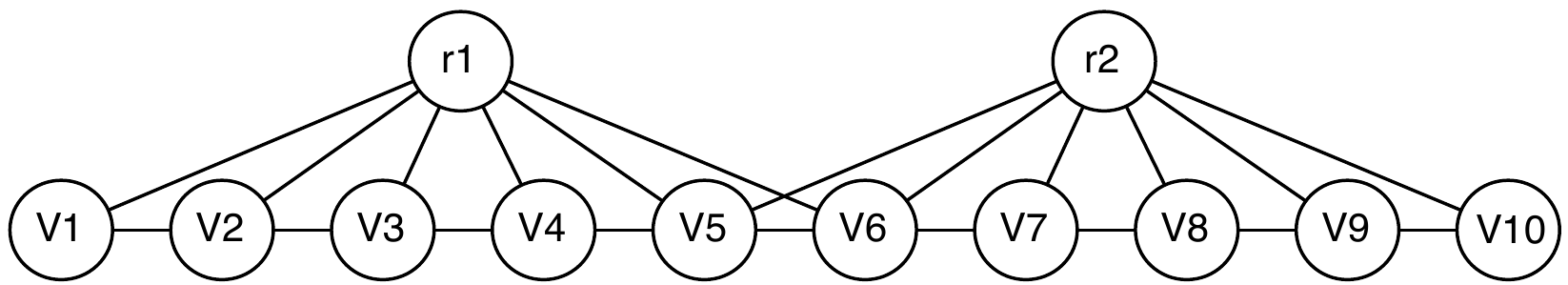}
\caption{\label{fig:steiner_example} Example showing that a solution to \steiner\ may exhibit a large Wiener index. Note that similar situations arise
    also
     in
    real-world instances (see \textsection\ref{subsec:exp_gtc}).\vspace{-3mm}}
\end{figure}

\spara{\ourprob\ vs. Steiner tree.}
At first glance, Problem~\ref{prob:ourproblem} may resemble the well-known \steiner\ problem (see, e.g,~\cite{Hwang1992}): given a graph $G$ and a
set $Q$ of terminal vertices of $G$, find a minimum-sized connector for~$Q$. (The size is measured as
        the total cost of edges used, which for unweighted graphs is one less than the number of vertices used.)
Such an optimal subgraph must be a tree.

Although related (see Section~\ref{sec:approxalg}), the two problems are different. In fact, a solution for \steiner\ may be
arbitrarily bad for \ourprob.
To see this, take a look at the graph in Figure~\ref{fig:steiner_example}.
Let $Q = \{v_1, \ldots, v_{10}\}$ be the set of query/terminal vertices. The unique optimal solution to the \steiner\ problem
is obviously $Q$  itself, which has Wiener index $\W(Q) = 165$.
However, adding either vertex $r_1$ or $r_2$ would lower the index to $\W(Q \cup \{r_1\}) = \W(Q \cup
        \{r_2\}) = 151$, and  the optimal solution to our \ourprob\ problem is given by $Q \cup \{r_1, r_2\}$, which has Wiener index $142$. Also note
that no tree is an optimal solution to this example, showing that the addition of extra edges may help decrease the cost.

In general, the fact that the \steiner\ problem seeks connectors with as few edges/vertices as possible
hinders the minimization of pairwise distances.
We can generalize the example in Figure~\ref{fig:steiner_example} to a line of length $h$ and a root~$r$ connected to all of them. The optimal
\steiner\ solution
exhibits a Wiener index of $\Omega(h^3)$, determined by the $\Omega(h^2)$ pairs of vertices and an average distance of $\Omega(h)$;
on the other hand, a solution to \ourprob\ can include $r$ so as to achieve constant average distance, lowering the Wiener index to ${O}(h^2)$.


\mycomment{
    bullshit!!!!!!1
Similarly, it is also possible to show that, even to obtain approximate solutions, one may need to include many more additional vertices than the
optimal Steiner tree has. Suppose
again that we have $q$ query nodes arranged in a line and a root $r$ connecting to all
query nodes, but this time the path from $r$ to the nodes has length $a = q^{1/4}$. Then the minimum Steiner tree (the line) has $\Theta(q)$ vertices and Wiener index $\Theta(d^3 q^3)$,
      while including the root allows obtaining a connector with $\Theta(q a) = \Theta(d q^{5/4})$ vertices and Wiener index $\Theta((q a)^2 a) =
      \Theta(d^3 q^{11/4})$.
      So both the number of vertices and the Wiener cost are off by a factor of $q^{1/4}$ in the Steiner tree solution.
}

\section{An exact algorithm}\label{sec:exactalg}
\label{sec:algorithms}

Here we address the question of how to solve the \ourprob\ problem exactly.
If the input graph has $n$ vertices, a straightforward solution would be to try all $2^n$ vertex subsets and compute their Wiener index;
this gives a running time of $2^n\,\poly(n)$.
On the other hand, there are some polynomial-time solvable special cases. As an example, for unweighted graphs (like the ones studied in this work), when $|Q| = 2$, any shortest path between the two terminals yields an optimal solution.
As many problems like satisfiability, coloring, etc., turn from easy to \NPhard\ as the ``size'' parameter switches from 2 to 3, it is natural to
wonder if the same happens here. Interestingly, the answer is negative:
the problem admits an exact algorithm that runs in polynomial time for any \emph{fixed} bound on the maximum size of the query set.
The result is of limited practical interest, but gives insight into the nature of the problem.

\begin{theorem}\label{thm:exact_ub}
The \ourprob\ problem can be solved in polynomial time when $|Q| = O(1)$.

That is, there is a function $f : \naturals \to \naturals$ such that the \ourprob\ problem can be solved in time $n^{f(|Q|)}$.
\end{theorem}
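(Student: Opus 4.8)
The plan is to reduce the search to induced subgraphs and then argue that \emph{some} optimal solution has bounded ``topological complexity,'' so that it can be described by a bounded amount of data that we enumerate exhaustively. By the remark following Problem~\ref{prob:ourproblem} it suffices to optimize over vertex sets $S$ with $Q \subseteq S$, taking $H = G[S]$; write $q = |Q|$. First I would record some easy structural facts about an optimal $H$. Every degree-one vertex of $H$ must be a terminal: if a non-terminal leaf $v$ existed, deleting it would leave $H$ connected and still containing $Q$, would not change any distance among the surviving vertices (a leaf is never internal to a shortest path between two other vertices), yet would drop the strictly positive terms $d_H(v,u)$, contradicting optimality. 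Hence $H$ has at most $q$ leaves. I would then ``suppress'' every maximal chain of non-terminal degree-two vertices, replacing it by a single edge, thereby reducing $H$ to a skeleton $T$ whose nodes are the terminals and the \emph{branch vertices} (those of degree $\ge 3$).

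The key structural lemma I would aim to prove is that this skeleton has size bounded by a function of $q$ alone. Writing $c = |E(H)| - |V(H)| + 1$ for the cyclomatic number, a degree count gives $\sum_{\deg_H(v)\ge 3}(\deg_H(v)-2) = 2(c-1) + (\text{\#leaves})$, so the number of branch vertices is at most $2(c-1)+q$; thus the entire difficulty concentrates in bounding $c$, and I expect this to be the main obstacle. The subtlety is precisely the one highlighted in the comparison with \steiner: because we optimize over \emph{induced} subgraphs, we cannot delete a single edge lying on a cycle without deleting one of its endpoints, so cycles cannot be removed by any purely local edge-exchange. I would therefore attack the bound on $c$ through a global optimality argument: show that if the optimal $H$ carried too many independent cycles relative to $q$, then some non-terminal degree-two vertex could be deleted --- rerouting through the alternative path supplied by a cycle --- without increasing the surviving pairwise distances by more than is saved by dropping the terms involving that vertex, contradicting optimality. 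Controlling exactly how much the surviving distances may grow under such a deletion is the delicate point where most of the work lies.

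Granting the structural lemma, the algorithm is a guided enumeration running in $n^{f(q)}$ time. I would first guess the skeleton's node set $K$ (the $q$ terminals together with the $\le f(q)$ branch vertices): since $|K| \le f(q)$, there are at most $n^{f(q)}$ choices. For each I guess the combinatorial type of $T$ on $K$ (bounded in number, as $|K|$ is) and the length of each of its $O(f(q))$ edges, i.e.\ of the induced degree-two chain realizing it; each length lies in $\{1,\dots,n\}$, contributing another $n^{O(f(q))}$ factor. Once $K$, $T$, and all chain lengths are fixed, the pairwise distances of $H$, and hence $\W(H)$, are completely determined, so it only remains to test \emph{realizability}: whether $G$ contains internally disjoint induced chains of the prescribed lengths joining the prescribed pairs of $K$ while avoiding $K$ and one another. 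With only $O(f(q))$ chains this is a bounded-number-of-paths routing problem solvable within the same $n^{f(q)}$ budget (the induced and exact-length constraints are a secondary technical point to handle). Finally, for every candidate vertex set passing the test I compute $\W$ directly by all-pairs shortest paths in $\poly(n)$ time and return the minimum; correctness follows because the structural lemma guarantees that at least one enumerated candidate coincides with an optimal solution.
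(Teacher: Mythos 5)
Your outline reproduces the easy parts of the paper's argument (restrict to vertex sets, non-terminal leaves cannot occur in an optimum, suppress degree-two chains, enumerate a bounded skeleton), but the two load-bearing steps are precisely the ones left open, and neither survives scrutiny in the \emph{induced}-subgraph setting you commit to. First, the structural lemma bounding the cyclomatic number $c$ of an optimal induced solution by $f(q)$ is never proved, and the sketch you give for it cannot work as stated: your plan removes cycles by deleting a non-terminal degree-two vertex and rerouting around the cycle, but a cycle in an induced subgraph need not contain \emph{any} degree-two vertex --- if the optimal vertex set induces a dense region in which every vertex has degree at least $3$, there is nothing for your argument to delete, yet $c$, and hence your bound $2(c-1)+q$ on branch vertices, is uncontrolled. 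This difficulty is intrinsic: since $G[S]$ carries all edges of $G$ inside $S$, chords you cannot discard may appear along your ``chains.'' The paper sidesteps this exactly by \emph{not} working with induced subgraphs at this stage: it proves the bounded-branch property (Lemma~\ref{lem:few_pivotal}, at most $k^4$ vertices of degree greater than $2$, $k=|Q|$) for a solution obtained as a union of carefully rerouted shortest paths between the $\binom{k}{2}$ terminal pairs (Lemma~\ref{lem:great}), and explicitly remarks that this optimal $H$ is \emph{not necessarily an induced subgraph}. The freedom to drop edges is what makes the rerouting argument go through, and it is unavailable on your route.

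Second, your realizability test is not ``a secondary technical point'': you must decide whether $G$ contains pairwise disjoint, mutually induced chains of prescribed \emph{exact} lengths between the guessed skeleton vertices. Deciding the existence of even two mutually induced disjoint paths is NP-complete in general graphs (Bienstock), and exact-length path constraints are of longest-path type; since the chain lengths can be $\Theta(n)$ and are in no way bounded by $f(q)$, there is no visible $n^{f(q)}$ algorithm for this subproblem. The paper's proof dissolves the issue rather than solving it: after guessing the $O(k^4)$ pivotal vertices and a connection pattern among them, it joins each connected pair by an \emph{arbitrary} shortest path in $G$, allowing these paths to intersect each other and the pivotal set freely; the surjective-homomorphism argument (Lemmas~\ref{lem:homomorphism} and~\ref{lem:h_good}) then certifies $\W(H') \le \W(H^*)$ for any candidate built this way, so no disjointness, inducedness, or length-matching ever has to be verified, and the candidate's Wiener index is simply computed directly. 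To repair your proposal you would essentially have to abandon the induced formulation in the middle of the proof and adopt this relaxation --- that is, the paper's route.
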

The proof is in Appendix~\ref{sec:more_proofs_exact}.
The intuition is that an optimal solution has $f(|Q|)=\poly(|Q|)$ ``pivotal'' vertices that are useful in connecting several query vertices
together or are query vertices themselves. Any other vertices in the solution are simply
``pass-through'' vertices needed to connect pairs of pivotal vertices via shortest paths; they could be replaced by vertices in another arbitrary shortest path between the
required pivotal vertices.
Thus, if we try all possible sets of pivotal vertices and connection patterns among them, and then find shortest paths in $G$ to actually
connect them, we are guaranteed to find an optimal solution.

\section{An approximation algorithm}\label{sec:approxalg}

As we cannot hope for efficient exact solutions to \ourprob,
in this section we design an efficient algorithm with provable approximation guarantees.
Specifically, we achieve a constant-factor approximation in roughly the same time it takes
 to compute shortest-path distances from the terminals to every other vertex in the graph.


%

\spara{Proof outline.} We need to introduce a series of relaxations of \ourprob\ to arrive at a problem
for which it is easier to derive an approximation algorithm. First we show that we can approximate the cost of any solution in terms of the number of
vertices in it and the single-source shortest-path distances to a suitably chosen root vertex $r \in V(G)$. Then we introduce a further relaxation
where
distances are measured according to the original graph; using the techniques developed by~\cite{spanning_shortest} to find light approximate
shortest path trees, we show how to make use of a solution to this relaxation. Then we apply a linearization technique to show that if we knew a certain parameter $\lambda$
controlling the ratio between the size of the optimal solution and the sum of distances to $r$ in the optimal solution, we could reduce our problem to \steinernode.
It turns out that our particular instances of the latter problem admit an $O(1)$-approximation (unlike the general case).
Finally, we explain how to search quickly for the correct values of $r$ and $\lambda$; as an optimization, we also prove that we can further
restrict the search of candidates for $r$. Finally, we combine these arguments to prove the correctness and efficiency of our algorithm.

\spara{Step 1: from \ourprob\ to \ourprobweak.}
First we need the following lemma, whose proof may be found in Appendix~\ref{sec:more_proofs_one_source}.

\begin{lemma}\label{lem:one_source}
For any graph $H$,
$$ \min_{r \in V(H)} \sum_{v \in V(H)} d_{H}(v,r) \le \frac{2\,W(H)}{|V(H)|} \le 2 \min_{r \in V(H)} \sum_{v \in V(H)} d_{H}(v,r). $$
\end{lemma}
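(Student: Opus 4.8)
The plan is to work with the single-source distance sums $S(r) := \sum_{v \in V(H)} d_H(v,r)$, so that the quantity appearing on the outside of the inequalities is simply $\min_r S(r)$, and to relate both sides to the Wiener index through two complementary arguments: an averaging argument for the left inequality and the triangle inequality for the right one. Throughout I write $n = |V(H)|$.

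First I would establish the key identity $\sum_{r \in V(H)} S(r) = 2\,W(H)$. This holds because expanding the left-hand side sums $d_H(u,v)$ over all \emph{ordered} pairs $(u,v)$, whereas $W(H)$ sums over \emph{unordered} pairs; hence each unordered pair is counted exactly twice (the diagonal terms $d_H(r,r)=0$ contribute nothing). Consequently the average of $S(r)$ over the $n$ choices of $r$ equals $2W(H)/n$. Since the minimum of a finite collection of numbers never exceeds their average, we immediately get
$$ \min_{r \in V(H)} S(r) \le \frac{2\,W(H)}{n}, $$
which is the left inequality.

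For the right inequality I would invoke the triangle inequality in $H$. Fix any root $r$. For every unordered pair $\{u,v\}$ we have $d_H(u,v) \le d_H(u,r) + d_H(v,r)$. Summing over all $\binom{n}{2}$ unordered pairs, each vertex $u$ appears in exactly $n-1$ of them, so the right-hand side collapses to $(n-1)\,S(r)$; thus $W(H) \le (n-1)\,S(r)$ for every choice of $r$, and in particular $W(H) \le (n-1)\min_r S(r) \le n\,\min_r S(r)$. Rearranging yields $\frac{2W(H)}{n} \le 2\min_r S(r)$, the right inequality (in fact with a small slack factor of $(n-1)/n$, which could be recorded but is not needed).

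As for the main obstacle: there is essentially no deep technical difficulty here. The only thing to get right is the choice of two different mechanisms for the two directions — a counting/averaging identity in one direction and the triangle inequality in the other — together with careful ordered-versus-unordered bookkeeping to land the correct factor of $2$. I would double-check the diagonal terms and the case of very small $H$ to be sure the identity $\sum_r S(r) = 2W(H)$ and the bound $W(H)\le(n-1)S(r)$ hold without off-by-one slips.
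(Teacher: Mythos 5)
Your proof is correct and follows essentially the same route as the paper's: the left inequality via the identity $\sum_{r} S(r) = 2\,W(H)$ combined with minimum-at-most-average, and the right inequality via the triangle inequality $d_H(u,v) \le d_H(u,r) + d_H(v,r)$ summed over pairs. The only cosmetic difference is that you sum over unordered pairs and obtain the marginally sharper factor $(n-1)$ where the paper sums over ordered pairs (diagonal included) and gets $n$; both yield the stated bound.
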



Lemma~\ref{lem:one_source} justifies the introduction of the following problem.
Given a subgraph $H$ of $G$ and $r \in V(H)$, let
\begin{eqnarray*}
\Wone(H,r) &=& |V(H)|\cdot\sum_{u \in V(H)}\!d_H(u,r) \\
\Wone(H) &=& \min_{r \in V(H)} \Wone(V(H),r).\label{eq:wienerone_r}
\end{eqnarray*}

\begin{problem}[\ourprobweak]\label{prob:ourprobweak}
Given a graph $G$ and a query set $Q \subseteq V(G)$, find
a connector $H$ for $Q$ in $G$ minimizing~$\Wone(H)$.
\end{problem}
Note that standard Steiner tree problems do \emph{not} minimize $\Wone(H)$, but the number (or total cost) of the edges in $H$.

\begin{corollary}
Any $\alpha$-approximate solution to Problem~\ref{prob:ourprobweak} is a $2\alpha$-approximate solution to \ourprob.
\end{corollary}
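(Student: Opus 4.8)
The plan is to first upgrade Lemma~\ref{lem:one_source} from a statement about the averaged quantity $2\W(H)/|V(H)|$ into a clean two-sided comparison between the two objective functions evaluated at a \emph{single} connector. Multiplying every term of Lemma~\ref{lem:one_source} by $|V(H)|$ and recalling the definition $\Wone(H) = |V(H)|\cdot\min_{r\in V(H)}\sum_{v} d_H(v,r)$, I would obtain, for every connector $H$ of $Q$,
\begin{equation*}
\W(H) \;\le\; \Wone(H) \;\le\; 2\,\W(H).
\end{equation*}
This is the only place where Lemma~\ref{lem:one_source} is used, and it is what makes the two objectives interchangeable up to a factor of two.

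With this sandwich in hand, the corollary follows by a short chaining argument. Let $H$ be an $\alpha$-approximate solution to Problem~\ref{prob:ourprobweak}, let $H^*_{\Wone}$ be an optimal solution to Problem~\ref{prob:ourprobweak}, and let $H^*$ be an optimal solution to \ourprob. I would then write
\begin{equation*}
\W(H) \;\le\; \Wone(H) \;\le\; \alpha\,\Wone(H^*_{\Wone}) \;\le\; \alpha\,\Wone(H^*) \;\le\; 2\alpha\,\W(H^*),
\end{equation*}
where the first and last inequalities are the two halves of the sandwich applied to $H$ and to $H^*$ respectively, the second is the definition of $\alpha$-approximation for Problem~\ref{prob:ourprobweak}, and the third is optimality of $H^*_{\Wone}$. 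Since $\W(H^*)$ is by definition the optimum of \ourprob, this exhibits $H$ as a $2\alpha$-approximate solution.

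The one point requiring care — though it is conceptual rather than technical — is the third inequality $\Wone(H^*_{\Wone}) \le \Wone(H^*)$. It relies on the fact that the two problems share the \emph{same} feasible region, namely all connectors of $Q$ in $G$; hence $H^*$ is itself a feasible point of Problem~\ref{prob:ourprobweak}, and the minimum $\Wone(H^*_{\Wone})$ cannot exceed $\Wone(H^*)$. I expect no genuine obstacle beyond making this feasibility coincidence explicit; once the two-sided bound is established, the rest is a routine transfer of approximation guarantees.
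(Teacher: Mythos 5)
Your proof is correct and matches the paper's intended argument: the paper states this corollary without a written proof precisely because it is the immediate consequence of Lemma~\ref{lem:one_source} that you spell out, namely the sandwich $\W(H) \le \Wone(H) \le 2\,\W(H)$ obtained by multiplying the lemma through by $|V(H)|$, followed by the standard approximation-transfer chain using the fact that both problems range over the same feasible set of connectors for $Q$. Your explicit flagging of the feasibility coincidence (that an optimum of \ourprob\ is a feasible point of Problem~\ref{prob:ourprobweak}) is exactly the step the paper leaves implicit.
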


\spara{Step 2: from \ourprobweak\ to \weakrootedprob\ via distance adjustments.}
One approach to solve Problem~\ref{prob:ourproblem} is to ``guess'' the correct vertex $r$ and then find a connector $H$ for
$Q$ that
minimizes $\Wone(H, r)$. However,
the objective function depends on the induced
distances of the unknown solution. In order to simplify our task, we now  introduce a ``weak'' relaxation of the above problem where shortest-path
distances are measured in the input graph $G$ instead.

Given a subgraph $H$ of $G$ and a vertex $r \in V(H)$, define

\begin{equation}
\Wonew(H,r) = |V(H)|\cdot\sum_{u \in V(H)}\!d_G(u,r)
\end{equation}


\begin{problem}[\weakrootedprob]\label{prob:weakrootedwiener}
Given graph $G$, root $r \in V(G)$ and query set $Q \subseteq V(G)$, find
a Steiner tree $T$ for $Q$ in $G$ minimizing $\Wonew(T)$.
\end{problem}
Here we insist that the solution be a tree (unlike in Problem~\ref{prob:ourprobweak}, where we allowed non-tree solutions, even though an optimal
        solution may easily seen to be a tree as well). The reason will become apparent shortly.

We are now faced with an additional complication, namely that a good solution to \weakrootedprob\ may not give
a good solution to \ourprobweak.
Hence the need to perform a post-processing step on every candidate solution to ensure that distances in the modified solution resembles distances
in~$G$ as closely as
possible. 

\begin{lemma}\label{lem:add_shortest_paths}
Let $T$ be a subtree of $G$ and $r \in V(T)$. 
There is another subtree $T'$ of $G$ with the following properties:
\begin{enumerate}[(a)]
    \item $V(T') \supseteq V(T)$;
    \item $|V(T')| \le (1 + \sqrt{2}) |V(T)|$;
    \item for all $v \in V(T')$, $d_{T'}(r, v) \le (1 + \sqrt{2})\,d_G(r, v)$.
    \item $\sum_{v \in V(T')} d_{G}(r, v) \le \sqrt{2}\, \sum_{v \in V(T)} d_G(r, v)$.
\end{enumerate}
Furthermore, given $T$, a BFS tree from $r$ in $G$, and $d_G(r, v)$ for all $v \in V(G)$,
    it is possible to construct $T'$ in time $O(|V(T)|)$.
\end{lemma}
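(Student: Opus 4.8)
The plan is to recognize the statement as an instance of the \emph{shallow--light tree} construction of~\cite{spanning_shortest}, specialized so that the stretch parameter is tuned to the value $\sqrt{2}$. Concretely, I would perform a depth-first traversal of $T$ rooted at $r$, building $T'$ incrementally while maintaining for every added vertex $v$ its current tree-distance label $\ell(v)=d_{T'}(r,v)$, initialized by $\ell(r)=0$. When the traversal descends along a tree edge $(u,v)$ of $T$, I tentatively set $\ell(v)=\ell(u)+1$ and keep the edge $(u,v)$ as long as the shallowness invariant $\ell(u)+1\le(1+\sqrt{2})\,d_G(r,v)$ holds; otherwise I declare $v$ a \emph{breakpoint}, splice into $T'$ the $G$-shortest path $P_b$ from $r$ to $v$ read off the given BFS tree, and reset $\ell(v)=d_G(r,v)$ (setting the labels of the freshly added internal path vertices to their own $G$-distances). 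The internal vertices of these spliced paths are exactly the vertices of $V(T')\setminus V(T)$.

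Given this construction, property (c) is immediate: every vertex either keeps a tree edge, in which case the invariant $\ell(v)\le(1+\sqrt{2})\,d_G(r,v)$ is exactly what was checked, or it lies on a spliced shortest path, where $\ell(v)=d_G(r,v)$. Property (a) and the tree structure hold because we only ever add vertices and reattach $v$ to the component of $r$. The two size bounds, (b) and (d), I would establish by the charging argument behind shallow--light trees: whenever a breakpoint $b$ is created, the failure of the invariant forces the $T$-segment traversed since the previous reset point to have length $\Omega(d_G(r,b))$, and these segments are edge-disjoint across breakpoints. Charging the $d_G(r,b)$ vertices (and the $O(d_G(r,b)^2)$ total $G$-distance) of the shortcut $P_b$ to that segment, and summing, bounds the number of added vertices by $\sqrt{2}\,|V(T)|$ (yielding (b)) and the added $G$-distance mass by $(\sqrt{2}-1)\sum_{v\in V(T)}d_G(r,v)$ (yielding (d)). The constant $\sqrt{2}$ is precisely the value of the stretch parameter $\gamma$ that balances the distance blow-up $1+\gamma$ against the weight blow-up $1+2/\gamma$.

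For the running time, the depth-first traversal touches each vertex of $T$ once; each breakpoint shortcut is obtained in time proportional to its length by walking up the precomputed BFS tree, and by (b) the total shortcut length is $O(|V(T)|)$, so the whole construction runs in $O(|V(T)|)$ time. I expect the main obstacle to be twofold. First, I must keep $T'$ a genuine tree while splicing shortest paths that may re-enter vertices already placed in $T'$; the fix is to add only the portion of $P_b$ up to the first vertex already present and to argue that discarding shared vertices only improves bounds (b)--(d). Second, and more delicate, property (d) is \emph{not} part of the classical shallow--light guarantee and requires extending the charging from vertex counts to the \emph{sum} of $G$-distances: the crux is showing that the $T$-segment charged to a breakpoint $b$ not only is long but consists of vertices whose own $G$-distances to $r$ are $\Omega(d_G(r,b))$, so that its distance mass dominates the quadratic mass $O(d_G(r,b)^2)$ of the shortcut. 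Verifying that the single threshold $\gamma=\sqrt{2}$ makes all four inequalities hold simultaneously is where the careful constant-tracking lives.
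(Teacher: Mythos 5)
Your construction is the right one---it is essentially the paper's own adaptation of the Khuller--Raghavachari--Young routine~\cite{spanning_shortest}: a DFS over $T$ maintaining a distance label, splicing in the BFS shortest path from $r$ whenever the label exceeds $(1+\sqrt{2})\,d_G(r,v)$. But your proposal omits the step that makes the accounting work: the paper \emph{relaxes edges in both directions}, in particular when the DFS returns from a child and along every spliced path, so that a shortcut added inside one subtree immediately lowers the labels of ancestors and of subsequently visited siblings. With your descent-only labels, the key charging claim---``each breakpoint $b$ forces a $T$-segment of length $\Omega(d_G(r,b))$ since the previous reset, and these segments are edge-disjoint''---fails. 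Concretely, take a vertex $u$ whose label $\ell(u)$ sits just below $(1+\sqrt{2})\,d_G(r,u)$ and which has many children $v_1,\dots,v_k$ with $d_G(r,v_i)=d_G(r,u)-1$: every $v_i$ trips the invariant. Under the ``nearest ancestor reset'' reading of your segments, all $k$ segments share the entire $r$--$u$ path, so edge-disjointness fails; under the ``last reset in traversal order'' reading, the segment for $v_2$ is just $v_1\to u\to v_2$, of length $2$, which is not $\Omega(d_G(r,v_2))$. Either way you end up charging $k$ shortcuts of length roughly $d_G(r,u)$ against a tree with only about $\ell(u)+k$ edges, and bound (b)---hence also the $O(|V(T)|)$ running time and (d)---collapses for large $k$. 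In the paper's algorithm this scenario cannot arise: after \textsf{AddPath}$(v_1)$, the relaxation on returning to $u$ sets $d[u]\le d_G(r,v_1)+1$, so the remaining siblings no longer trigger breakpoints.

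The repair is not a patched charging argument but the paper's potential-function accounting over the DFS walk, which intrinsically uses the bidirectional relaxations. For (b), the potential $\Phi$ is the label of the \emph{current} vertex: it increases by at most $1$ per edge traversal (at most $2|E(T)|$ in total, since each tree edge is traversed twice) and decreases by at least $(\sqrt{2})\ell$ whenever a path of length $\ell$ is spliced, so the total spliced length is at most $2|E(T)|/\sqrt{2}=\sqrt{2}\,|E(T)|$. For (d), the paper uses the quadratic potential $\Psi=\binom{d[v]+1}{2}$, whose increments per relaxation are bounded in terms of $d_G(r,v)$, while splicing a path of length $\ell$ adds distance mass at most $\binom{\ell-1}{2}\le \ell^2/2$ but decreases $\Psi$ by at least $\ell^2(1+\sqrt{2})$; summing gives $\sum_{v\in V(T')\setminus V(T)} d_G(r,v)\le \frac{1}{1+\sqrt{2}}\sum_{v\in V(T)} d_G(r,v)$, which is exactly (d). Note also that the crux you proposed for (d)---that the charged segment consists of vertices with $G$-distance $\Omega(d_G(r,b))$---is not true in general (vertices just after an earlier reset can be $G$-close to $r$); the quadratic potential is precisely the device that sidesteps this. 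So: add the backward relaxations, replace the segment charging by the two potentials, and your outline becomes the paper's proof.
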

This follows from a slight modification of an algorithm by Khuller et al for balancing spanning trees and shortest-path
trees~\cite[Lemma~3.2]{spanning_shortest}; although
they state it for \emph{minimum} spanning trees and shortest path trees with the same vertex set, a careful examination of their proof establishes
Lemma~\ref{lem:add_shortest_paths} as well. For completeness, we reproduce the proof in Appendix~\ref{sec:proof_tree}.

\begin{corollary}\label{coro:add_shortest_paths}
Any $\alpha$-approximation to Problem~\ref{prob:weakrootedwiener} can be used to obtain a $(4+3\sqrt{2}) \alpha$-approximation to Problem~\ref{prob:ourprobweak}.
\end{corollary}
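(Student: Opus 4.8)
The plan is to guess the correct root and reduce to \weakrootedprob, then repair the returned tree with Lemma~\ref{lem:add_shortest_paths} so that its induced distances behave like distances in $G$. Let $H^*$ be an optimal solution to Problem~\ref{prob:ourprobweak}, let $r^* \in V(H^*)$ attain $\Wone(H^*) = |V(H^*)| \sum_{u} d_{H^*}(u, r^*)$, and write $\mathrm{OPT} = \Wone(H^*)$. First I would note that we may replace $H^*$ by a BFS tree $T^*$ rooted at $r^*$ inside $H^*$: this leaves $V(H^*)$ unchanged and preserves every distance to the root, so $T^*$ is a Steiner tree for $Q$ with $|V(T^*)| = |V(H^*)|$ and $d_{T^*}(u, r^*) = d_{H^*}(u, r^*)$.

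The first step is to observe that $T^*$ certifies that the optimum of \weakrootedprob\ \emph{with root $r^*$} is at most $\mathrm{OPT}$. Since $H^*$ is a subgraph of $G$ we have $d_G(u, r^*) \le d_{H^*}(u, r^*)$ for every $u$, whence
\[
\Wonew(T^*, r^*) = |V(T^*)| \sum_{u} d_G(u, r^*) \le |V(H^*)| \sum_{u} d_{H^*}(u, r^*) = \mathrm{OPT}.
\]
Next I would run the given $\alpha$-approximation for \weakrootedprob\ with root $r^*$ to obtain a Steiner tree $T$ with $\Wonew(T, r^*) \le \alpha\,\mathrm{OPT}$. The difficulty is that the objective of \weakrootedprob\ measures distances in $G$, whereas \ourprobweak\ measures induced distances, so $T$ itself may have large induced distances and be a poor connector. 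This is precisely what Lemma~\ref{lem:add_shortest_paths} fixes: applying it to $T$ and $r^*$ produces a subtree $T'$, still a connector for $Q$, with $V(T') \supseteq V(T)$ and satisfying bounds (b)--(d).

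I would then bound $\Wone(T')$ by using $r^*$ (which lies in $V(T) \subseteq V(T')$) as the root. Property (c) gives
\[
\Wone(T') \le |V(T')| \sum_{u \in V(T')} d_{T'}(u, r^*) \le (1+\sqrt2)\,|V(T')| \sum_{u \in V(T')} d_G(u, r^*),
\]
and substituting $|V(T')| \le (1+\sqrt2)|V(T)|$ from (b) and $\sum_{u \in V(T')} d_G(u, r^*) \le \sqrt2 \sum_{u \in V(T)} d_G(u, r^*)$ from (d), while recognizing $|V(T)| \sum_{u \in V(T)} d_G(u, r^*) = \Wonew(T, r^*)$, yields
\[
\Wone(T') \le (1+\sqrt2)^2 \sqrt2 \; \Wonew(T, r^*) = (4 + 3\sqrt2)\,\Wonew(T, r^*) \le (4 + 3\sqrt2)\,\alpha\,\mathrm{OPT},
\]
using $(1+\sqrt2)^2\sqrt2 = (3 + 2\sqrt2)\sqrt2 = 4 + 3\sqrt2$.

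Finally, since $r^*$ is unknown, the reduction tries every $r \in V(G)$ as the root, runs the above procedure for each, and returns the connector of smallest $\Wone$ value found. For the choice $r = r^*$ the chain above exhibits a connector of value at most $(4+3\sqrt2)\,\alpha\,\mathrm{OPT}$, so the best connector returned is a $(4+3\sqrt2)\,\alpha$-approximation to \ourprobweak. The genuinely delicate ingredient is Lemma~\ref{lem:add_shortest_paths} itself, which is established separately; granting it, the remainder is the bookkeeping above together with the standard fact that an optimal \ourprobweak\ solution may be taken to be a shortest-path tree from its root.
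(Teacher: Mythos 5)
Your proposal is correct and follows essentially the same route as the paper: try every root $r$, run the $\alpha$-approximation for Problem~\ref{prob:weakrootedwiener}, repair the resulting tree via Lemma~\ref{lem:add_shortest_paths}, and combine properties (b)--(d) to get the factor $(1+\sqrt{2})^2\sqrt{2} = 4+3\sqrt{2}$. The only difference is that you explicitly verify (via a BFS tree $T^*$ inside the optimal connector $H^*$) that the \weakrootedprob\ optimum at the right root is at most $\Wone(H^*)$ --- a detail the paper leaves implicit in its remark that an optimal solution to Problem~\ref{prob:ourprobweak} may be taken to be a tree.
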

\begin{proof}
We can try all possible choices of $r$. For each of them, let $T$ be an $\alpha$-approximation to
Problem~\ref{prob:weakrootedwiener}.
Then we can
find a tree~$T'$ as in Lemma~\ref{lem:add_shortest_paths}. Since $V(T) \subseteq V(T')$, $T'$ is also a connector for $Q$ and satisfies
\begin{align*}
\Wonew(T', r) &= |V(T')| \sum_{v \in V(T')} d_{G}(r, v) \\
             &\le (1 + \sqrt{2}) \,  |V(T)| \sum_{v \in V(T')} d_{G}(r, v) \\
             &\le (1 + \sqrt{2}) \, \sqrt{2} \, |V(T)| \sum_{v \in V(T)} d_{G}(r, v) \\
             &= (2+\sqrt 2)\, \Wonew(T, r),
\end{align*}              
and $\Wone(T', r) \le (1 + \sqrt{2}) \Wonew(T', r) \le (4 + 3\sqrt 2) \Wonew(T, r).$
              \end{proof}

\mycomment{
    \begin{lemma}\label{lem:one_shortcut}
    Let $H$ be a subgraph of $G$ and $r \in V(H)$. %
    If  $d_H(r, v) = d_H(r, u) + d_H(u, v)$ and $d_H(r, v) > (1 + \sqrt{2}) d_G(r, v)$,
        then adding to $H$ a shortest path in $G$ from $r$ to $v$ results in a graph $H'$ with the following properties:
    \begin{enumerate}[(a)]
        \item $V(H) \subseteq V(H')$;
        \item $|V(H')| < |V(H)| + \sqrt{2} \cdot d_H(u, v)$;
        \item $\sum_{v \in V(H')} d_{H'}(r, v) \le \sum_{v \in V(H)} d_H(r, v)$.
    \end{enumerate}
    \end{lemma}
    \begin{proof}
    Let $a = d_H(r, u)$, $b = d_H(u, v)$ and $c = d_G(r, v)$.

    To prove c), let $t = \lfloor \frac{a + b - c}2 \rfloor$; note that $a \le b + c$ implies $t \le b$. Also, since $a + b \ge (1 + \sqrt{2}) c + 1$, it follows that $t \ge \frac{c}{\sqrt 2}$.
    The increase in cost (sum of distances to $r$) by adding the new vertices is $1 + 2 + \ldots c - 1 = \frac{c (c - 1)}{2} \le\frac{c^2}{2} \le t^2$.
    In the path from $v$ to $u$ there are $t + 1$ vertices with
        previous distances $a + b, a+b - 1, \ldots, a+b - t = (a + b + c) / 2$ and
        new distances      $c, c + 1, \ldots, c + t = (a + b + c) / 2$.
        Since $c + t \le a + b - t$, the new distances to these is decreased by $0, 2, \ldots, a + b - c$, by steps of two.
        The decrease is then $2(1 + \ldots t) = t(t + 1) \ge t^2$.
        So the decrease offsets the increase and the net gain in sum of distances is nonnegative.

    Item a) is trivial. Ho show b), note that $|V(H')| \le |V(H)| + c - 1$ and $c \le \sqrt{2} t \le \sqrt{2} b$ by the above.
    \end{proof}

    \begin{lemma}\label{lem:add_shortest_paths}
    Let $T$ be a subtree of $G$ and $r \in V(T)$. Let
    $$X = \{r\} \cup \{ v \in V(T) \mid \deg_T(v) \neq 2\}$$
    and for any $v \in X$, denote by $p(v)$ the closest element of $X$ that is an ancestor of $v$ in (the directed version of) $T$:
    $$ p(v) = \argmax_{u \in X \mid d_T(r, u) < d_T(r, v)} d_T(r, u). $$
    Let
    $$B = \{ v \in X \mid d_T(r, p(v)) + d_T(p(v), v) > (1 + \sqrt{2}) d_G(r, v)\}.$$
    Construct a graph $H$ by adding to $T$ the vertices (and induced edges) in an arbitrary shortest path from $r$ to $v$ for each $v \in B$. The following holds:
    \begin{enumerate}[(a)]
        \item $|V(H)| \le (1 + \sqrt{2}) |V(T)|$;
        \item $\sum_{v \in V(H)} d_H(r, v) \le \sum_{v \in V(T)} d_T(r, v)$.
    \end{enumerate}
    Moreover, $H$ can be constructed in time linear in the size of $G$, assuming a shortest-path tree from $r$ to $G$ has been computed.
    \end{lemma}
    \begin{proof}
    Apply Lemma~\ref{lem:one_shortcut} to every element of $B$, and note that $|V(T)|= 1 + \sum_{v \in X} d_T(p(v), v)$.
    \end{proof}
}

\mycomment{
    We now show how Problem~\ref{prob:rootedwiener} is related to the \ourprobweak\ problem.
    We start by bounding the weak Wiener index $\Wweak(\cdot)$ in terms of the quantity $\Wone(\cdot)$.

    \begin{lemma}\label{lem:one_source}
    For any graph $H$ it holds that
    \begin{equation*}\label{eq:lem:one_source}
    \frac{1}{2}~\Wone(H) \ \leq \ \Wweak(H) \ \leq \ \Wone(H).
    \end{equation*}
    \end{lemma}
    \begin{proof}
    Let $r^* = \operatorname{argmin}_{r \in V(H)} \sum_{u \in V(H)} d_G(u, r)$.
    It holds that
    \begin{eqnarray*}
    \lefteqn{\Wweak(H) \ = \!\!\!\sum_{\{u,v\} \subseteq H}\!\!\!\! d_G(u,v) \ = \ \frac{1}{2}\!\sum_{u, v \in V(H)} d_G(u,v)}\\
    & \geq & \frac{1}{2} \sum_{u, v \in V(H)} d_G(u, r^*) \ = \  \frac{1}{2}~|H| \sum_{u \in V(H)} d_G(u, r^*)  \ = \ \frac{1}{2}~\Wone(H).
    \end{eqnarray*}
    At the same time, as shortest-path distances in a graph satisfy the triangle inequality, it can be noted that
    \begin{eqnarray*}
    \lefteqn{\Wweak(H) = \!\!\!\sum_{\{u,v\} \subseteq H}\!\!\!\! d_G(u,v) \ = \ \frac{1}{2}\sum_{u, v \in V(H)} d_G(u,v)}\\
    & \leq & \frac{1}{2} \sum_{u, v \in V(H)} \left ( d_G(u,r^*) + d_G(r^*,v) \right )\\
    & = & \frac{1}{2}~\left (\sum_{u,v \in V(H)} d_G(u,r^*) + \sum_{u,v \in V(H)} d_G(r^*,v) \right )\\
    & = & \!\!\!\!\sum_{u,v \in V(H)}  d_G(u,r^*) \ = \ |H|\sum_{u \in V(H)} d_G(u, r^*) \ = \ \Wone(H).
    \end{eqnarray*}
    \end{proof}


    We exploit Lemma~\ref{lem:one_source} to finally show that a constant-factor approximate solution to Problem~\ref{prob:rootedwiener} would also be a constant-factor approximate solution to Problem~\ref{prob:ourprobweak}.
    Such a result is formally stated in the following corollary.

    \begin{corollary}\label{cor:one_source}
    Let $\Wsub^*$ and $\Wonesub^*$ denote the optimal solutions to Problem~\ref{prob:ourprobweak} and Problem~\ref{prob:rootedwiener}, respectively.
    Let also $\Wonesub$ be an $\alpha$-approximate solution to Problem~\ref{prob:rootedwiener}, with $\alpha \in \reals^+$.
    It holds that $\Wonesub$ is a $2\alpha$-approximate solution to Problem~\ref{prob:ourprobweak}.
    \end{corollary}
    \begin{proof}
    By Lemma~\ref{lem:one_source} we know that $\Wweak(H) \leq \Wone(H)$ for any graph $H$.
    As $\Wone(\Wonesub) \leq \alpha\Wone(\Wonesub^*)$ by hypothesis, we therefore have $\Wweak(\Wonesub) \ \leq \  \Wone(\Wonesub) \ \leq \ \alpha\Wone(\Wonesub^*)$.
    On the other hand, Lemma~\ref{lem:one_source} also states that $\Wweak(H) \geq \frac{1}{2}~\Wone(H)$, for any graph $H$, which implies that $\Wweak(\Wsub^*) \geq \frac{1}{2}~\Wone(\Wsub^*) \geq \frac{1}{2}~\Wone(\Wonesub^*)$.
    Combining all these arguments we get:
    $$
    \frac{\Wweak(\Wonesub)}{\Wweak(\Wsub^*)} \ \leq \ \frac{\alpha\Wone(\Wonesub^*)}{\frac{1}{2}~\Wone(\Wonesub^*)} \ = \ 2\alpha.
    $$
    \end{proof}
}

\spara{Step 3: from \weakrootedprob\ to \weirdprob.}
    We
further relax Problem~\ref{prob:weakrootedwiener} so as to employ a modified objective function where
the product between the number of vertices in $H$ and the sum of original distances to the chosen root $r$ is replaced with a linear combination of
the two.
The rationale here is to make the overall objective function linear and, as such, more amenable to standard approximation techniques.

Given (a subgraph induced by) a subset of vertices $H \subseteq V(G)$, a root vertex $r \in V(H)$, and a parameter $\lambda \in \reals^+$, the modified objective we consider is:
\begin{equation}\label{eq:weirdfunction}
\Wtwo(H,r,\lambda) = \lambda~|H| + \frac{\sum_{u \in V(H)} d_G(r,u)}{\lambda}.
\end{equation}

\begin{problem}[\weirdprob]\label{prob:weird}
Given a graph $G$, a query set $Q \subseteq V(G)$, a root vertex $r \in V$, and a parameter $\lambda \in \reals^+$, find a Steiner tree for $Q \cup \{r\}$
in $G$ minimizing $\Wtwo(H, r, \lambda)$.
\end{problem}

We next show that, by choosing $\lambda$ in the proper way, any approximate solution to Problem~\ref{prob:weird} yields an
approximate solution to Problem~\ref{prob:weakrootedwiener} too.
The right choice of $\lambda$ is given by the following lemma,
proved in Appendix~\ref{sec:more_proofs_alpha}.
\begin{lemma}\label{coro:alpha_approx}
For any graph $G$ with $|V(G)| \ge 2$, query set $Q\subseteq V(G)$ and $r \in V(G)$, there is $\lambda \in [1/\sqrt{2}, \sqrt{|V(G)|}]$
such that for any $\alpha \in \reals^+$,
every $\alpha$-approximate solution to Problem~\ref{prob:weird} is also
 an $\alpha^2$-approximate solution to Problem~\ref{prob:weakrootedwiener}.
\end{lemma}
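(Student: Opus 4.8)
The plan is to use a standard AM--GM linearization. Writing, for any feasible tree $T$ containing $r$, the shorthand $n(T) = |V(T)|$ and $D(T) = \sum_{u \in V(T)} d_G(u,r)$, the objective of Problem~\ref{prob:weakrootedwiener} is the \emph{product} $\Wonew(T,r) = n(T)\,D(T)$, whereas the objective of Problem~\ref{prob:weird} is the \emph{linear combination} $\Wtwo(T,r,\lambda) = \lambda\,n(T) + D(T)/\lambda$ of the same two nonnegative quantities. The arithmetic--geometric mean inequality then gives, for every $\lambda > 0$ and every feasible $T$,
$$\Wtwo(T,r,\lambda) = \lambda\,n(T) + \frac{D(T)}{\lambda} \ \ge\ 2\sqrt{n(T)\,D(T)} = 2\sqrt{\Wonew(T,r)},$$
with equality precisely when $\lambda = \sqrt{D(T)/n(T)}$. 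This single inequality is where the quadratic-to-linear trade-off is captured, and it is the crux of the argument.

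Next I would pin down the correct $\lambda$. Let $T^*$ be an optimal solution to Problem~\ref{prob:weakrootedwiener} and set $\lambda := \sqrt{D(T^*)/n(T^*)}$, so that the displayed inequality holds with \emph{equality} at $T^*$, namely $\Wtwo(T^*,r,\lambda) = 2\sqrt{\Wonew(T^*,r)}$. Since $T^*$ is a tree containing $r$ and $Q$, it is feasible for Problem~\ref{prob:weird}, so its optimum is at most $\Wtwo(T^*,r,\lambda)$. Now let $T$ be any $\alpha$-approximate solution to Problem~\ref{prob:weird} for this $\lambda$; being a tree containing $Q$, it is feasible for Problem~\ref{prob:weakrootedwiener}. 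Chaining the bounds,
$$2\sqrt{\Wonew(T,r)} \ \le\ \Wtwo(T,r,\lambda) \ \le\ \alpha\,\Wtwo(T^*,r,\lambda) \ =\ 2\alpha\sqrt{\Wonew(T^*,r)},$$
and squaring yields $\Wonew(T,r) \le \alpha^2\,\Wonew(T^*,r)$, i.e.\ $T$ is an $\alpha^2$-approximation for Problem~\ref{prob:weakrootedwiener}, as required.

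The only genuine obstacle is to verify that the chosen $\lambda = \sqrt{D(T^*)/n(T^*)}$ actually lands in the prescribed interval $[1/\sqrt 2,\ \sqrt{|V(G)|}]$, which is what will later make the brute-force search over $\lambda$ feasible. For the upper bound I would use the diameter estimate $d_G(u,r) \le |V(G)|-1$ for every $u$, giving $D(T^*) \le n(T^*)\,(|V(G)|-1)$ and hence $\lambda \le \sqrt{|V(G)|-1} \le \sqrt{|V(G)|}$. For the lower bound I would observe that $r$ contributes $0$ to $D(T^*)$ while each of the remaining $n(T^*)-1$ vertices contributes at least $1$, so $D(T^*) \ge n(T^*)-1$; together with $n(T^*)\ge 2$ (the tree contains $r$ and at least one query vertex distinct from $r$) this gives $D(T^*)/n(T^*) \ge 1 - 1/n(T^*) \ge 1/2$, whence $\lambda \ge 1/\sqrt 2$. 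The degenerate case $n(T^*)=1$ (when $Q \subseteq \{r\}$) has optimal value $0$ and is handled trivially, so the interval bounds hold on every nontrivial instance, completing the plan.
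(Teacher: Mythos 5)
Your proof is correct and follows essentially the same route as the paper: the paper packages the AM--GM step as a separate ratio inequality (its Lemma~\ref{lem:sum_product}, stating $\frac{xy}{x_0y_0} \le \bigl(\frac{x\lambda + y/\lambda}{x_0\lambda + y_0/\lambda}\bigr)^2$ for $\lambda = \sqrt{y_0/x_0}$), chooses the same $\lambda = \sqrt{D(T^*)/n(T^*)}$ at the optimum of Problem~\ref{prob:weakrootedwiener}, and verifies the interval $[1/\sqrt{2}, \sqrt{|V(G)|}]$ by the same distance bounds, while you use the equivalent square-root form of AM--GM with equality at $T^*$ directly. Your explicit handling of the degenerate case $n(T^*)=1$ is a minor extra care the paper's proof leaves implicit, but the substance is identical.
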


\spara{Step 4: approximating \weirdprob.}
     Our next step aims to find approximate solutions to Problem~\ref{prob:weird}.
To this end, we note that Problem~\ref{prob:weird} can be cast as a \steinernode\ problem, where the cost of a node $u$ is equal to $\lambda + d_G(r, u) / \lambda$.
However, no approximation factor better than $\Omega(\log |Q|)$ is possible in general for \steinernode,  unless
every problem in $\mathbf{NP}$ can be solved in quasipolynomial time~\cite{KleinRavi}.
Nevertheless, we show that our particular problem admits a constant-factor approximation, by shifting the cost from vertices to edges and reducing it
to a classical \steiner\ problem.
The reason is that in our instance the cost of two adjacent vertices from the root $r$ cannot differ by more than~1, thus the overall solution cost is nearly preserved despite the cost shift.
We formalize this intuition next.

\begin{lemma}\label{lem:node2edge}
Given a graph $G = (V, E)$, a query set $Q \subseteq V$, a root vertex $r \in V$, and a parameter $\lambda \in \reals^+$, let $G_{r,\lambda}$ be a weighted graph with vertex set $V$, edge set $E$, and weight on each edge $(u,v)$ equal to $w(u,v) = \lambda + \frac{\max \{d_G(r,u), d_G(r,v)\}}{\lambda}$.
Then any Steiner tree $T$ for $Q \cup \{r\}$ satisfies the following:
\begin{equation*}\label{eq:node2edge}
\Wtwo(T, r, \lambda) - \lambda \ \le \ \sum_{(u,v) \in E(T)} w(u,v) \ \le \ 2~(\Wtwo(T, r, \lambda) - \lambda).
\end{equation*}
\end{lemma}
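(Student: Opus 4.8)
The plan is to observe that both quantities in the lemma share an identical additive ``edge-count'' term, so that the inequality reduces to comparing only the distance-weighted parts. Since $T$ is a tree, $|E(T)| = |V(T)| - 1$, and therefore
$\Wtwo(T,r,\lambda) - \lambda = \lambda|E(T)| + \frac{1}{\lambda}\sum_{u \in V(T)} d_G(r,u)$, while
$\sum_{(u,v)\in E(T)} w(u,v) = \lambda|E(T)| + \frac{1}{\lambda}\sum_{(u,v)\in E(T)}\max\{d_G(r,u),d_G(r,v)\}$.
Writing $S_V = \sum_{u\in V(T)} d_G(r,u)$ and $S_E = \sum_{(u,v)\in E(T)}\max\{d_G(r,u),d_G(r,v)\}$, the claim reduces to the pair of inequalities $S_V \le S_E \le 2 S_V$; the upper one then implies the stated bound since $\lambda|E(T)| \le 2\lambda|E(T)|$ as well.

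To compare $S_V$ and $S_E$ I would root $T$ at $r$ and use the natural bijection sending each non-root vertex $v$ to the tree edge joining it to its parent $p(v)$; because $T$ is a tree this is a bijection between $V(T)\setminus\{r\}$ and $E(T)$, and since $d_G(r,r)=0$ we have $S_V = \sum_{v\ne r} d_G(r,v)$. For the lower bound, every such edge satisfies $\max\{d_G(r,p(v)),d_G(r,v)\} \ge d_G(r,v)$, so summing over all non-root $v$ gives $S_E \ge S_V$ immediately.

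For the upper bound I would exploit the key local property that an edge $(p(v),v)$ of $T$ is also an edge of $G$, so $p(v)$ and $v$ are adjacent in $G$; the triangle inequality then yields $d_G(r,p(v)) \le d_G(r,v) + 1$, whence $\max\{d_G(r,p(v)),d_G(r,v)\} \le d_G(r,v)+1$. Summing over the $|V(T)|-1$ non-root vertices gives $S_E \le S_V + (|V(T)|-1)$. Finally, every non-root vertex has $d_G(r,v)\ge 1$, so $S_V = \sum_{v\ne r} d_G(r,v) \ge |V(T)|-1$; substituting yields $S_E \le 2 S_V$. Combining the two parts, the distance terms satisfy $S_V \le S_E \le 2 S_V$, and adding back the common $\lambda|E(T)|$ term establishes both stated inequalities.

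The crux of the argument is the upper bound: it is exactly the paper's remark that the distances-to-$r$ of two adjacent vertices cannot differ by more than $1$ that lets us charge each edge-maximum to a single vertex distance plus a unit, and the elementary observation $d_G(r,v)\ge 1$ that converts this additive slack into the clean multiplicative factor of $2$. I do not anticipate any genuine difficulty beyond keeping the edge--vertex bijection and the $d_G(r,r)=0$ bookkeeping straight.
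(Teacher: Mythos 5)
Your proof is correct and follows essentially the same route as the paper's: both charge each tree edge to its child endpoint via the rooted edge--vertex bijection, bound the edge maximum by $d_G(r,v)+1$ using the fact that adjacent vertices have distances to $r$ differing by at most one, and absorb the additive $|V(T)|-1$ slack into the factor of $2$. Your write-up merely makes explicit the observation $S_V \ge |V(T)|-1$ (each non-root vertex has $d_G(r,v)\ge 1$) that the paper leaves implicit in its final inequality.
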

\begin{proof}
Observe that the cost $w(u,v)$ of each edge $(u,v) \in E(T)$ lies in the range $[\lambda + d_G(r,u)/\lambda, \lambda + (d_G(r,u)+1)/\lambda]$, as $u$ and $v$
are adjacent in $T$. Notice that in every edge $(u, v)$ of $T$, either $u$ is the parent of $v$ or $v$ is the parent of $u$. Hence, writing $A =
V(T)\setminus\{r\}$, we can bound
\begin{equation*}
\underbrace{\sum_{u \in A}\!\! \left(\! \lambda \!+\! \frac{d_{G}(r,u)}{\lambda} \right )}_{\Wtwo(T, r, \lambda)-\lambda} \leq \!\!\! \sum_{(u,v) \in E(T)}
\!\!\!\!\!w(u,v) \leq \! \sum_{u \in A} \!\!\left( \!\lambda \!+\! \frac{d_{G}(r,u) \!+\!1 }{\lambda} \!\right ),
\end{equation*}
The result follows by noticing that the right-hand side is at most
\begin{equation*}
\Wtwo(T, r, \lambda)-\lambda + \frac{|V(T)|-1}{\lambda} \ \leq \ 2~(\Wtwo(T, r, \lambda) -\lambda).
\end{equation*}
\end{proof}

    Lemma~\ref{lem:node2edge} entails a reduction from Problem~\ref{prob:weird} to the well-studied \steiner\ tree problem. The best known algorithm for
    the latter is the 1.39-factor approximation  algorithm of
    Byrka et al.~\cite{best_steiner}.
    However, it is based on solving a linear program, in contrast to quicker combinatorial algorithms that achieve a factor-2
    approximation. The fastest among the latter is due to Mehlhorn~\cite{Mehlhorn88}.

    \begin{corollary}\label{coro:sol_weird}
    A 4-approximation to Problem~\ref{prob:weird} can be computed in time $O(|E| + |V| \log |V|)$,
    provided that shortest-path distances from $Q$ in $G$ have been precomputed.
    \end{corollary}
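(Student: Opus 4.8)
The plan is to combine Lemma~\ref{lem:node2edge}, which reduces Problem~\ref{prob:weird} to an ordinary edge-weighted \steiner\ instance on $G_{r,\lambda}$ with terminal set $Q \cup \{r\}$, with a fast factor-$2$ combinatorial approximation for \steiner, namely the algorithm of Mehlhorn~\cite{Mehlhorn88}.

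First I would construct the weighted graph $G_{r,\lambda}$ explicitly. Since the hypothesis guarantees that the single-source distances $d_G(r,\cdot)$ are available (they are part of the precomputed distances from the terminals), each edge weight $w(u,v) = \lambda + \max\{d_G(r,u), d_G(r,v)\}/\lambda$ is computed in $O(1)$ time, so $G_{r,\lambda}$ is assembled in $O(|E|)$ time. I would then run Mehlhorn's algorithm on $G_{r,\lambda}$ with terminals $Q \cup \{r\}$: this returns a Steiner tree $T$ whose total edge weight is at most twice that of the optimal edge-weighted Steiner tree, and it runs in $O(|E| + |V|\log|V|)$ time (a single multi-source Dijkstra to compute the Voronoi regions of the terminals, an MST on the resulting terminal-distance graph, and a linear-time expansion back into $G$). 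This accounts for the claimed running time.

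The approximation guarantee then follows by chaining four inequalities. Let $T^*$ be an optimal solution to Problem~\ref{prob:weird} and let $S^*$ be a minimum-weight Steiner tree in $G_{r,\lambda}$. Since $T^*$ is itself a feasible Steiner tree for $Q \cup \{r\}$, optimality of $S^*$ gives $\sum_{e \in E(S^*)} w(e) \le \sum_{e \in E(T^*)} w(e)$; Mehlhorn's guarantee gives $\sum_{e \in E(T)} w(e) \le 2\sum_{e \in E(S^*)} w(e)$; and the two sides of Lemma~\ref{lem:node2edge} supply $\Wtwo(T, r, \lambda) - \lambda \le \sum_{e \in E(T)} w(e)$ together with $\sum_{e \in E(T^*)} w(e) \le 2(\Wtwo(T^*, r, \lambda) - \lambda)$. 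Composing these yields $\Wtwo(T, r, \lambda) - \lambda \le 4(\Wtwo(T^*, r, \lambda) - \lambda)$.

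The one step I would be most careful about is converting this into a genuine $4$-approximation for $\Wtwo$ itself rather than merely for the shifted quantity $\Wtwo - \lambda$. Here I would simply observe that $\Wtwo(T, r, \lambda) \le 4\Wtwo(T^*, r, \lambda) - 3\lambda \le 4\Wtwo(T^*, r, \lambda)$, since $\lambda > 0$; the additive $-3\lambda$ only helps. Thus $T$ is a $4$-approximation, as claimed. Nothing deep remains beyond this bookkeeping: the real content lives in Lemma~\ref{lem:node2edge}, whose near-losslessness rests on the fact that adjacent vertices' distances to $r$ differ by at most one, and in the known performance guarantee and running time of Mehlhorn's algorithm.
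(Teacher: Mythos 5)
Your proposal is correct and follows essentially the same route as the paper's proof: construct $G_{r,\lambda}$ in linear time, run Mehlhorn's factor-$2$ algorithm on terminals $Q \cup \{r\}$, and compose the two factor-$2$ losses from Lemma~\ref{lem:node2edge} to get a $4$-approximation. The paper states this tersely; your only additions are making the inequality chain explicit and noting that the shift by $-\lambda$ can be dropped since $\lambda > 0$, both of which the paper leaves implicit.
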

    \begin{proof}
    We can construct the graph of Lemma~\ref{lem:node2edge} in time $O(|V| + |E|)$, and use the 2-approximation algorithm of~\cite{Mehlhorn88} for Steiner
    tree, which runs in time $O(|E| + |V| \log |V|)$. By Lemma~\ref{lem:node2edge}, the result is a 4-approximation for Problem~\ref{prob:weird}.
    \end{proof}

\spara{Step 5: choosing $r$ and $\lambda$.}
At this point, we know that, with the right choice of $\lambda$ (which depends on the problem instance), we can get a constant-factor approximate
solution to Problem~\ref{prob:weakrootedwiener}.
For any given graph $G$ and query set $Q \subseteq V(G)$, the algorithm would run as follows:
\squishlist
\item For every vertex $r \in V$ do:
\squishlist
\item
Compute $d_G(r, u)$ from $r$ to every other vertex $u$;
\item
Guess $\lambda$ matching the value stated in Lemma~\ref{coro:alpha_approx};
\item
Construct the weighted graph $G_{r, \lambda}$ of Lemma~\ref{lem:node2edge};
\item
Find an $\alpha$-approximate solution $S^*_r$ to the \steiner\ problem on graph $G_{r,\lambda}$ and terminals $Q \cup \{r\}$;
\squishend
\item
Take the $S^*_r$ that minimizes $\Wtwo(S^*_r, r, \lambda)$.
\squishend

However, we still need to explain how to guess $\lambda$. Since there are only $\poly(|V(G)|)$ many possible values for $\lambda^2$, we could try all
of them in polynomial time. A
faster way is to
fix some $\beta > 0$ and then
try all powers of $(1+\beta)$ in the interval $[\sqrt{1/2},\sqrt{|V|}]$, of which there are only $O(\log |V| / \beta)$ many; this will guarantee
that one of the candidate values of~$\lambda$ tried will be off by a factor of at most $1+\beta$. It is not hard to generalize
Lemma~\ref{coro:alpha_approx} to show that using a $1+\beta$ approximation for the true value of $\lambda$ results
in the loss of another multiplicative
$(1+\beta)^2$
factor in the overall approximation. 

\spara{Step 6: restricting the number of root vertices.}
Finally, we show that trying all possible root vertices $r \in V$ is overkill if we are willing to settle for a somewhat larger approximation
factor. The next result shows that we can restrict our search to elements of the query set (notice that an optimal solution to
        Problem~\ref{prob:ourprobweak} is a tree with leaves in $Q$).

\enlargethispage*{2\baselineskip}


\begin{lemma}\label{lem:query_roots}
Let $T$ be a tree, $r \in V(T)$, and let $x^*$ be a leaf of $T$ closest to $r$. Then $\sum_{u \in V(T)} d_T(x^*, u) \leq 3~\sum_{u \in V(T)}
d_T(r, u)$,
    hence $\Wone(T, x^*) \le 3\cdot \Wone(T, r)$.
\end{lemma}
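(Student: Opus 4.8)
The plan is to reduce the statement to a one-dimensional comparison along the path from $r$ to $x^*$, exploiting the fact that $x^*$ is a \emph{closest} leaf. Let $\ell = d_T(r, x^*)$. First I would observe that for every vertex $u \in V(T)$, the triangle inequality in the tree gives $d_T(x^*, u) \le d_T(x^*, r) + d_T(r, u) = \ell + d_T(r, u)$. Summing over all $u$ yields
\begin{equation*}
\sum_{u \in V(T)} d_T(x^*, u) \le \ell\,|V(T)| + \sum_{u \in V(T)} d_T(r, u).
\end{equation*}
So it suffices to bound $\ell\,|V(T)|$ by $2\sum_{u} d_T(r, u)$, since then the right-hand side is at most $3\sum_u d_T(r,u)$, as desired.

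The crux is therefore to show $\ell\,|V(T)| \le 2\sum_{u \in V(T)} d_T(r,u)$, and here is where the minimality of $x^*$ enters. Since $x^*$ is the leaf nearest to $r$, every leaf of $T$ is at distance at least $\ell$ from $r$. The key structural fact I would use is that the subtree hanging off $r$ must be ``deep everywhere'': intuitively, because no leaf is closer than $\ell$, every vertex $u$ with $d_T(r,u) < \ell$ lies on an internal path that can be charged to a descendant leaf at distance $\ge \ell$. Concretely, I would root $T$ at $r$ and argue that the vertices at depth less than $\ell$ are relatively few compared to the total distance mass. The main obstacle is making this charging rigorous: I expect to pair each shallow vertex (depth $< \ell/2$, say) with a descendant leaf and use that the leaf contributes at least $\ell$ to the distance sum, so that the number of shallow vertices is controlled by $\frac{1}{\ell}\sum_u d_T(r,u)$ up to constants.

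A cleaner route that avoids delicate charging is to split the sum by depth. Let $N = |V(T)|$ and write $S = \sum_{u} d_T(r,u)$. Partition the vertices into the ``shallow'' set $V_{<} = \{u : d_T(r,u) < \ell/2\}$ and the ``deep'' set $V_{\ge} = \{u : d_T(r,u) \ge \ell/2\}$. For the deep vertices, $\sum_{u \in V_{\ge}} d_T(r,u) \ge \frac{\ell}{2}|V_{\ge}|$, giving $|V_{\ge}| \le \frac{2S}{\ell}$. For the shallow vertices, I would argue that each shallow vertex has a path continuing down to a leaf at depth $\ge \ell$; because $x^*$ is the closest leaf, the subtree below any shallow internal vertex reaches depth $\ell$, so the shallow vertices all lie on root-to-deep-vertex paths and can be injected into $V_{\ge}$ (each deep vertex accounts for at most a bounded number of shallow ancestors per unit of its depth). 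This should yield $N = |V_{<}| + |V_{\ge}| \le \frac{C \cdot S}{\ell}$ for an absolute constant $C$, and I would tune the split so that $C \le 2$, giving $\ell N \le 2S$.

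The hardest part is pinning down the exact constant in the shallow-vertex bound so that the final factor comes out to $3$ rather than something larger; the geometry (every leaf at depth $\ge \ell$ forces long descending paths) clearly gives \emph{some} constant, but extracting precisely $2$ for $\ell N$ requires care in how shallow vertices are amortized against the distance mass of their descendants. Once $\ell N \le 2S$ is established, combining with the triangle-inequality bound gives $\sum_u d_T(x^*,u) \le 3S$, and the claim $\Wone(T,x^*) = N \sum_u d_T(x^*,u) \le 3\,N\sum_u d_T(r,u) = 3\,\Wone(T,r)$ follows immediately from the definition of $\Wone$.
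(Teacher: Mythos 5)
Your reduction is exactly the paper's: by the triangle inequality $d_T(x^*,u)\le \ell + d_T(r,u)$ with $\ell = d_T(r,x^*)$, everything comes down to showing $\ell\,|V(T)| \le 2\sum_u d_T(r,u)$, and the final passage to $\Wone$ is immediate. But that key inequality is precisely what you leave unproven, and you concede as much (``pinning down the exact constant\ldots requires care''). Worse, the charging you sketch cannot be tuned to yield the constant $2$: splitting at depth $\ell/2$ and injecting each shallow vertex into a distinct deep vertex gives at best $|V_{<}| \le |V_{\ge}|$ together with $|V_{\ge}| \le 2S/\ell$, hence $N \le 4S/\ell$ and an overall factor of $5$, not $3$. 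Any accounting that bounds $|V_{<}|$ and $|V_{\ge}|$ \emph{separately} loses a factor of two, because it throws away the fact that a shallow vertex's deficit below $\ell/2$ can be exactly offset by a matched deep vertex's surplus above $\ell/2$.

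The missing idea is a pairing across \emph{complementary} levels rather than a shallow-to-deep injection. Partition $V(T)$ into levels $L_i = \{u \in V(T) : d_T(r,u) = i\}$. Since $x^*$ is a closest leaf, no vertex at depth $i < \ell$ is a leaf, so each has at least one child, and distinct vertices have distinct children in a tree; hence $|L_0| \le |L_1| \le \cdots \le |L_{\ell}|$. This monotonicity lets you partition $L_{\le \ell}$ into pairs $\{a,b\}$ with $d_T(r,a)+d_T(r,b)\ge \ell$: match each vertex of $L_i$ with a distinct vertex of $L_{\ell-i}$ for $i<\ell/2$ (possible since $|L_i| \le |L_{\ell-i}|$), and pair the remaining vertices, all of depth $\ge \ell/2$, among themselves, possibly leaving one singleton of depth $\ge \ell/2$. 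Thus the average depth over $L_{\le\ell}$ is at least $\ell/2$, while every vertex of $L_{>\ell}$ has depth $>\ell$, giving
$$\sum_{u\in V(T)} d_T(r,u) \;\ge\; \frac{\ell}{2}\,|L_{\le\ell}| \,+\, \ell\,|L_{>\ell}| \;\ge\; \frac{\ell}{2}\,|V(T)|,$$
which is exactly the inequality $\ell N \le 2S$ you needed, and the factor $3$ follows. (Your factor-$5$ variant would still suffice for the constant-factor guarantee in Theorem~\ref{thm:main}, at the price of worse constants, but it does not prove the lemma as stated.)
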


    \begin{proof}
    For any vertex $x \in V(T)$, let $d(x) = \sum_{u \in V(T)} d_T(u, x)$.
    It suffices to show that $d(x^*) - d(r) \leq 2d(r)$.
    To this end, partition $V(T)$ into levels according to the distance to $r$: $L_i = \{ u \in V(T) \mid d_T(r, u) = i \}$.
    Let $\ell = d_T(r, x^*)$ and for $t\in \naturals$ write $L_{\leq t} = \bigcup_{j \leq t} L_j$, $L_{> t} = \bigcup_{j > t} L_j$.
    On the one hand,
    \begin{eqnarray}\label{dx_ub}
    \lefteqn{d(x^*) - d(r) \ = \sum_{u \in V(T)} (d_T(u, x^*) - d_T(u, r))} \\
                &\leq & \sum_{u \in V(T)} | d_T(u, x^*) - d_T(u, r) |
           \ \leq \ \ (|L_{\le \ell}| + |L_{> \ell}|)~\ell\nonumber.
    \end{eqnarray}
    On the other hand, observe that by our choice of $x^*$,
    it is guaranteed that $|L_0| \leq |L_1| \leq \ldots \leq |L_{\ell}|$, as every vertex at level $i < \ell$ has at least one child (and they are distinct as $H$ is acyclic).
    This implies that we can partition $L_{\leq \ell}$ into a collection of pairs $\{a, b\}$ where $a \neq b$ and $d_T(r, a) + d_T(r, b) \geq \ell$, possibly along
    with a singleton element from $L_{\geq \ell / 2}$.
    Therefore, the average distance from the elements of $L_{\leq \ell}$ to $r$
    is at least $\ell / 2$. Furthermore, every element of $L_{> \ell}$ is at distance $> \ell$ from $r$ by definition.
    Hence
    \begin{equation}\label{dr_lb}
    d(r) \geq |L_{\le \ell}| \frac{\ell}{2} + |L_{> \ell}| (\ell + 1).
    \end{equation}
    Combining Equations~\eqref{dx_ub} and~\eqref{dr_lb} yields the result.
    \end{proof}

\spara{Putting it all together.}\enlargethispage*{\baselineskip}
The pseudocode for our approach is shown as Algorithm~\ref{alg:approx}.
The following theorem summarizes the results about solution quality and running time.

\begin{algorithm}[t]
\caption{\ouralg}\label{alg:approx}
\small
\begin{algorithmic}[1]
\Require A graph $G=(V,E)$; a set of query vertices $Q \subseteq V$.
\Ensure A set of vertices $Q \subseteq H^* \subseteq V$.\vspace{2mm}

\State For all $q \in Q$ and for all $u \in V$, compute $d_G(q,u)$
\State $\mathcal{H} = \emptyset$ \Comment set of candidate solutions
\State $\beta \gets \text{any constant} > 0$ \Comment e.g., $\beta = 1$
\For{$t = 1, \ldots, \lceil \log_{1+\beta} |V| \rceil$}
    \State $\lambda \gets (1+\beta)^t$ \Comment guess the right balance

    \For {$r \in Q$}         \Comment guess a ``root'' vertex
        \LineCommentSpaced{Compute $G_{r,\lambda} = (V,E,w)$ (Lemma~\ref{lem:node2edge})}{\qquad\quad}
        \For {$(u, v) \in E$}
            \State $w(u,v) \gets \lambda + \frac{ \max\{d_G(r, u), d_G(r, v)\} }{\lambda}$
        \EndFor
        \State \ $T \gets \Call{ApproxSteinerTree}{G_{r,\lambda}, Q}$
        \State $H \gets \Call{AdjustDistances}{T}$\Comment see Lemma~\ref{lem:add_shortest_paths}

        \State $\calH \gets \calH \cup \{ (H, r) \}$
    \EndFor
    \EndFor
    \State $H^* \gets \arg\min_{(H, r) \in \mathcal{H}} \Wone(H, r)$ \Comment see Remark~\ref{rem:compute_wiener}
\end{algorithmic}
\end{algorithm}

\begin{theorem}\label{thm:main}
There is a constant-factor approximation algorithm for the \ourprob\ problem running in $O\left (|Q|~(|E| \log |V| + |V| \log^2 |V|) \right )$.
\end{theorem}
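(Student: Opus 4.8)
The plan is to chain the reductions of Steps~1--6 and bound the accumulated constant, then add up the running time of the double loop in Algorithm~\ref{alg:approx}. Since the algorithm outputs the candidate pair $(H^*, r^*)$ of smallest $\Wone(H^*, r^*)$ among all those it generates, it suffices to exhibit one iteration whose output is provably within a constant factor of optimal and then invoke the optimality of the returned pair.

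To pin down a good iteration, let $H^{**}$ be an optimal tree solution to \ourprobweak\ (Problem~\ref{prob:ourprobweak}), let $r^{**}$ be its optimal root so that $\Wone(H^{**}) = \Wone(H^{**}, r^{**})$, and let $x^* \in Q$ be the leaf of $H^{**}$ nearest to $r^{**}$. Lemma~\ref{lem:query_roots} gives $\Wone(H^{**}, x^*) \le 3\,\Wone(H^{**})$, and because $d_G \le d_{H^{**}}$ this also yields $\Wonew(H^{**}, x^*) \le 3\,\Wone(H^{**})$; as $H^{**}$ is a Steiner tree for $Q$ and $x^* \in Q$, the optimum of \weakrootedprob\ (Problem~\ref{prob:weakrootedwiener}) with root $x^*$ is at most $3\,\Wone(H^{**})$. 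I would then follow the factor through the inner subproblems for $r = x^*$: Mehlhorn's $2$-approximation for \steiner\ yields a $4$-approximation to \weirdprob\ (Corollary~\ref{coro:sol_weird}); for the ideal value of $\lambda$ from Lemma~\ref{coro:alpha_approx} this is a $4^2 = 16$-approximation to \weakrootedprob, and replacing that $\lambda$ by the nearest grid power of $(1+\beta)$ multiplies the guarantee by a further $(1+\beta)^2$ (the generalization noted in Step~5). The distance adjustment of Lemma~\ref{lem:add_shortest_paths} and Corollary~\ref{coro:add_shortest_paths} then turns the returned tree $T$ into a connector $H$ with $\Wone(H, x^*) \le (4 + 3\sqrt{2})\,\Wonew(T, x^*)$. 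Composing these bounds, $\Wone(H, x^*) \le 3 \cdot 16\,(1+\beta)^2(4 + 3\sqrt{2})\,\Wone(H^{**})$, so the pair actually returned is at least this good for \ourprobweak; one last factor of $2$ from Lemma~\ref{lem:one_source} converts this to \ourprob, for an overall factor $C = 96\,(4 + 3\sqrt{2})(1+\beta)^2$, an absolute constant once $\beta$ is fixed (e.g. $\beta = 1$).

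For the running time, line~1 performs one BFS from each $q \in Q$ on the unweighted graph, costing $O(|Q||E|)$ and producing both $d_G(q,\cdot)$ and a BFS tree rooted at each $q$, which is exactly what \textsc{AdjustDistances} needs. The double loop ranges over the $O(\log_{1+\beta}|V|) = O(\log|V|)$ grid values of $\lambda$ and the $|Q|$ roots in $Q$; each inner pass builds $G_{r,\lambda}$ in $O(|E|)$ time, runs Mehlhorn's Steiner-tree algorithm in $O(|E| + |V|\log|V|)$, and applies \textsc{AdjustDistances} in $O(|V|)$, for $O(|E| + |V|\log|V|)$ per pass. Multiplying by the $O(|Q|\log|V|)$ passes and absorbing both the preprocessing and the final evaluation of $\Wone$ over the $O(|Q|\log|V|)$ candidates (Remark~\ref{rem:compute_wiener}) gives the stated $O\!\left(|Q|\,(|E|\log|V| + |V|\log^2|V|)\right)$.

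The main obstacle is the factor bookkeeping rather than any new idea: one must carefully thread the factor-$3$ loss from restricting the root to $Q$ (Lemma~\ref{lem:query_roots}) through the switch from induced distances $\Wone$ to graph distances $\Wonew$, checking that $x^* \in Q$ is a legitimate candidate root and that the bound on the optimum of \weakrootedprob\ survives the inequality $d_G \le d_{H^{**}}$. A secondary point to verify is that the ideal $\lambda$ of Lemma~\ref{coro:alpha_approx} lies in $[\sqrt{1/2}, \sqrt{|V|}]$, so that the geometric grid in the loop is guaranteed to contain a value within $(1+\beta)$ of it; granting these, the rest is substitution.
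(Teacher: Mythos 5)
Your proposal is correct and follows essentially the same route as the paper: the identical chain Lemma~\ref{lem:one_source} $\to$ Lemma~\ref{lem:query_roots} $\to$ Lemma~\ref{coro:alpha_approx} $\to$ Corollary~\ref{coro:sol_weird} $\to$ Corollary~\ref{coro:add_shortest_paths}, the same grid argument for $\lambda$, the same final constant $96\,(4+3\sqrt{2})(1+\beta)^2 \approx 792\,(1+\beta)^2$, and the same per-iteration runtime accounting. The only (cosmetic, and if anything slightly cleaner) deviation is that you apply Lemma~\ref{lem:query_roots} to an optimal \emph{tree} solution of \ourprobweak\ (whose leaves provably lie in $Q$) and defer the factor-$2$ conversion to \ourprob\ to the end, whereas the paper applies it directly to the optimal \ourprob\ solution at the start---both orderings compose to the identical bound.
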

\begin{proof}
First we prove correctness. Let $H$ denote an optimal solution to \ourprob.
By Lemma~\ref{lem:one_source}, $\Wone(H) \le 2\cdot \W(H)$, so there is $r \in V(H)$ with $\Wone(H, r) \le 2 \cdot W(H)$.
By Lemma~\ref{lem:query_roots}, there exists $q \in Q$ with $\Wone(H, q) \le 3\cdot \Wone(H, r) \le 6 \cdot \W(H)$;
henceforth
we take $q$ to be the ``root'' vertex in our problems.
Let $K$ denote an optimal solution to Problem~\ref{prob:weakrootedwiener} with root $q$; clearly
$\Wonew(K, q) \le \Wonew(H, q) \le \Wone(H, q)$.
    Let $\lambda \in \reals^+$ be as in Lemma~\ref{coro:alpha_approx}, and let $L \subseteq V(G)$ be an optimal solution to
    Problem~\ref{prob:weird}.
    Then for any connector $X$, the conclusion of Lemma~\ref{coro:alpha_approx} says that if $\Wtwo(X, q, \lambda) \le \alpha \Wtwo(L, q, \lambda)$, then $\Wonew(X, q) \le \alpha^2 \Wonew(K, q)$.

    The main loop is guaranteed to try at some point this choice of $q$ and also a $(1+\beta)$-approximation
 $\lambda'$ for $\lambda$. It is readily seen that, for any $Y$, $\Wtwo(Y, q, \lambda') \le (1 + \beta)
 \Wtwo(Y, q, \lambda)$.
    By Corollary~\ref{coro:sol_weird}, we can find an $4$-factor approximation to Problem~\ref{prob:weird} with $q$ and $\lambda'$; in particular we find
    $X$, $q$ and $\lambda'$ with $\Wtwo(X, q, \lambda') \le 4 \Wtwo(L, q, \lambda') \le 4 (1 + \beta) \Wtwo(L, q, \lambda)$.
    Therefore $\Wonew(X, q) \le 16 (1 + \beta)^2 \Wonew(K, q) \le 96 (1 + \beta)^2 \W(H)$.

    By Corollary~\ref{coro:add_shortest_paths}, line~11 obtains a graph $X'$ with $$\Wone(X', q) \le (4 + 3\sqrt 2)
    \Wonew(X, q) \le 96 (1 + \beta)^2 (4 +
            3 \sqrt2) W(H).$$ Therefore, another application of Lemma~\ref{lem:one_source} tells us that $$\W(X') \le
    \Wone(X', r) \le 792 (1 + \beta)^2
    \W(H) = O(\W(H)),$$ as we
    wished to prove.

As for the running time, computing  the initial shortest-path distances (Line~1) takes $O(|Q|~(|V| + |E|))$ time, while the main loop in Lines~3--14 is repeated $O(|Q|\log|V|)$ times.
Lines~6--10 compute the weighted graph $G_{r,\lambda}$ and find an approximated Steiner tree, thereby solving Problem~\ref{prob:weird}. By
Corollary~\ref{coro:sol_weird}, they run in time $O(|E| + |V|\log|V|)$.
Line~11 adjusts large distances and run in linear time (Lemma~\ref{lem:add_shortest_paths}).
Finally, computing $\Wone(H, r)$ in Line~15 can be done in linear time for each element of $\calH$ (of which there are $O(Q \log |V|)$). In summary, the overall runtime of Algorithm~\ref{alg:approx} is $O\left (|Q|~(|E| \log |V| + |V| \log^2 |V|) \right )$.
\end{proof}

\begin{remark}\label{rem:compute_wiener}\enlargethispage*{\baselineskip}
The last line  of Algorithm~\ref{alg:approx} is intended to return the best solution found. It may be replaced with $H^* \gets
\arg\min_{H \mid (H, r)\in \mathcal{H}} \W(H)$, which can only lead to better solutions. The trouble is that computing $\W(H)$ exactly
may be very costly for large $H$; this poses no difficulty in practice as the sets found are typically small. However, for the worst-case analysis of the running time bounds, it is important to use $\Wone(H, r)$ as a proxy for the actual Wiener index~$\W(H)$.
\end{remark}

\section{Lower bounds}\label{sec:lowerbounds}

\enlargethispage*{\baselineskip}
In this section we design methods to
prove \emph{lower bounds} on the optimal Wiener index.
The idea is to have a way to somehow compare the Wiener index of the solution outputted by our method with the optimum.
As the optimal solution is unknown, we compare against a lower bound on its cost.
While this is
pessimistic approach, proving that our solutions are close to the lower bound allows us to state with certainty that they are close to optimal as well.

To compute the desired lower bound, we show an integer-programming formulation of the \ourprob\ problem.
Let $S$ denote the vertices in a feasible solution, i.e., a connector of $Q$ in $G$.
We set a variable $y_u$ to 1 for each $u \in S$ (in particular $y_u = 1$ for all $u \in Q$), and another variable $p_{st}$ for each
pair $s, t \in V(G) \times V(G)$. In the
intended solution, $p_{st}=1$ iff $y_s = y_t = 1$; we model this by the linear constraint $p_{st} \ge y_s + y_t
- 1$.
Notice that the connectivity requirement is equivalent to being able to route an unit of flow from
$s$ to $t$ whenever $p_{st} = 1$. We add two variables $f_{uv}^{st}$ and $f_{vu}^{st}$ for each edge $\{u, v\}$ in
$G$ and each pair $s, t \in V$; $f_{uv}^{st}$ which will be set to one when a fixed shortest path from $s$ to~$t$
traverses edges $u$ to $v$ in that direction.
 For each $s, t$ and  $v \in V\setminus\{s,t\}$, the flow constraints indicate that the net flow
through $v$ is zero: $\sum_{u \in N(v)} [ f^{st}_{uv} - f^{st}_{vu} ] = 0$, where $N(v)$ are the neighbours of $v$ in $G$. Also, the net flow through $s$
must be~$-p_{st}$
and for $t$ must be $p_{st}$. Since $d_S(s, t) = \sum_{u,v} f^{st}_{uv}$ and the latter sum vanishes when $p_{st} = 0$, $\W(S) = \frac{1}{2} \sum_{u,v}^{s,t} f^{st}_{uv}$.
The complete program is shown next.

\begin{figure}[h!]
\hrule
{
\small
\begin{equation}\label{ip1}
\begin{array}{rrclcl}
\displaystyle \min & \multicolumn{3}{l}{\frac{1}{2} \displaystyle \sum_{u,v,s,t} f^{st}_{uv}} \\
\textrm{s.t.}

& \displaystyle \sum_{u \in N(v)} [ f^{st}_{uv} - f^{st}_{vu} ] &=&
  \begin{cases}
     -p_{st}  &\mbox{if } v=s \\
     p_{st} &\mbox{if } v=t \\
     0       &\mbox{otherwise} \\
  \end{cases}
  & \forall s, t, v \in V \\
& f^{st}_{uv}   &\le& y_u           &\forall \{u, v\} \in E  \\
& p_{st}        &\ge& y_s + y_t - 1 &\forall s, t \in V \\
& y_u           &=&   1             &\forall u \in Q \\
& f^{st}_{uv}, p_{st}         &\ge& 0 \\
& y_u &\in& \{0,1\} \\
\end{array}
\end{equation}
}
\hrule
\end{figure}

\begin{theorem}\label{ip_model}
Program~\eqref{ip1} models the \ourprob\ problem.
\end{theorem}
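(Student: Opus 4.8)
The plan is to establish a value-preserving correspondence between feasible points of Program~\eqref{ip1} and connectors of $Q$ in $G$, so that the optimum of the program equals $\min_{S}\W(S)$ over all connectors $S \supseteq Q$. Given any feasible point $(y,p,f)$, I associate the vertex set $S = \{u \in V : y_u = 1\}$; the constraint $y_u = 1$ for $u \in Q$ guarantees $Q \subseteq S$. I will prove two things: (i) every feasible $(y,p,f)$ forces $G[S]$ to be connected and has objective value at least $\W(S)$; and (ii) every connector $S$ admits a feasible point of objective exactly $\W(S)$. Together these give that the minimum of the program is $\min_{S}\W(S)$ and that reading off $S$ from an optimal $y$ yields an optimal connector.

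The crux is a per-pair flow analysis. Fix distinct $s,t \in V$ with $y_s = y_t = 1$; then $p_{st} \ge y_s + y_t - 1$ forces $p_{st} \ge 1$, so the conservation constraints route at least one unit of flow from $s$ to $t$ through the variables $f^{st}_{\cdot\cdot}$. I first argue this flow is confined to $G[S]$: if $w \notin S$ then $y_w = 0$, so $f^{st}_{wb} \le y_w = 0$ for every neighbour $b$, i.e., no flow leaves $w$; since $w \ne s,t$, conservation at $w$ then forces the incoming flow to vanish as well. Hence no flow traverses any vertex outside $S$, and by flow decomposition a unit flow from $s$ to $t$ yields a path from $s$ to $t$ inside $G[S]$. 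Applying this to every pair $s,t \in S$ shows $G[S]$ is connected, so $S$ is a valid connector (in particular, a $y$ whose support induces a disconnected subgraph renders the program infeasible). Moreover, with unit cost on each directed edge, the minimum of $\sum_{u,v} f^{st}_{uv}$ over unit flows from $s$ to $t$ supported on $G[S]$ is exactly the shortest-path distance $d_S(s,t)$, and any feasible flow costs at least $d_S(s,t)$.

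It remains to assemble these bounds. For a fixed feasible $y$ with connector $S$, we may (and at the optimum will) set $p_{st} = 1$ precisely when $s,t \in S$ and $p_{st}=0$ otherwise, since enlarging $p_{st}$ only forces more flow and thus a larger objective; pairs with $\{s,t\}\not\subseteq S$ then carry no flow, and the degenerate terms $s=t$ are vacuous. Summing the per-pair lower bound over all ordered pairs of distinct vertices and using the factor $\tfrac12$ to pass to unordered pairs (with $d_S(s,t)=d_S(t,s)$) gives objective at least $\sum_{\{s,t\}\subseteq S} d_S(s,t) = \W(S)$, proving (i); equality is attained by routing each required unit along a shortest path of $G[S]$, proving (ii). Hence the program's optimum equals $\min_S \W(S)$, and an optimal $y$ identifies an optimal solution to the \ourprob\ problem.

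The main obstacle is the flow-confinement step together with the identification of the minimum routing cost with $d_S$ rather than $d_G$: one must combine the capacity constraints $f^{st}_{uv}\le y_u$ with conservation at excluded vertices to rule out any ``leakage'' through $V\setminus S$, and then invoke flow decomposition (equivalently, LP shortest-path duality, which is exact here because the single-commodity flow polytope with unit demands is integral) to certify that the cheapest feasible flow costs exactly $d_S(s,t)$.
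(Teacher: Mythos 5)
Your proof is correct and takes essentially the same route as the paper's: the forward direction routes each pair's unit of flow along a shortest path of $G[S]$ to realize objective value $\W(S)$, and the converse uses the capacity constraints $f^{st}_{uv}\le y_u$ together with conservation to confine flow to $S$, then a min-cost-flow (shortest-path) bound to show the objective is at least $\W(S)$. If anything, you are slightly more explicit than the paper on two points it leaves implicit --- that feasibility alone forces $G[S]$ to be connected, and the flow-decomposition argument certifying the per-pair lower bound $d_S(s,t)$ --- which is a welcome tightening rather than a deviation.
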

The proof is reported in Appendix~\ref{sec:more_proofs_ip}.


Program~\eqref{ip1} uses more than $2|E| |V|^2$ variables and more than $|V|^3$ constraints, which can be problematic for large graphs.
A way to reduce the size of the program is to ask for minimization of the pairwise sum of distances in the original graph: this is a safe relaxation as our solutions typically respect the original distances.
Applying this relaxation, the objective function becomes a
linear function of $p_{s,t}$, thus eliminating the need for
separate flow variables for each $s, t$ pair and leading to a program significantly smaller in size.

Let $y_u$ and $p_{st}$ be as before.
The Wiener index of any solution  is at least $\sum_{u,v} d_G(u, v) \cdot p_{uv}$. To express the condition that the variables with $y_u = 1$
form a connected subgraph, we add two variables $x_{uv}$ and $x_{vu}$ for each edge $(u, v)$ of $G$. Pick an arbitrary $q
\in Q$ and any directed spanning tree $T_q$ of $S$ rooted at $q$; the intended solution will have $x_{uv} = 1$ if and only if $v$ is the parent of $u$ in
$T_q$. One constraint is that $x_{uv}+x_{vu} \le y_u$: edge $(u, v)$ can be used only in one direction, and in  order to use it from $u$ to $v$, we
must choose~$u$ as well. Also, for any $u \neq q$, $\sum_{u \in N(v)} x_{uv} = y_v$ (any chosen vertex must have exactly one parent in $T_q$). Finally, we need to make sure that
the edges with $x_{uv}+x_{vu}=1$ form an undirected tree. A tree with $k$ vertices has $k - 1$ edges and no cycles; hence we enforce the constraint
$\sum_{u,v} [ x_{uv} + x_{vu} ] =\sum y_u - 1$ and,  in order to avoid cycles, we add constraints saying that the sum of $x_{uv} + x_{vu}$ for all edges $(u, v)$ in every cycle $C$ of $G$ is at most $|C| - 1$.

\begin{figure}[h!]
\hrule
{
\small
\begin{equation}\label{ip2}
\begin{array}{rrclcl}
\displaystyle \min & \multicolumn{3}{l}{\frac{1}{2} \displaystyle \sum_{s,t} d_G(s, t)\cdot p_{st}} \\
\textrm{s.t.}

& \displaystyle \sum_{u \in N(v)} x_{uv} &=& y_u & \forall v \in V\setminus\{q\} \\
& \displaystyle \sum_{\{u,v\}\in E} [ x_{uv} + x_{vu} ] &=& \sum_u y_u  - 1 &\\
& \displaystyle \sum_{i} [ x_{z_i,z_{i+1}} + x_{z_{i+1},z_i} ] &\le & t -1 & \hspace{-0.6cm} \forall \text{cycle } z_0,\ldots,z_t=z_0\\
& x_{uv}+x_{vu}   &\le& y_u           &\forall \{u, v\} \in E  \\
& p_{st}        &\ge& y_s + y_t - 1 &\forall s, t \in V \\
& y_u           &=&   1             &\forall u \in Q \\
& x_{uv}, p_{st}         &\ge& 0 \\
& y_u &\in& \{0,1\} \\
\end{array}
\end{equation}
}
\hrule
\end{figure}

We reduced the number of variables to a more manageable $O(V^2)$, in exchange for
exponentially many constraints (one per cycle in $G$).  This is not a serious issue because the program above has a \emph{separation
oracle}~\cite{ellipsoid_lp}, and
 commercial solvers support the addition of \emph{lazy constraints}~\cite{gurobi}.


\section{Experiments}
\label{sec:experiments}

In this section we report the results of our empirical analysis. Here we anticipate the main findings:
\squishlist
\item Our approximation algorithm produces solutions which are close to optimal  (\textsection\ref{subsec:exp_approx}).

\item When compared to other concepts of  query-dependent subgraphs extraction such as personalized PageRank  \cite{Kloumann} (\ppr), \emph{Center-piece Subgraph} \cite{CenterpieceKDD06} (\cep), or the \emph{Cocktail Party Subgraph}~\cite{SozioKDD10} (\ctp), the minimum Wiener connector is several orders of magnitude smaller in size, it is much denser, and it includes vertices with higher centrality (\textsection\ref{subsec:exp_chara}).

\item When the query set $Q$ includes vertices belonging to different communities, \ppr, \cep, and \ctp\ return solutions that are 5 to 10 times
larger than the case where the whole of $Q$ belongs to the same community. The minimum Wiener connector is only slightly larger  (\textsection\ref{subsec:exp_gtc}).

\item
Steiner tree produces solutions that are much closer to the minimum Wiener connector than the other methods.
However, in addition to having smaller Wiener index (\textsection\ref{subsec:exp_st}), the Steiner-tree solutions are nearly always less dense, and include vertices with lower centrality.
Also, interestingly, the size of our solutions is comparable to the size of Steiner-tree solutions, despite the fact that Steiner tree explicitly optimizes for solution size.
\squishend

\subsection{Experimental set up}
\spara{Algorithms.}
We compare our algorithm $\wsq$ with several alternative methods described next.
Following the literature on random walks with restart~\cite{tong06fast-RWR,fujiwara12fast-topksearchRW,yu2014reversetopk}, \cep\ is initialized with a restart parameter $c$=0.85, number of iterations $m$=100, and a convergence error threshold $\xi$=$10^{-7}$.
To allow \cep\ to converge to the best possible solution, no budget constraint is given a priori: we greedily add to the solution the highest-score vertex, until we connect the vertices in $Q$.
For the personalized PageRank method, \ppr, we use the same settings as \cep, as well as the same way of selecting which and how many vertices to add to the solution.

For \ctp~\cite{SozioKDD10} we found that the parameter-free version typically returns too large solutions (often with a size comparable to the original graph). In order to limit the size of the solutions returned while keeping it parameter-free, we first execute a BFS from each query vertex until all other vertices in $Q$ are connected, among all these subgraphs we pick the smallest one, and run over it the greedy algorithm of~\cite{SozioKDD10}.
For Steiner tree (\st) we use the approximation algorithm by Mehlhorn~\cite{Mehlhorn88}, which is the same that \wsq\ uses internally to solve the
Steiner tree instances it generates (\textsection\ref{sec:approxalg}).
All algorithms are implemented in \texttt{C++}.

\begin{table}[t!]
\vspace{-2mm}
\caption{
Summary of graphs used.
$\delta$: density, ad: average degree, cc: clustering coefficient, ed: effective diameter.
Datasets with ground truth communities ($*$).
Classical Steiner Tree benchmarks with given query workload ($\#$).
}
\centering
\small
\tabcolsep=0.12cm
\begin{tabular}{lrrrrrrr}
\toprule
\hspace{-4mm}& Dataset			&	$|V|$		&	$|E|$		&	$\delta$	&	ad		&	cc		&	ed	\\
\midrule
\hspace{-4mm}& \dataset{football}	&	115		&	613		&	9.4e-2	&	21.3	&	0.40	&	3.9	\\
\hspace{-4mm}& \dataset{jazz}		&	198		&	2742		&	1.4e-1	&	55.4	&	0.62	&	3.8	\\
\hspace{-4mm}& \dataset{celegans}	&	453		&	2025		&	2.0e-2	&	17.9	&	0.65	&	4.0	\\
\hspace{-4mm}& \dataset{email}	&	1133		&	5452		&	8.5e-3	&	9.62	&	0.22	&	8	\\
\hspace{-4mm}& \dataset{yeast}	&	2224		&	6609		&	2.6e-3	&	5.94	&	0.14	&	11	\\
\hspace{-4mm}& \dataset{oregon}	&	10670	&	22002	&	3.8e-4	&	4.12	&	0.30	&	4.4	\\
\hspace{-4mm}& \dataset{astro}	&	18772	&	198110	&	1.1e-3	&	22.0	&	0.63	&	5	\\
\hspace{-4mm}& \dataset{dblp*}		&	317080	&	1049866	&	2.1e-5	&	6.62	&	0.63	&	8.2	\\
\hspace{-4mm}& \dataset{youtube*}	&	1134890	&	2987624	&	4.6e-6	&	5.27	&	0.08	&	6.5	\\
\hspace{-4mm}& \dataset{wiki}		&	2394385	&	5021410	&	1.8e-6	&	4.19	&	0.22	&	3.9	\\
\hspace{-4mm}& \dataset{livejournal}&	3997962	&	34681189	&	4.3e-6	&	17.3	&	0.28	&	6.5	\\
\hspace{-4mm}& \dataset{twitter}	&	11316811	&	85331846	&	1.3e-6	&	15.1	&	0.09	&	5.9	\\
\hspace{-4mm}& \dataset{dbpedia}	&	18268992	&	172183984&	1.0e-6	&	18.9	&	0.17	&	5.0	\\
\hspace{-4mm}& \dataset{puc$^\#$}		&	64-4096		&	448-24574	&	-	&	-	&	-	&	-	\\
\hspace{-4mm}& \dataset{vienna$^\#$}	&	1991-8755	&	3176-14449	&	-	&	-	&	-	&	-	\\
\bottomrule
\end{tabular}
\label{tab:all-nets}
\vspace{-2mm}
\end{table}

\spara{Datasets and query workloads.} \enlargethispage*{\baselineskip}
We use real-world publicly-available graphs of various types and sizes, spanning different domains: communication over emails and wiki pages, citation and co-authorship networks, road networks, social networks, and web graphs  (Table~\ref{tab:all-nets}).

Small datasets are used for assessing approximation quality (\textsection\ref{subsec:exp_approx}) of our algorithm \wsq w.r.t. the best provable bounds obtained by solving the integer program in \textsection\ref{sec:lowerbounds}.

Medium-large datasets are used for characterizing the solutions produced by the various algorithms described above, in terms of size, density, and centrality (\textsection\ref{subsec:exp_chara}).

In all these datasets, the query workloads are made of random query-sets $Q$, with controlled size and average distance of the query vertices.
Datasets marked with (*) contain ground-truth community structure \cite{YangL15}: these are used to create different workloads with query vertices in $Q$ belonging to the same community or to different communities (\textsection\ref{subsec:exp_gtc}).
As we delve deeper in the comparison between \ourprob\ and Steiner tree (\textsection\ref{subsec:exp_st}), we use benchmarks with predefined query workloads which are used for assessing Steiner tree algorithms.\footnote{\url{http://steinlib.zib.de/}}
These are marked with ($\#$) in Table~\ref{tab:all-nets}: benchmark  \dataset{puc} contains 25 problems on small graphs with $|Q| \in [8, 2048]$,  while benchmark  \dataset{vienna} contains 85 problems with $|Q| \in [50, \approx 5k]$.

For scalability assessment (\textsection\ref{subsec:exp_scala}) we use the larger graphs in  Table~\ref{tab:all-nets}, plus synthetic graphs generated according to the Erd\H{o}s-R\'enyi and Power-Law  models.\footnote{\url{http://snap.stanford.edu/snap/index.html}}


\begin{table}[t!]
\vspace{-2mm}
\caption{Comparison of the Wiener index of $\wsq$'s solution with the lower ($G_L$) and upper ($G_U$) bounds found by Gurobi solver for different datasets and query set sizes.
The cost of the optimal solution is guaranteed to be in $[G_L, G_U]$. $\dagger$Numbers based on the best lower bound the solver could prove \emph{before it ran out of memory}; they give an \emph{upper bound} on the error that is likely to be an overestimate.
\label{tab:lowerbounds}}
\centering
\small
\tabcolsep=0.12cm
\begin{tabular}{crrrrr}
\toprule
Dataset		&	|Q|	&	\wsq		&	$G_{U}$	&	$G_{L}$	&	Error interval\\
\midrule
\multirow{4}{-2mm}{\begin{sideways} \dataset{football} $\;$ \end{sideways}}
& 3		&	40		&	40		&	40		&	0			\\
& 5		&	172		&	172		&	164		&	$[0,4.9\%]$	\\
& 10		&	656		&	598		&	538		&	[9.6\%, 22\%]\\
& 20		&	2352		&	2018		&	$1546^\dagger$		&	$ [16.5\%, 52.2\%^\dagger]$	\\
\midrule
\multirow{4}{-2mm}{\begin{sideways} \dataset{jazz} $\;$ \end{sideways}}
& 3		&	16		&	16		&	16		&	0			\\
& 5		&	44		&	44		&	44		&	0	\\
& 10		&	276		&	276		&	260		&	$[0, 6.2\%]$	\\
& 20		&	1014		&	964		&	936		&	$[5.1\%, 8.4\%]$	\\
\midrule
\multirow{4}{-2mm}{\begin{sideways} \dataset{celegans} $\;$ \end{sideways}}
& 3		&	36		&	36		&	36		&	0			\\
& 5		&	106		&	106		&	106		&	0	\\
& 10		&	330		&	330		&	326		&	$[0, 1.3\%]$	\\
& 20		&	1204		&	1196		&	1192		&	$[0.66\%, 1.1\%]$	\\
\midrule
\multirow{4}{-2mm}{\begin{sideways} \dataset{email} $\;$ \end{sideways}}
& 3		&	58		&	58		&	58		&	0			\\
& 5		&	250		&	250		&	240		&	$[0,4.2\%]$	\\
& 10		&	1352		&	1208		&	$1033^\dagger$		&	$[11.9\%, 30.9\%^\dagger]$	\\
& 20		&	5490		&	5490		&	$4032^\dagger$		&	$[0, 36.2\%^\dagger]$	\\
\bottomrule
\end{tabular}
\vspace{-2mm}
\end{table}

\subsection{Approximation quality}\label{subsec:exp_approx}
Table~\ref{tab:lowerbounds} reports the Wiener index of the solution  produced by \wsq, and how it compares with the best provable bounds obtained
with the integer-programming formulation reported in Program~\eqref{ip2} (\textsection\ref{sec:lowerbounds}) and the state-of-the-art Gurobi solver~\cite{gurobi}.
This comparison was carried out on small graphs as otherwise the number of variables would be too large to even formulate the integer program.
We initialize the solver with our solution so that the solver's upper bound can never be worse by construction.
A match in the solver's upper and lower bounds indicates an optimal solution was found.
When they do not coincide, either there is a gap between the best solution and the lower bound from Program~\eqref{ip2}, or the solver ran out of
memory during the optimization phase (in which case we report the best lower bound found so far).

We also report an error interval obtained by comparing our solution with the solver's best upper and lower bounds.
Observe that, for small query sets (three to five vertices), \wsq\ produces solutions that are optimal or very close to it (with error in the interval $[0,5\%]$).
The worst discrepancy between our and the solver's best solution is $16.5\%$ (\dataset{football} with $|Q| = 20$); and here all we can prove is that our solution is at most $52.2\%$ from optimal.
However, note in this case there is also a significant gap between the solver's own lower and upper bounds, thus $52.2\%$ is likely to be an overestimate. It should also be noted that this query set size is approximately $1/5$ the size of the whole vertex set $V$.

\begin{table}[t!]
\vspace{-2mm}
\centering
\caption{Main characteristics of the solution $H$ returned by different algorithms on 6 datasets, with $|Q| = 10$ and average distance of 4 among the vertices in $Q$. Each experiment is run 5 times and we report averages of
size of the solution $|V[H]|$, density of the solutions  $\delta(H) = |E[H]|/{|V[H]| \choose 2}$,  average betweenness centrality $bc(H)$ of vertices in $H$, and Wiener index $\W(H)$.
\label{tab:quality1}}
\vspace{-2mm}
\scriptsize
   \setlength\tabcolsep{3.5pt}
\begin{tabular}{r|rrrrrr|l}
   \multicolumn{1}{c}{\hspace{-6mm}}  &   \rot{\textsf{email}}& \rot{\textsf{yeast}}& \rot{\textsf{oregon}}& \rot{\textsf{astro}} & \rot{\textsf{dblp}} & \rot{\textsf{youtube}}&  \\ \cline{2-7}

\multirow{5}{*}{\rotatebox{90}{$|V[H]|$}}   & 671 & 819 & 9028  &  12758 & 11804 & 17865 &  \ctp \\
                                                                             & 155 & 188 & 4556  & 1735 & 7349 & 5615 & \cep \\
                                                                             & 137 & 100 & 1846  & 598 & 842& 684 &  \ppr\\
                                                                             & 26 & \textbf{24} & 26  & 26 & 25 & 19 &  \st \\
                                                                             & \textbf{24} & \textbf{24} & \textbf{23}  & \textbf{23} & \textbf{23} & \textbf{17} &  \wsq \\
\cline{2-7}

\multirow{5}{*}{\rotatebox{90}{$\delta(H)$}}   & 0.016 & 0.016 & 0.01  & <0.01 & <0.01 & 0.01&  \ctp \\
                                                                                 & 0.047 & 0.028 & 0.02  &  0.019 & 0.01 & <0.01 & \cep \\
                                                                                 & 0.029 & 0.039 & 0.02  & 0.07  & 0.01 & 0.02 &  \ppr\\
                                                                                 & 0.080 & 0.088 & 0.090  & 0.09 & 0.08 & 0.1 &  \st \\
                                                                                 & \textbf{0.093} & \textbf{0.091} &\textbf{0.106}  &  \textbf{0.13} & \textbf{0.11} & \textbf{0.13} &  \wsq \\
\cline{2-7}

\multirow{5}{*}{\rotatebox{90}{$bc(H)$}}       & <0.01 & <0.01 & <0.01  & <0.01 & <0.01 & <0.01 &  \ctp \\
                                            & 0.03 & 0.02 & <0.01  & <0.01 & <0.01 & <0.01 & \cep \\
                                            & 0.03 & <0.01 & <0.01  & 0.02 & 0.01 & <0.01 &  \ppr\\
                                            & 0.09 & 0.07 & 0.10  & 0.11 & 0.10 & 0.13 &  \st \\
                                            & \textbf{0.11} & \textbf{0.11} & \textbf{0.12}   & \textbf{0.14} & \textbf{0.12} & 0\textbf{.18} &  \wsq \\
\cline{2-7}

 \multirow{5}{*}{\rotatebox{90}{$\W(H)$}}       & $\approx 750k$ & $\approx 2M$ & $\approx 137M$  & $\approx 292M$ & $\approx 400M$ & $\approx 1.5G$ &  \ctp \\
                             & $54\,598$ & 69\,296 & $\approx 50M$  & $\approx 8.3M$ & $\approx 12.6M$ & $\approx 561M$ & \cep \\
                             & $52\,222$ & $15\,838$ & $\approx 7.5M$  & $40\,079$ & $\approx 1.2M$ & $\approx 1.3M$ &  \ppr\\
                             & $1\,200$ & $1\,259$ & $1\,164$  & 1\,318 & 3\,371 & 1\,324 &  \st \\
                             & \textbf{968} & \textbf{931} & \textbf{923}  & \textbf{1\,007} & \textbf{2\,043} & \textbf{956} &  \wsq \\
 \cline{2-7}
\end{tabular}
\end{table}
\begin{figure}[t!]
\vspace{-2mm}
\begin{tabular}{c}
\hspace{-10mm}	\includegraphics[scale=0.95]{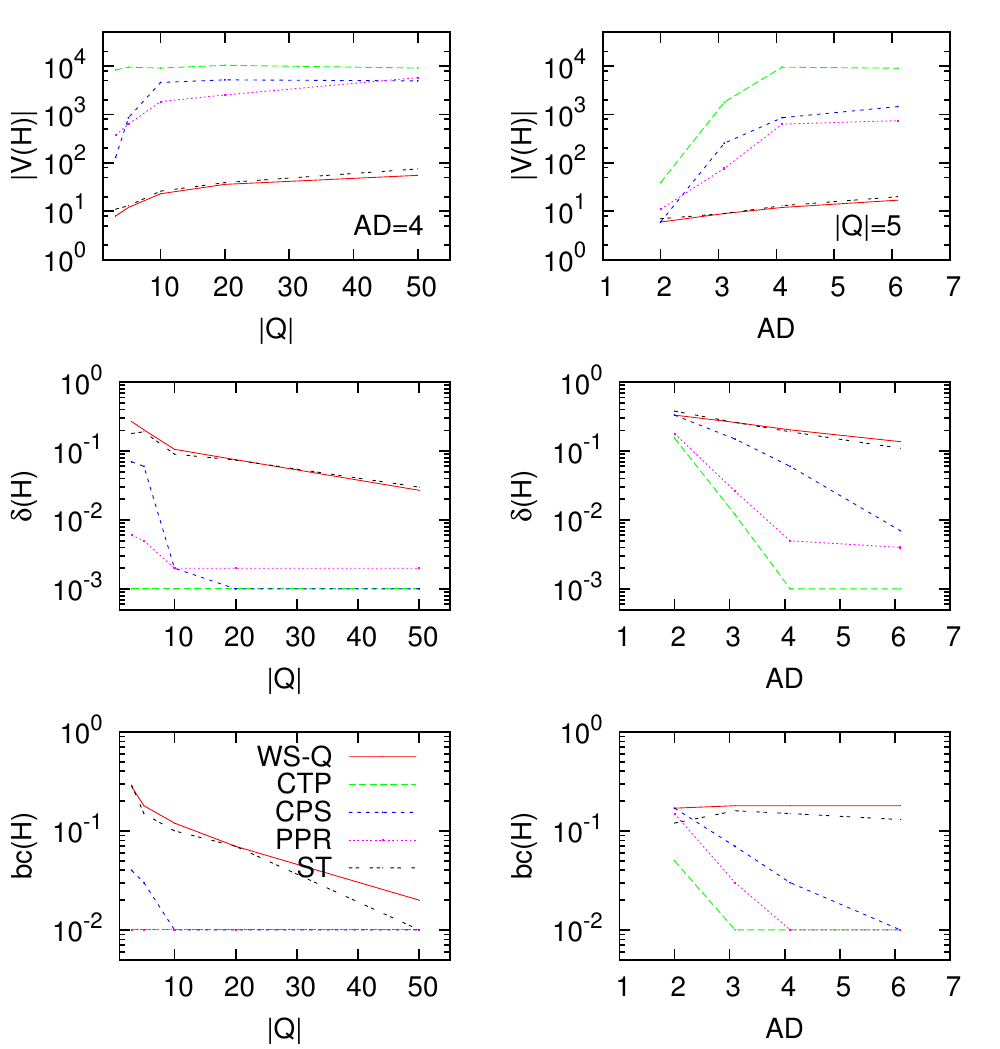}
\end{tabular}
\vspace{-2mm}
\caption{Left column: fixed average distance $AD=4$ among query vertices, varying $|Q|$. Right column: fixed query set size $|Q|=5$, varying average distance among query vertices. We report $|V(H)|$, $\delta(H)$, and  $bc(H)$ on \dataset{oregon}.\label{fig:quality1}}
\vspace{-2mm}
\end{figure}

\subsection{Solution characterization}\label{subsec:exp_chara} \enlargethispage*{\baselineskip}
Table~\ref{tab:quality1} and Figure~\ref{fig:quality1} report a characterization of the solutions produced by the various algorithms in terms of  number of vertices in the solution ($|V(H)|$), density of the solution ($\delta(H)$) and betweenness centrality of the vertices in the solution ($bc(H)$).
Table~\ref{tab:quality1} reports results for various graphs with a fixed size of $Q$, and a fixed average distance among the vertices in $Q$, while Figure~\ref{fig:quality1} shows the same statistics for a single dataset (\dataset{oregon}) with varying size of $Q$ and average distance of the query vertices.
Results confirm that \wsq\ produces solutions which are always smaller, denser and contain vertices with higher betweenness centrality than the other methods. The difference is striking with all the methods, with Steiner tree being much closer to the type of solutions produced by \wsq.
As expected, since the other methods do not try to optimize it, \wsq\ produces solutions with a Wiener index that is orders of magnitude smaller.
Moreover, the solutions \wsq\ provides have much smaller index the Steiner-tree solutions.
A deeper comparison between \wsq\ and \st\ is reported in Section \ref{subsec:exp_st}.

\begin{table}[t!]
\vspace{-2mm}
\centering
\caption{Average solution size for query workloads based on ground-truth communities:   \dataset{dc} = query vertices in different communities, \dataset{sc} = query vertices in the same community, and \dataset{dc/sc} = the ratio of the previous two columns. \label{tab:communities}}
\vspace{-3mm}
\small
   \setlength\tabcolsep{3.5pt}
\begin{tabular}{r|rr|r|rr|r|l}
   \multicolumn{1}{c}{\hspace{-6mm}}  &   \rot{\textsf{dblp-dc}}& \rot{\textsf{dblp-sc}}& \rot{\textsf{dblp:dc/sc}}& \rot{\textsf{youtube-dc}} & \rot{\textsf{youtube-sc}} & \rot{\textsf{youtube:dc/sc}}&  \\ \cline{2-7}

\multirow{5}{*}{\rotatebox{90}{$|V[H]|$}}   & 1.4e5  & 2.8e4 &  \textbf{5.03} &  8e5 &2.3e5    &  \textbf{3.5}&  \ctp \\
                                                                             &  4.1e4   &3.69e3 &\textbf{11.3}  &  3.6e5& 5.0e4 &\textbf{7.4}  & \cep \\
                                                                             & 3.4e4  & 3.5e3 &  \textbf{8.6}  &3.9e5  &4.1e4 & \textbf{9.2} &  \ppr\\
                                                                             &  40 &29  &  \textbf{1.43} &20  & 16 &  \textbf{1.3}&   \st \\
                                                                             & 36 &26 & \textbf{1.38}  & 18 & 14 &\textbf{ 1.3 }&  \wsq \\
\cline{2-7}
\end{tabular}
\vspace{-1mm}
\end{table}

\subsection{Ground-truth communities workload}\label{subsec:exp_gtc}
Next, we compare the behavior of the various methods when the query set $Q$ belongs to a community or to multiple communities. To this end, we use graphs with ground-truth community structure (\dataset{dblp} and \dataset{youtube}) and produce two query workloads for each graph: one with  query vertices belonging the \emph{same community} (denoted \dataset{sc}) and one with query vertices coming from \emph{different communities} (denoted \dataset{dc}). Each workload contains 40 queries, 10 for each size $|Q| \in \{3,5,10,20\}$. For  \dataset{sc} workloads, we pick the community at random, but avoiding small communities (of size smaller than 100 vertices).

The results are reported in Table \ref{tab:communities}. We observe that when $Q$ belongs to multiple communities, random-walk-based methods (\ppr\ and \cep) produce solutions which are from 7 to 11 times larger than when $Q$ belongs to only one community. While the ratio is less striking for \ctp\ (3 to 5 times larger), the solutions produced are already extremely large in both workloads.

The results confirm that these methods are conceived to reconstruct a community around a given seed set of vertices~$Q$, implicitly assumed to belong
to the same community; thus, they tend to return significantly larger results when this does not hold. By contrast, \wsq\ and \st\ do not rely on such
assumptions, and the difference in average solution size between the two workloads is much smaller.

\subsection{Comparison on Steiner tree benchmarks}\label{subsec:exp_st}\enlargethispage*{\baselineskip}
We have shown that the types of solutions produced by community-oriented methods have very different characteristics from those by the minimum Wiener connector and the Steiner tree. We next delve deeper in the comparison between the minimum Wiener connector and the Steiner tree, using
Steiner-tree benchmarks with predefined query workloads, and focusing on the two objective functions of the two problems: the size of the solution (Steiner) and the sum of the pairwise shortest-path distances (Wiener).

Figure~\ref{fig:st-wsq-quality} reports the cumulative distribution functions (CDFs) of the ratio of solution size (left) and Wiener index (right) between \st\ and \wsq, on the two benchmarks  \dataset{puc} and \dataset{vienna}.
As expected, \wsq\ produces solutions which have a much smaller sum of the pairwise shortest-path distances.
Also interesting to observe is the fact that our algorithm \wsq\ often outperforms the well-established Steiner-tree approximation-algorithm by Mehlhorn~\cite{Mehlhorn88} with respect to the size of the solution, which is the objective function of the Steiner tree (recall that the \ourprob\ objective function implicitly favors small solutions).

\begin{figure}[t!]
\centering
\vspace{-2mm}
\begin{tabular}{c}
	\hspace{-6mm}
	\includegraphics[scale=0.9]{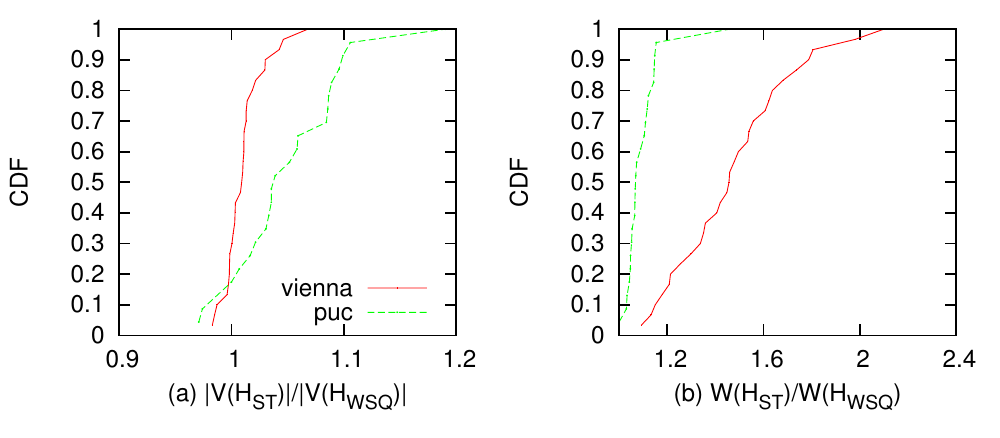}
\end{tabular}
\vspace{-3mm}
\caption{CDFs of the ratio of (a) solution size and (b) Wiener index on benchmarks \dataset{vienna} and \dataset{puc}.\label{fig:st-wsq-quality}}
\vspace{-4mm}
\end{figure}

\begin{figure*}[t!]
\vspace{-4mm}
\centering
\begin{tabular}{cc}
\hspace{-6mm}\includegraphics[scale=1]{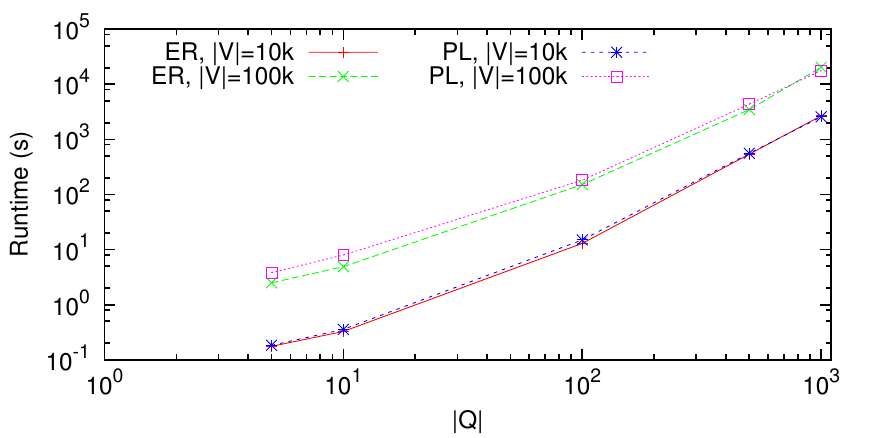} & \includegraphics[scale=1]{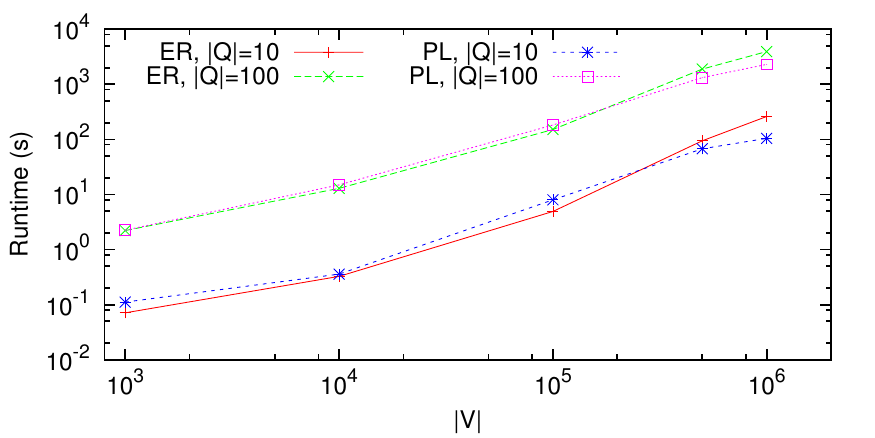}\\
\hspace{-6mm} \includegraphics[scale=1]{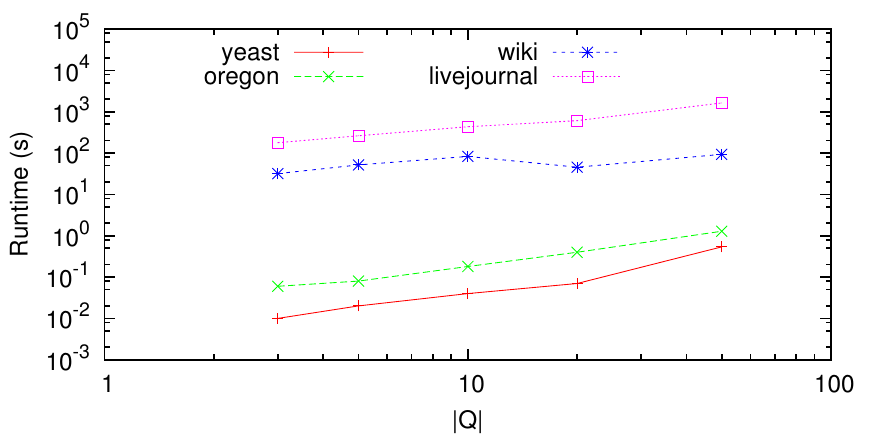}  & \includegraphics[scale=1]{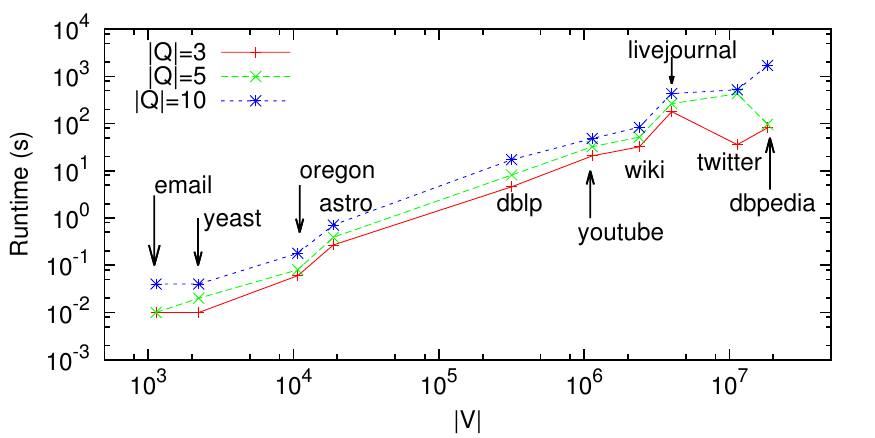}
  \end{tabular}
\vspace{-4mm}
\caption{Computational runtime of \wsq\ on different synthetic Power-Law ($PL$) and Erd\H{o}s-R\'enyi ($ER$) graphs (first row) and real-world graphs (second row),  with varying query set size and fixed graph size (left column), and varying graph size and fixed query set size (right column). \label{fig:runtime}}
\vspace{-3mm}
\end{figure*}

Interestingly, we also observe that many problem instances on the \dataset{vienna} benchmark are real-world instances of the situation depicted earlier in Figure~\ref{fig:steiner_example} -- that is to say \wsq\ produces a slightly larger solution in number of vertices,  yet with a significantly smaller Wiener index.

\subsection{Scalability}\label{subsec:exp_scala} \enlargethispage*{2\baselineskip}
We now focus on \wsq's runtime performance and scalability with an increasing graph or query set size.
We use Erd\H{o}s-R\'enyi ($ER$) and Power-Law ($PL$) models to generate synthetic graphs with varying size, while keeping constant other graph properties. We also use the larger real-world graphs in Table \ref{tab:all-nets}.
Results are reported in Figure~\ref{fig:runtime}.
First, we note that the performance of the algorithm is not significantly affected by the type of graph (random or power-law). Second, runtime has an almost linear relationship with the query set size, as well as the input graph size.
However, as expected by Theorem~\ref{thm:main},  runtime is most impacted by the graph size rather than the query-set size.

\spara{Parallelization.}  Examining Algorithm~\ref{alg:approx} we notice that we can easily speed-up our method
via parallelization (e.g., via a Map-Reduce execution), assuming the graph $G$ fits in memory.
In fact, by launching $|Q|$ threads in parallel (Map), we can achieve a linear speedup of $|Q|$. Each thread examines one root $r \in Q$ and computes
shortest-path distances in~$G$ from $r$ to construct and solve the Steiner tree instances for different choices of the parameter $\lambda$.
Then, all possible solutions can be collected (Reduce), and the best one chosen.
To this end, each thread needs to evaluate its candidate solutions. Since these are typically small in practice, the thread
can compute the induced shortest-path distances from all vertices in its solutions and compute their Wiener indices exactly. In the (unlikely)
scenario that a solution is large, we can instead approximate the Wiener
index (see Remark~\ref{rem:compute_wiener}). This preserves the approximation guarantee while providing an overall speedup of $|Q|$.
If $G$ is too large to fit in memory, it becomes necessary to employ techniques for parallel and/or approximate shortest-distance computations~\cite{distance_oracles,bader06parallel-betweenness,kourtellis12kpath,riondato2014approxbetweenness}, but these are beyond the scope of this work.


\section{Case studies}
\label{sec:casestudies}
%
%
%
%
%

\begin{figure}[t!]
\vspace{-1mm}
\begin{center}
	\includegraphics[scale=0.6]{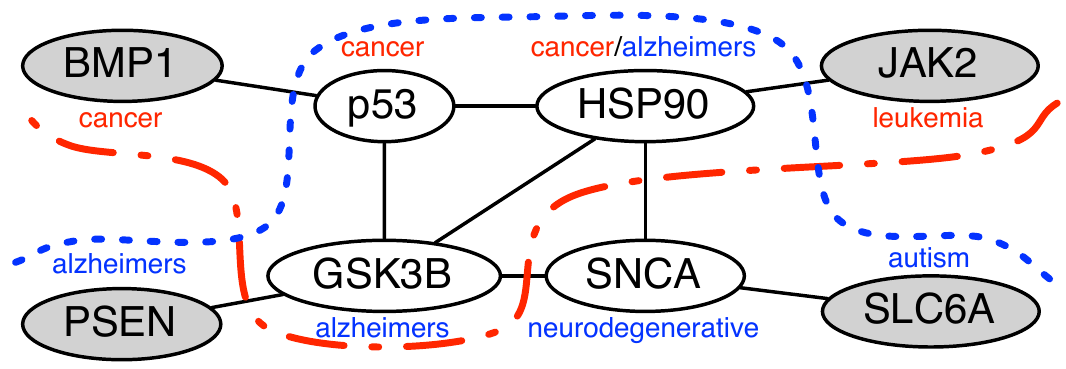}
\end{center}
\vspace{-3mm}
\caption{A minimum Wiener connector extracted from a PPI network and links genes associated with cancer and Alzheimer's disease.\label{fig:usecase6-bio}}
\vspace{-3mm}
\end{figure}

\spara{Protein-Protein-Interaction network.} \enlargethispage*{2\baselineskip}
Network analysis has established itself as a central component of computational and systems biology.
Barabasi et al \cite{barabasi2011network} drew attention to the great potential of \emph{``network medicine''} in the study of diseases.
This work highlighted the utility of identifying not only vertices with high betweenness centrality, but also those that act as links between diseases.
Finding such vertices may lead to the discovery of new protein-disease associations and a deeper understanding of the relationships between diseases~\cite{chang2013dynamic,chandrasekaran2013network,Santiago2014694,baryshnikova2013genetic}.

The minimum Wiener connector fits well in this setting, as it aims at finding few central vertices that connect a given set of query vertices.
%
%
As a proof of concept we use a human Protein-Protein-Interaction (PPI) network collected from BioGrid\footnote{\scriptsize\url{http://thebiogrid.org/}} with $15\,312$ vertices.  To demonstrate the utility of \wsq\ we require a ground truth about the relationship, so we select query proteins that have been the subject of previous biological study.
In Figure~\ref{fig:usecase6-bio} we report the minimum Wiener connector for a query set
shown in grey, and the solution connector-vertices in white.  For each query node we analyze the disease-association of its next-hop in the connector, and find that it is indicative of the studied association of the query node.  For example, we observe that the next-hop of \emph{BMP1} is \emph{p53} which is widely regarded as central in cancer; we then verify in the literature that in fact \emph{BMP1} has also been linked to cancer~\cite{vogelstein2010p53,thawani2010bone}.
Similar literature-verified examples are:
\squishlist
	\item  \emph{PSEN} is related to the other query nodes through \emph{GSK3B} -- uncovering its role in Alzheimer's disease.
	\item \emph{JAK2} is connected through \emph{HSP90} which has been studied for its potential therapeutic role in JAK-related diseases~\cite{bareng2007potential,fridman2012interplay}.
	\item \emph{SLC6A4} is suspected to play a role in Alzheimer disease, and is connected to \emph{SNCA}, a known factor in Alzheimer's.
\squishend
Further, the high connectivity of the inner nodes insinuates a close relationship between Cancer and Alzheimer's (e.g., as seen by the interaction between \emph{p53} and \emph{GSK3B}), which has in fact been a topic of interest and study~\cite{proctor2010gsk3,jacobs2012gsk,gao2012gsk3}.

This sample query is exemplary of the quality and potential of finding the minimum Wiener connector.
Identifying not only high betweenness and important nodes, but also those that act as links, gives potential for new directions of investigation for protein-disease and disease-disease relationships.
The connector also provides a concise summary of the relationships that is amenable to visualization.

\begin{figure}[t!]
  \vspace{-1mm}
\begin{center}
	\includegraphics[scale=0.45]{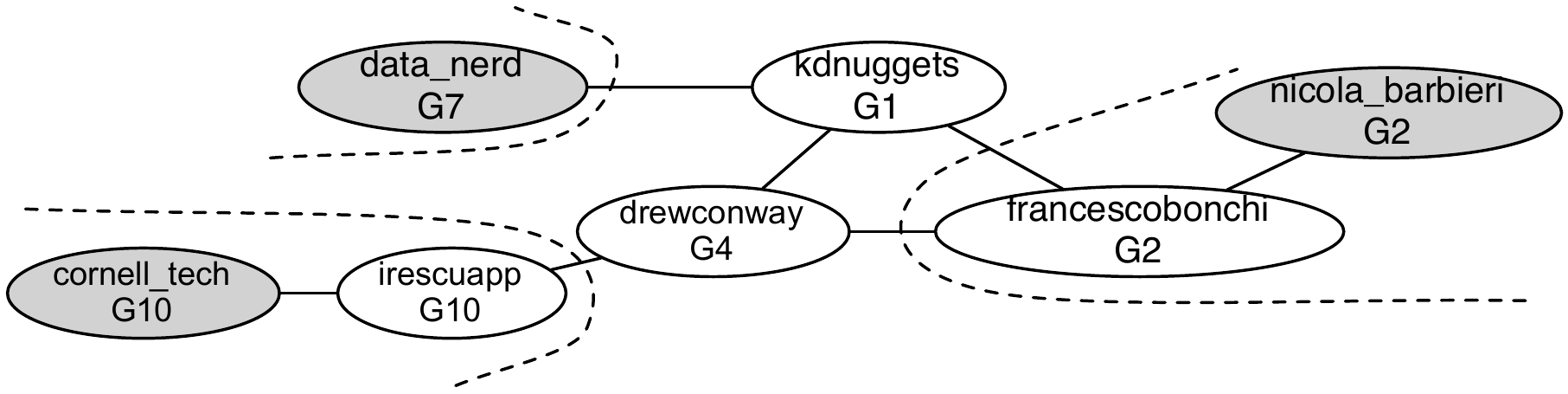}\\
	\includegraphics[scale=0.45]{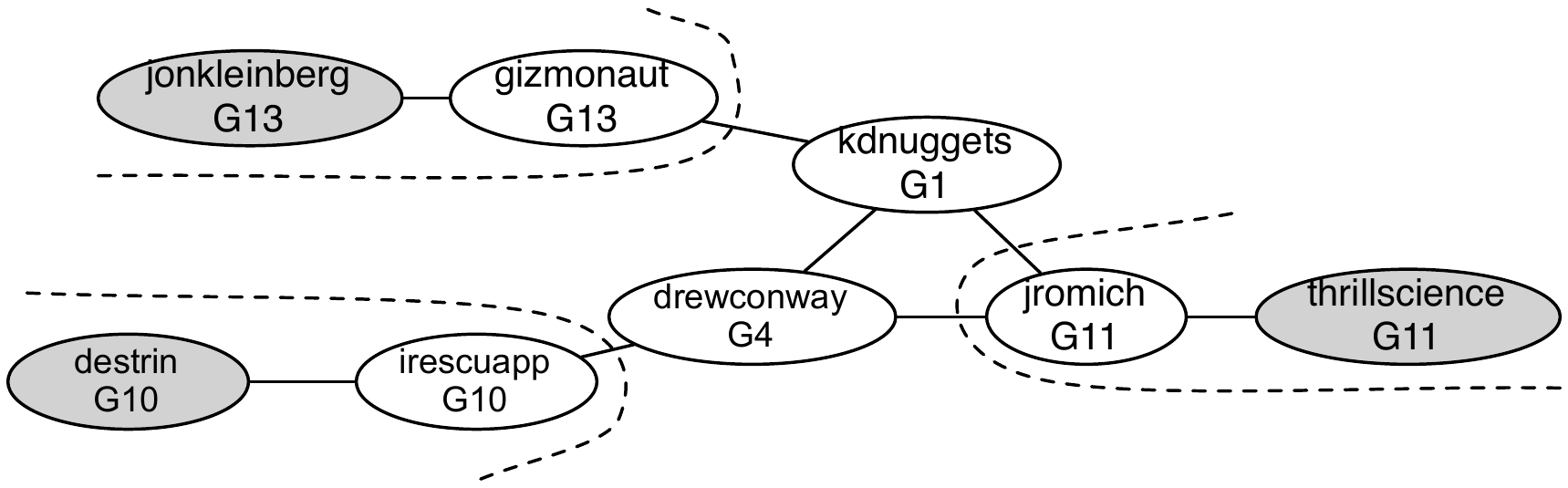}
\end{center}
  \vspace{-3mm}
\caption{Two minimum Wiener connectors extracted from the Twitter $\#kdd2014$ graph.\label{fig:usecase1-2}}
  \vspace{-3mm}
\end{figure}

\spara{Social network.}
The next case study is based on a graph created over Twitter users taking part in the ACM SIGKDD 2014 conference.
The graph contains 1\,141 Twitter users whose tweets over the three-day period contained the hashtag $\#kdd2014$, or who replied to or were mentioned in these tweets. There is an edge between two users for each reply or mention. \enlargethispage*{2\baselineskip}
The Clauset-Newman-Moore algorithm was used to cluster the graph into 10 communities.\footnote{\scriptsize\url{https://nodexlgraphgallery.org/Pages/Graph.aspx?graphID=26533}}

Figure~\ref{fig:usecase1-2} reports two minimum Wiener connectors extracted with query sets $Q$  (shown in gray)  consisting of vertices belonging to different communities.
The vertices chosen to be combined with $Q$ to produce the solution subgraph~$H$ are, in both cases, users that exhibit some influence or leadership.
In particular, we observe that in both examples,~$H$ contains the users $kdnuggets$	and $drewconway$, each of which have a very large set of Twitter followers
 ($23.1k$ and $10k$ respectively), and turn out to be the top mentioned and replied-to users in the whole $\#kdd2014$ dataset.
Table~\ref{tab:twitter-stats} contains more detailed information. In particular it shows that the other intermediate vertices included in the minimum Wiener connectors also exhibit high levels of activity
and are among the top-10 mentioned or replied-to users in their respective communities.

\begin{table}[t!]
\vspace{2mm}
\caption{Statistics on Tweeters in $\#kdd2014$ graph.}
\centering
\scriptsize
\tabcolsep=0.0cm
\begin{tabular}{lcl}
\toprule
UserId			&	Followers &	\multicolumn{1}{c}{Notes}					\\
\midrule
$kdnuggets$		&	23.1k	&	Top-1 mentioned in entire graph \& G1			\\
				& 			&	Top-1 betweenness in entire graph				\\
				&			&	Top-3 word in entire graph	 \& G1				\\
				&			&	Top-6 mentioned in G2-G8					\\
				&			&	Top-10 word in G4							\\
				&			&	Top-2 replied-to in G8						\\
$drewconway$		&	10.7k	&	Top-7 mentioned in entire graph \& G1, G4		\\
				&			&	Top-4 replied-to in entire graph \& G2, G3			\\
				&			&	Top-6 word in G4							\\
$gizmonaut$		&	304		&	Top-9 tweeter in G10							\\
$irescuapp$		&	204		&	Top-7 mentioned in G10						\\
$jromich$			&	165		&	Top-7 replied-to in G1						\\
$francescobonchi$	&	619		&	Top-7 mentioned in G2						\\
\bottomrule
\end{tabular}
\label{tab:twitter-stats}
\end{table}

%

\section{Conclusions}
\label{sec:conclusions}
In this paper we introduced the \ourprob\ problem: given a graph and a set of query vertices, find a subgraph that connects the query vertices while minimizing  the sum of pairwise shortest-path distances within that subgraph. In such simple and elegant formulation, the objective function favors small solutions built by adding important (central) vertices to connect the given query vertices.
Thanks to these features, the minimum Wiener connector lends itself naturally to applications in biological and social network analysis.

We showed that the problem is \NPhard, cannot admit any PTAS, and has an exact (yet impractical) algorithm that runs in polynomial time for the special case where the size of the query set is constant. Also, as a major contribution, we provided a constant-factor approximation algorithm that runs in time proportional (up to logarithmic factors) to the size of the input graph and the number of query vertices.

\pagebreak


\appendix
\section{Remaining proofs}\label{sec:more_proofs}

\subsection{Proof of Theorem~\ref{thm:exact_ub} (Section~\ref{sec:algorithms})}\label{sec:more_proofs_exact}

Recall that a \emph{homomorphism}  between two graphs $H$ and~$H'$ is a mapping $\phi: V(H) \to V(H')$ such that
$(u, v) \in E(H)$ implies $(\phi(u), \phi(v)) \in E(H')$.
\begin{lemma}\label{lem:homomorphism}
Let $\phi: H \to H$ be a surjective graph homomorphism. Then $\W(H') \le \W(H)$.
\end{lemma}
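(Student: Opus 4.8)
The plan is to combine two facts: a graph homomorphism never increases shortest-path distances, and surjectivity lets me identify the pairs of $V(H')$ with a subset of the pairs of $V(H)$ without over-counting. (I read the displayed map as $\phi\colon H \to H'$, so that $\W(H')$ makes sense against the preceding definition.) I would also record at the outset that $H'$, being a homomorphic image of the connected graph $H$, is itself connected, so all distances below are finite and $\W(H')$ is well defined.

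First I would establish the monotonicity property: for all $u, v \in V(H)$,
$$ d_{H'}\bigl(\phi(u), \phi(v)\bigr) \le d_H(u, v). $$
This is immediate, since a shortest path $u = w_0, w_1, \ldots, w_k = v$ in $H$ maps edge-by-edge to a walk $\phi(w_0), \ldots, \phi(w_k)$ of the same length $k$ in $H'$: because $\phi$ preserves edges, each $\phi(w_i)\phi(w_{i+1})$ is an edge of $H'$, so this is a valid walk and the shortest-path distance between its endpoints can only be smaller.

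Next I would use surjectivity to pick a section $s\colon V(H') \to V(H)$, choosing for each $u' \in V(H')$ some vertex $s(u')$ with $\phi(s(u')) = u'$. Distinct vertices of $H'$ receive distinct preimages, so $s$ is injective, and hence induces an injection on unordered pairs. Applying the monotonicity property to $s(u')$ and $s(v')$ gives, for every pair $\{u', v'\}$,
$$ d_{H'}(u', v') = d_{H'}\bigl(\phi(s(u')), \phi(s(v'))\bigr) \le d_H\bigl(s(u'), s(v')\bigr). $$

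Finally I would sum over all pairs. Summing the last inequality over $\{u', v'\} \subseteq V(H')$ bounds $\W(H')$ by $\sum_{\{u',v'\}} d_H(s(u'), s(v'))$; since $s$ realizes these as \emph{distinct} pairs inside $V(H)$, this sum is a sub-sum of $\sum_{\{x,y\} \subseteq V(H)} d_H(x,y) = \W(H)$, and therefore at most $\W(H)$. The only step that requires genuine care is this counting: because $\phi$ need not be injective, one cannot match pairs of $H'$ to pairs of $H$ directly, and it is precisely the section $s$ that converts the comparison into an injection on pairs and rules out over-counting. Everything else is routine.
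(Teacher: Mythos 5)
Your proof is correct and takes essentially the same route as the paper's: both rest on the distance-monotonicity inequality $d_{H'}(\phi(u),\phi(v)) \le d_H(u,v)$ obtained by pushing paths through $\phi$, and both invoke surjectivity only to handle the pair counting (the paper sums over all pairs of $V(H)$ and observes that every pair of $V(H')$ is counted at least once, with nonnegativity of distances absorbing the remaining terms, while your section $s$ injects the pairs of $V(H')$ into those of $V(H)$ --- two interchangeable pieces of bookkeeping for the same fact). Your reading of the statement's typo ($\phi\colon H \to H'$ rather than $\phi\colon H \to H$) matches the paper's intent, and your remark that $H'$ is connected, so that $\W(H')$ is well defined, is a small point the paper leaves implicit.
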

\begin{proof}
The existence of a homomorphism implies that for every path $p$ in $H$, there is a corresponding path in $H'$ (which may not be simple even if
        $p$ is). Therefore $d_{H'}(\phi(u), \phi(v)) \le d_H(u, v)$,
    and
\begin{align*}
\W(H) &=   \sum_{u, v \in V(H)} d_{H}(u, v) \\
      &\ge \sum_{u, v \in V(H)} d_{H'}(\phi(u), \phi(v)) \\
      &\ge \sum_{u', v' \in \phi(V(H))} d_{H'}(u', v') \\
      &= \sum_{u',v' \in V(H')} d_{H'}(u', v').
      \end{align*}

      The second inequality uses the fact that every pair $u', v'$ from the image of $V(H)$ under $\phi$ is counted at least once as $d_{H'}(\phi(u), \phi(v))$ for some $u, v \in V(H)$. 
The last equality is by the surjectivity of $\phi$.    
      \end{proof}

\begin{lemma}\label{lem:h_good}
Let $G$ be a graph, $H$ be a connected subgraph of $G$, $Q \subseteq V(H)$ and let 
$$A = Q \cup \{ v \in V(H) \mid \deg_H(v) > 2 \}.$$
We call $A$ the set of \emph{pivotal} vertices of $H$ with respect to $Q$.

Call a path $p$ between two vertices of $A$ \emph{basic} if the internal vertices of $p$ are outside $A$; say that an unordered pair of vertices $u, v
\in A$ is \emph{neighbouring} if $u, v \in V(H)$ and there is a basic path from $u$ to $v$.
Suppose we construct a graph $H'$ by including the vertices and edges of an arbitrary shortest path in $G$ between each pair of
neighbouring elements of $A$. Then $Q \subseteq V(H')$ and either $\W(H') = \W(H)$, or $H$ is not a minimum Wiener connector for $Q$.
\end{lemma}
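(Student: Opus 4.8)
The plan is to exhibit $H'$ as a genuine connector for $Q$ whose Wiener index is no larger than that of $H$; the lemma then follows at once, since a strictly smaller Wiener index would contradict the optimality of $H$, while equality is the first alternative. So I must establish three things: that $Q \subseteq V(H')$, that $H'$ is connected (hence a legitimate competing connector), and that $\W(H') \le \W(H)$.

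First I would record the structural facts. Every vertex of $V(H)\setminus A$ has degree at most $2$ in $H$, so that, up to pendant pieces, $H$ is a subdivision of the multigraph $M$ on vertex set $A$ whose edges are the neighbouring pairs. Assuming $|A|\ge 2$ (the case $|A|\le 1$ forces $|Q|\le 1$ and is trivial, as $Q\subseteq A$), every $a\in A$ is the endpoint of some basic path: walk along any $H$-path from $a$ to another $A$-vertex and stop at the first $A$-vertex reached. Hence $A\subseteq V(H')$, and in particular $Q\subseteq V(H')$. Moreover $M$ is connected because $H$ is, and in $H'$ each neighbouring pair is joined by a path; since every vertex of $H'$ lies on one of these paths, $H'$ is a connected subgraph containing $Q$, i.e. a valid connector.

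The heart of the proof is the inequality $\W(H')\le \W(H)$, which I would obtain through a chain of Wiener-index-non-increasing steps. Step (1): delete from $H$ every vertex lying on no basic path. Because non-pivotal vertices have degree at most $2$, these deleted vertices form pendant pieces attached at a single vertex and therefore lie on no shortest path between surviving vertices; deleting them leaves all surviving distances unchanged while dropping nonnegative terms, so the Wiener index does not increase. The result $H_1$ is exactly the subdivision of $M$ in which edge $e=\{u,v\}$ is a path of length $\ell_e\ge d_G(u,v)$. Step (2): for each edge $e$ contract $\ell_e-d_G(u,v)$ of its internal edges, shortening it to length $d_G(u,v)$ while keeping $u,v$ fixed; this yields the abstract subdivision $\tilde H'$ with all edge lengths equal to $d_G(u,v)$. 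Step (3): map $\tilde H'$ onto the concrete $H'$ by sending the $i$-th subdivision vertex of edge $e$ to the $i$-th vertex of the chosen shortest $G$-path for $e$; this is a surjective homomorphism (consecutive images are adjacent in $G$ and present in $H'$), so Lemma~\ref{lem:homomorphism} gives $\W(H')\le \W(\tilde H')$. Chaining the three steps yields $\W(H')\le \W(\tilde H')\le \W(H_1)\le \W(H)$.

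The step I expect to be the genuine obstacle is (2). One cannot simply map a basic path of length $\ell_e$ homomorphically onto the shorter shortest path of length $d_G(u,v)$, because a path-to-path homomorphism fixing the two endpoints requires the lengths to have equal parity, and in a general graph $\ell_e$ and $d_G(u,v)$ need not agree modulo $2$. Edge contraction sidesteps this parity obstruction, since it identifies adjacent vertices rather than folding the path onto itself; the only properties that the proof of Lemma~\ref{lem:homomorphism} actually uses are that the map be surjective and not increase pairwise distances, and a contraction satisfies both even though the contracted edge collapses to a single vertex. With (2) in hand the remaining steps are routine, and comparing $\W(H')$ with $\W(H)$ as above completes the proof: if $\W(H')<\W(H)$ then $H$ is not a minimum Wiener connector, and otherwise $\W(H')=\W(H)$.
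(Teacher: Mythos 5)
Your proof is correct, and although it runs on the same engine as the paper's --- Lemma~\ref{lem:homomorphism}, applied to a surjective, distance-non-increasing map onto $H'$ --- the way you produce that map is a genuinely different decomposition. The paper invokes the optimality of $H$ up front: it argues that a minimum connector has a \emph{unique} basic path per neighbouring pair and that \emph{every} edge of $H$ lies on some basic path, and then defines in one shot a map $\phi(p_i) = q_{\min(i,t')}$ from $H$ to $H'$ that truncates each basic path onto the chosen shortest path. You instead factor the argument into three unconditional steps (prune the pendant pieces, contract each subdivided edge of the multigraph $M$ down to length $d_G(u,v)$, fold the abstract subdivision onto the concrete shortest paths), which establishes the stronger statement $\W(H') \le \W(H)$ for an \emph{arbitrary} connected subgraph $H$ of $G$ containing $Q$, with optimality entering only in the final disjunction. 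This buys you two things the paper handles more tersely: parallel basic paths are absorbed into the multigraph bookkeeping rather than excluded by an optimality argument (your pendant-piece analysis, including cycles attached at a single pivotal vertex, is sound, since no shortest path between survivors can enter and leave such a piece through its unique attachment vertex); and the parity obstruction you flag is real --- a strict endpoint-fixing homomorphism from a path of length $\ell_e$ onto one of length $d_G(u,v)$ requires $\ell_e \equiv d_G(u,v) \pmod 2$ --- so your observation that the proof of Lemma~\ref{lem:homomorphism} uses only surjectivity and non-expansiveness is exactly the right repair. Notably, the paper's own truncation map silently relies on the same relaxation: for $i \ge t'$ its $\phi$ collapses edges to the single vertex $q_{t'}$, so it is not a homomorphism under the paper's stated definition either. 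The one rough edge in your write-up is the degenerate case you wave off: if $A$ admits no neighbouring pair (e.g.\ $|Q| \le 1$ with $H$ a single vertex), the construction literally gives $V(H') = \emptyset$ and the unconditional assertion $Q \subseteq V(H')$ holds only if one reads the construction as including $A$ itself; the paper's proof has the same blind spot, so dismissing it as trivial is fair.
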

\begin{proof}
It suffices to show that if $H$ is a minimum Wiener connector for $Q$, 
   then we can construct a surjective homomorphism $\phi$ from $H$ to $H'$ whose
restriction to $A$ is the identity. Indeed,  then clearly $Q \subseteq A \subseteq
V(H')$, and Lemma~\ref{lem:homomorphism} would imply $\W(H') \le \W(H)$.

For each neighbouring pair $(u, v)\in A \times A$, there is a unique basic path $p = p_0 \ldots p_t$ in $H$ between $u =
p_0$ and $v = p_t$ (otherwise removing some path yields a smaller connector).
The graph~$H'$ contains a path $q_0, \ldots, q_{t'}$ in $G$ (where $t' \le t$);
define $\phi(p_i) = q_{\min(i, t')}$ for $i \in \{0, \ldots, t'\}$. The map~$\phi$ is well-defined because the internal
    vertices of all these paths in $H$ are distinct, as their degree is 2. 
By construction,~$\phi$ is surjective on $V(H')$ and $\phi(u) = u$ for all $u \in A$. Moreover, the image of every edge in the unique basic path between a pair of neighbouring vertices
is an edge of $H'$. Since any edge of $H$ must belong to some basic path  (or else we could remove one of its endpoints while reducing the Wiener index of $H$), it
follows that $\phi$ is a homomorphism, as claimed.
\end{proof}


\begin{lemma}\label{lem:great}
Let $H$ be a connected graph and let $\calP = (\{s_i, t_i\})_{i \in [m]}$ denote a sequence of $m$ unordered pairs of distinct
vertices of~$H$.  Write $T = \bigcup_{i \in [m]} \{ s_i, t_i\}$ and call a sequence $p_1, \ldots, p_m$ of paths in $H$ \emph{valid} if for all $i \in [m]$, $p_i$ is a shortest path between $s_i$ and $t_i$.
There is a valid sequence $p_1, \ldots, p_m$ of paths in $H$
such that in the subgraph $H' = \bigcup_{i\in[m]} p_i$ of $H$ there are at most $m (m - 1)$ vertices with degree different from two, and $\W(H') \le
\W(H)$.
\end{lemma}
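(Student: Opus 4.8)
The plan is to fix the valid sequence by a minimality principle and then control two things separately: the combinatorial structure of the overlaps (for the degree bound) and the relation $\W(H')\le\W(H)$. Concretely, among all valid sequences I would select one that minimizes the number of edges $|E(H')|$ of the union $H'=\bigcup_{i}p_i$ (equivalently, one that shares as many edges as possible across the $p_i$). The elementary fact driving everything is that every subpath of a shortest path is again a shortest path. Hence, whenever two chosen paths pass through the same two vertices $a,b$ in a compatible order but take different routes between them, I can reroute one of them along the other's $a$--$b$ segment without changing its length or its endpoints; this keeps the sequence valid and does not increase $|E(H')|$. Starting from an edge-minimal sequence, such uncrossing moves let me assume that any two of the paths overlap in a single connected subpath.

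With this normal form, I would bound the vertices of degree different from two by adding the paths one at a time. Writing $H'_i=\bigcup_{j\le i}p_j$, the first path $p_1$ contributes only its two endpoints, while when $p_i$ is appended to $H'_{i-1}$ its single-segment overlap with each earlier path $p_j$ ($j<i$) can create at most two new vertices of degree at least three, one at each end of the shared segment. Thus step $i$ introduces at most $2(i-1)$ branch vertices, and summing gives $\sum_{i=1}^{m}2(i-1)=m(m-1)$, matching the stated count. The only genuinely new degree-one vertices are endpoints of the chosen paths, i.e.\ elements of $T$, and I would absorb these into the same charging scheme.

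For the inequality $\W(H')\le\W(H)$ I would invoke Lemma~\ref{lem:homomorphism}: it suffices to exhibit a surjective graph homomorphism $\phi\colon H\to H'$. The $\phi$ I would aim for fixes $H'$ and ``folds'' the rest of $H$ onto it, using the geodesic structure to preserve adjacencies, so that every edge of $H$ outside $H'$ collapses onto an edge or a vertex of $H'$ sitting on one of the chosen shortest paths. Since each $p_i$ realises the true distance $d_H(s_i,t_i)$, no such collapse shortens a distance below what $H'$ already provides, and Lemma~\ref{lem:homomorphism} would then yield $\W(H')\le\W(H)$.

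I expect the construction of this homomorphism to be the main obstacle and the step requiring the most care. Producing a globally consistent surjection $\phi\colon H\to H'$, as opposed to merely collapsing the portions of $H$ that lie along the chosen geodesics, is delicate: vertices of $H$ off every chosen path must still be sent to adjacent images, and doing so while keeping $\phi$ a homomorphism is precisely where the edge-minimal choice of the $p_i$ and the single-segment overlap structure have to be exploited. The uncrossing argument and the degree count are comparatively routine; the heart of the proof is establishing that the edge-minimal union $H'$ is a fold of $H$, i.e.\ that $H$ admits a surjective homomorphism onto it, and I would spend the bulk of the effort verifying that this fold can always be defined.
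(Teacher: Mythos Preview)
Your uncrossing and degree-counting argument is essentially the paper's: they argue by induction on~$m$, adding one shortest path at a time and rerouting it so that it shares at most one contiguous segment with each earlier path, whence step~$i$ creates at most $2(i-1)$ new branch vertices and the total is $\sum_{i=1}^m 2(i-1)=m(m-1)$. Your edge-minimality selection principle is a clean way to arrive at the same normal form.

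The genuine divergence---and a genuine gap---is your plan for $\W(H')\le\W(H)$ via a surjective homomorphism $\phi\colon H\to H'$. Such a homomorphism need not exist. Take $H$ to be the $5$-cycle $a\!-\!b\!-\!c\!-\!d\!-\!e\!-\!a$ and $\calP$ the single pair $\{a,c\}$: the unique shortest path is $a\!-\!b\!-\!c$, so $H'$ is a path, hence bipartite, while $H$ contains an odd cycle. There is no graph homomorphism from an odd cycle into any bipartite graph, so the fold you hope to build is impossible here. This is not a technicality repairable by a cleverer choice of paths---edge-minimality does nothing to make $H'$ non-bipartite when the geodesics involved are trees. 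You correctly flagged this step as ``the main obstacle''; along the homomorphism route it is in fact insurmountable.

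The paper does not go this way at all. Its argument for $\W(H')\le\W(H)$ is a one-liner: $H'$ is a subgraph of $H$ (so $V(H')\subseteq V(H)$) that ``maintains shortest-path distances,'' hence $\W(H')=\sum_{\{u,v\}\subseteq V(H')} d_{H'}(u,v)=\sum_{\{u,v\}\subseteq V(H')} d_{H}(u,v)\le \W(H)$. The mechanism is that $H'$ has \emph{fewer} vertices than $H$, not that $H$ maps onto $H'$; Lemma~\ref{lem:homomorphism} is not invoked here.
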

\begin{proof}
Since $H$ is connected, there is some valid sequence of paths; what we need to show is that there is one with the degree property.
We argue by induction on~$m$. When $m = 1$, all the internal vertices of any shortest pair between $s_1$ and $t_1$ have degree two, so we are done.
Suppose the theorem holds for $m-1$ vertex pairs; let $p_1, \ldots, p_{m-1}$ be the paths in the conclusion of the lemma, and take an arbitrary
shortest path $p_m$ from $s_m$ to $t_m$. We claim that we can replace the path $p_m$ with another path $p'_m$ such that, for each $p_i$ with $i < m$,
         at most two vertices have a different successor or predecessor in the two paths $p'_m$ and $p_i$. This is easy to see because if~$p_m$
         meets $p_j$, leaves it and intersects it for a second time, we can replace the part of $p_m$ between the two intersections by a subpath of $p_j$.

Consequently, when we add the edges in path $p'_m$ to the subgraph 
$\bigcup_{i\in[m-1]} p_i$, for any $i < m$ we increase the degree of at most two vertices that belong to $p_i$. Therefore the total number of vertices
of degree larger than two is at most $(m - 1)(m - 2) + 2(m - 1) = m (m - 1)$, as desired.

Finally, note that the above path-replacement procedure cannot increase the Wiener index as $H'$ is a subgraph of $H$ that maintains shortest-path
distances.
\end{proof}

\begin{lemma}\label{lem:few_pivotal}
Let $k = |Q|$. For any graph $G$, there is an optimal solution to \ourprob\ where at most $k^4$ vertices have degree in $H$ larger than two.
\end{lemma}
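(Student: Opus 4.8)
The plan is to start from an arbitrary optimal solution $H$ for \ourprob\ and carve out of it a subgraph $H'$ that is still optimal yet has few vertices of degree larger than two. The tool for this is Lemma~\ref{lem:great}: it takes a connected graph together with a list of $m$ vertex pairs and produces a union of shortest paths realizing those pairs whose number of degree-$\ne 2$ vertices is at most $m(m-1)$ and whose Wiener index does not exceed that of $H$. The key idea is to feed it the collection $\calP$ of all $\binom{k}{2}$ pairs of distinct query vertices, where $k = |Q|$; this makes $m = \binom{k}{2} = O(k^2)$, so $m(m-1) = O(k^4)$, which matches the target bound.

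Concretely, assuming $k \ge 2$, I would let $H$ be a minimum Wiener connector for $Q$ and apply Lemma~\ref{lem:great} with $\calP = \{\{q,q'\} : q,q' \in Q,\ q \ne q'\}$. This yields shortest paths $p_{q,q'}$ in $H$ whose union $H'$ has at most $m(m-1)$ vertices of degree different from two. Three things then need to be checked. First, feasibility: every query vertex is an endpoint of some chosen path, so $Q \subseteq V(H')$; and fixing any $q_0 \in Q$, each other query vertex is joined to $q_0$ by the path $p_{q_0,\cdot} \subseteq H'$, while every remaining vertex of $H'$ lies on some $p_{q,q'}$ and is thus connected to $q_0$ through $q$, so $H'$ is a connector for $Q$. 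Second, optimality: Lemma~\ref{lem:great} guarantees $\W(H') \le \W(H)$, and since $H'$ is feasible and $H$ was optimal we get $\W(H') = \W(H)$, so $H'$ is optimal too. Third, the count: the vertices of degree larger than two form a subset of those of degree different from two, so there are at most $m(m-1) = \binom{k}{2}\bigl(\binom{k}{2}-1\bigr) < \binom{k}{2}^2 < (k^2/2)^2 < k^4$ of them.

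Finally I would dispose of the degenerate cases $k \le 1$ separately: for $k \le 1$ the single query vertex (or empty connector) is itself optimal with no vertex of degree larger than two, and $0 \le k^4$. I do not expect a genuine obstacle here, since Lemma~\ref{lem:great} already packages the hard combinatorial work, namely the path-replacement argument bounding branch vertices and the monotonicity $\W(H') \le \W(H)$ under passing to this subgraph. The only points demanding care are the verification that the union of shortest paths over all query pairs is connected, which follows from every path being tied back to the common endpoint $q_0$, and the elementary arithmetic $\binom{k}{2}\bigl(\binom{k}{2}-1\bigr) < k^4$. It is worth noting that the same argument using only the $k-1$ pairs $\{q_0,q\}$ already yields the sharper bound $(k-1)(k-2)$; taking all pairs is simply the cleaner, symmetric choice that still comfortably meets the stated $k^4$.
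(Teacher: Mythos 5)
Your proposal is correct and follows essentially the same route as the paper: both apply Lemma~\ref{lem:great} to the sequence of all $m=\binom{k}{2}$ pairs of distinct query vertices, observe that the resulting union $H'$ of shortest paths is a connector with $\W(H')=\W(H)$, and bound the branch vertices by $m(m-1)\le k^4$. If anything, your counting is slightly cleaner than the paper's, which detours through an unnecessary argument excluding non-query degree-$1$ vertices and states the looser-looking bound $k+m(m-1)\le k^4$, whereas you simply note that vertices of degree larger than two form a subset of those of degree different from two; your explicit connectivity check via a common query endpoint and the treatment of $k\le 1$ are also harmless refinements of the same argument.
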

Note that such $H$ is not necessarily an induced subgraph.
\begin{proof}
Let $H$ be an optimal solution.
Consider a sequence $\calP$ containing all $m = \binom{k}{2}$ pairs of distinct query nodes.
By Lemma~\ref{lem:great}, there is a sequence $p_1, \ldots, p_m$ of shortest paths in $H$ (one for each pair of query nodes) such that in the graph $H'$
formed by the union of all these paths, there are most $m (m - 1)$ vertices with  degree different from two. Clearly $H'$ is a connector for $Q$
(since it contains paths linking each pair of query nodes). Since $\W(H') \le \W(H)$ and~$H$ is
optimal, it follows that $\W(H') = \W(H)$. Also, there cannot be any non-query vertices of degree 1, otherwise we could remove them from $H'$ and
still obtain a connector of $Q$ with smaller Wiener index. So the total number of vertices of degree larger than 2 in $H'$ is at most $k + m (m - 1) \le k^4$.
\end{proof}

\spara{Proof of Theorem~\ref{thm:exact_ub}}.
Let $k = |Q|$. We can loop over all possible $\binom{n}{k^4}$ vertex subsets of size $k^4$; by Lemma~\ref{lem:few_pivotal} one of them will be the set $X$ of vertices of
degree 2 in the optimal solution $H^*$. Then $X \cup Q$ is the set of pivotal vertices of~$H^*$ with respect to $Q$; 
we can construct in polynomial time a graph $H'$ as in Lemma~\ref{lem:h_good}, and we will have $\W(H') \le \W(H^*)$,
   hence $\W(H^*) = W(H')$. 

Overall, the algorithm runs in time $n^{\poly(k)}$.\qed

\subsection{Proof of Lemma~\ref{lem:one_source} (Section~\ref{sec:approxalg})}\label{sec:more_proofs_one_source}
Let $r^* = \operatorname{argmin}_{r} \sum_v d_H(v, r)$.
Observe that
\begin{align*}
|V(H)| \cdot \sum_v d_H(v, r^*)
&= \sum_w \big( \sum_v d_H(v, r^*) \big) \\
&\le \sum_w \big (\sum_v d_H(v, w) \big) 
&= 2\,\W(H), 
\end{align*}
\vspace{-0.4cm}
\begin{align*}
\text{and } 2\, \W(H) &\le \sum_w \big (\sum_v [ d_H(v, r^*) + d_H(r^*, w) ] \big) \\
&= \sum_{v,w} d_H(v, r^*) + \sum_{v,w} d_H(r^*, w)  \\
&= 2 \sum_{v,w} d_H(v, r^*) 
= 2\,|V(H)| \cdot \sum_{v} d_H(v, r^*),
\end{align*}
where we used the choice of $r^*$ in the first inequality and the triangle inequality for $d_H$ in the second inequality.\qed

\subsection{Proof of Lemma~\ref{lem:add_shortest_paths} (Section~\ref{sec:approxalg})}\label{sec:proof_tree}
Let $T_S$ be the shortest-path tree from $r$ to the elements of~$T$, determined by an array of distances $d_S[]$ and an array of parent links $p_S[]$.
Consider the algorithm below, which traverses $T$ and performs a series of edge relaxations that add additional vertices
from $T_S$ in order to decrease distances to the root~$r$. 
The important invariant maintained is that the edges $\{(p[v], v)\} \mid d[v] \neq \infty\}$ form a
subtree of $T \cup T_S$, with $d[v]$ an upper bound on the distance between the root and $v$ in the tree.

\begin{algorithm}[h]
\caption{\textsf{AdjustDistances}}
\small
\begin{algorithmic}[1]
\Require A graph $G=(V,E)$; a subtree $T$; a root node $v \in V(T)$; and a BFS tree from $r$ with parent array $p_S[]$ and distance array $d_S[]$.
\Ensure A tree.\vspace{2mm}

\State Construct hash tables $d[], p[]$, with default values $p[v] = \textbf{nil}$ and $d[v] = \infty$ for all $v \in V(G)$.
\State $d[r] \gets 0$.
\State $\textsf{dfs}(r)$.
\State \Return the tree $T'=\{(v, p[v]) \mid v \in V(G) \wedge p[v] \neq \textbf{nil}\}$.
\end{algorithmic}
\end{algorithm}

\begin{algorithm}[h]
\caption{\textsf{dfs}}
\small
\begin{algorithmic}[1]
\Require A vertex $u$.
\If {$d[u] > (1 + \sqrt{2}) d_S[u]$}
    \State $\textsf{AddPath}(u)$
\EndIf    

\For{each child $v$ of $u$ in $T$}
    \State $\textsf{relax}(u, v)$
    \State $\textsf{dfs}(v)$
    \State $\textsf{relax}(v, u)$
\EndFor
\end{algorithmic}
\end{algorithm}

\begin{algorithm}[h]
\caption{\textsf{AddPath}}
\small
\begin{algorithmic}[1]
\Require A vertex $u$.
\State $v \gets u$
\While {$d[v] > d_S[v]$}
    \State $\textsf{relax}(p_S[v], v)$
    \State $v \gets p[v]$
\EndWhile
\end{algorithmic}
\end{algorithm}

\begin{algorithm}[h]
\caption{\textsf{Relax}}
\small
\begin{algorithmic}[1]
\Require two adjacent vertices $u, v$.
\If {$d[v] > d[u] + 1$}
    \State $d[v] \gets d[u] + 1$
    \State $p[v] \gets u$
\EndIf    
\end{algorithmic}
\end{algorithm}
    
We need to show that \textsf{AdjustDistances} runs in time $O(|V(T)|)$ and returns a tree $T'$ satisfying the following:
\begin{enumerate}[(a)]
    \item $V(T') \supseteq V(T)$;
    \item $|V(T')| \le (1 + \sqrt{2}) |V(T)|$;
    \item for all $v \in V(T')$, $d_{T'}(r, v) \le (1 + \sqrt{2})\,d_G(r, v)$.
    \item $\sum_{v \in V(T')} d_{G}(r, v) \le \sqrt{2}\, \sum_{v \in V(T)} d_G(r, v)$.
\end{enumerate}

Property a) holds because every time we insert a new vertex $u$ in the tree (that is, $p[u]$ becomes $\neq \textsf{nil}$), it is never removed again; 
and the call $\textsf{dfs}(r)$ visits
all vertices of $T$. Property c) holds because  
$d[u] = d_S[u]$ for
all $u \in V(T') \setminus V(T)$, and
whenever $d[u] > (1 + \sqrt{2}) d_S[u]$
for some $u \in V(T)$, we add a path to achieve $d[u] = d_S[u]$.

Next we analyze the running time. For $d[]$ and $p[]$ we use a resizable hash table with constant expected amortized
update/lookup time~\cite{dynamic_hashing,cuckoo}. When an element $v$ is not in the table, we insert $p[v] \gets
\textbf{nil}$ and $d[v] \gets \infty$.  This way lines 1-2 of \textsf{AdjustDistances} take time $O(1)$.
We also keep track of which elements $v \in V(G)$ have been assigned values in the table, so line 4 takes time $O(|V(T')|)$ rather than $O(|V(G)|)$. 
The running time of $\textsf{dfs}(r)$ (excluding line 1, which is run $O(|V(T)|)$ times) is proportional to the number of calls to
$\textsf{relax}$ made by $\textsf{AddPath}$ and the recursive calls to $\textsf{dfs}$. 
The number of relaxations is $O(|V(T')|)$ because every edge of $T$ or $T_S$ is relaxed at most twice by \textsf{dfs} and at most once by \textsf{AddPath}.
Therefore the running time of $\textsf{dfs}(r)$ is $O(|V(T')|)$, which is also $O(|V(T)|)$ assuming property b).

Now we show property b). As the algorithm executes, define a potential function $\Phi$ to be the distance estimate of the current vertex (for ease of notation we omit the
        dependence of $\Phi$ on the current time). When a shortest path of length $\ell = d_S[u]$ to
the current vertex $u$ is added by $\textsf{AddPath}(u)$, $\phi = d[u] > \alpha \ell$, where $\alpha = 1 + \sqrt{2}$. Adding the path lowers $d[u]$ to
$\ell$, decreasing $\phi$ by
at least $(\alpha - 1) \ell$. Hence the total length of the added paths is bounded by the sum of the decrements to $\phi$ during the course of the
algorithm, divided by $\alpha - 1$.
Since $\phi$ is initially 0 and always nonnegative, the sum of the decreases is at most the sum of the increments. The potential $\Phi$ increases only when the
current vertex changes from some vertex $u$ to a vertex $v$ after the edge $(u, v)$ was relaxed, which ensures that $d[v] \le d[u] + 1$ and that $\Phi$
increases by at most~1. Since each edge is traversed twice, the total of the increases to $\Phi$ during the course of the algorithm is bounded by
twice the number of edges in~$T$.
This establishes that the total length of the added paths is bounded by $2/(\alpha - 1) = \sqrt{2}$ times the total number of edges of $T$.
Thus, $|V(T')| \le (1 + \sqrt{2}) |V(T)|$, showing b).

Only property d) remains to be shown. Define analogously a potential $\Psi$ to be $\binom{d[v]+1}{2}$ when the current vertex is $v
\in V(T)$. Adding a shortest path of length $\ell$ when $d[v] > (1 + \sqrt{2}) \ell$ lowers $\Psi$ by at least $\ell^2 (1 + \sqrt 2)$.
The vertices added increase the sum of distances from $r$ by at most $\binom{\ell-1}{2} \le \ell^2 / 2$. Hence the total increase in sum of
distances is bounded by the sum of the decrements to $\Psi$, divided by $2 (1+\sqrt{2})$.
The sum of the deceases is at most the sum of the increases. The potential~$\Psi$ increases by at most $d_G(v)$ when relaxing an
edge $(u, v)$. Since each edge is traversed twice, the total of the increases to $\Psi$ is bounded by
$\frac{2}{2(1 + \sqrt 2)} \sum_{v \in V(T)} d_G(v)$. Hence
$\sum_{v \in V(T') \setminus V(T)} d_G(v) \le \frac{1}{1 + \sqrt2} \sum_{v \in V(T)} d_G(v),$
which implies c).\qed

\subsection{Proof of Lemma~\ref{coro:alpha_approx} (Section~\ref{sec:approxalg})} \label{sec:more_proofs_alpha}
We need the following lemma.
\begin{lemma}\label{lem:sum_product}
Let $x_0, y_0 \in \reals^+$ and $\lambda = \sqrt{\frac{y_0}{x_0}}$.
Then, for all $x, y \in \reals^+$ it holds that
$$
 \frac{x y}{x_0 y_0} \le \left( \frac{x \lambda + \frac{y}{\lambda}}{x_0 \lambda + \frac{y_0}{\lambda}} \right)^2.
 $$
\end{lemma}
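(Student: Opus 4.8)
The plan is to reduce this to the elementary AM--GM inequality after exploiting the specific value of $\lambda$. The first observation is that the choice $\lambda = \sqrt{y_0/x_0}$ makes the denominator on the right-hand side especially clean: since $\lambda^2 = y_0/x_0$, we have $x_0\lambda = \sqrt{x_0 y_0}$ and $y_0/\lambda = \sqrt{x_0 y_0}$, so $x_0\lambda + y_0/\lambda = 2\sqrt{x_0 y_0}$ and the squared denominator equals $4\,x_0 y_0$. This is the one place where the hypothesis on $\lambda$ is used.

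With that simplification, the desired inequality becomes
$$
\frac{xy}{x_0 y_0} \le \frac{\left(x\lambda + \frac{y}{\lambda}\right)^2}{4\,x_0 y_0}.
$$
Multiplying both sides by the positive quantity $x_0 y_0$, the statement is equivalent to the purely two-variable claim $4xy \le \left(x\lambda + \frac{y}{\lambda}\right)^2$, which no longer involves $x_0, y_0$ at all.

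To finish, I would expand the right-hand side as $x^2\lambda^2 + 2xy + y^2/\lambda^2$, so that the target inequality reduces to $2xy \le x^2\lambda^2 + y^2/\lambda^2$. This is exactly AM--GM applied to the nonnegative reals $x^2\lambda^2$ and $y^2/\lambda^2$, whose geometric mean is $\sqrt{x^2\lambda^2\cdot y^2/\lambda^2} = xy$ (using $x,y>0$); equivalently, it is the nonnegativity of the square $\left(x\lambda - \frac{y}{\lambda}\right)^2 \ge 0$.

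There is really no hard step here: the only ``trick'' is recognizing that $\lambda = \sqrt{y_0/x_0}$ collapses the denominator to $2\sqrt{x_0 y_0}$, after which the claim is a one-line square-completion. I expect the write-up to be a short chain of equivalences ending in $(x\lambda - y/\lambda)^2 \ge 0$, with the main point being merely to record the reduction cleanly rather than to overcome any genuine obstacle.
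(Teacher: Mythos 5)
Your proof is correct and matches the paper's argument essentially step for step: both exploit $\lambda=\sqrt{y_0/x_0}$ to show the squared denominator equals $4x_0y_0$, and both then apply AM--GM (equivalently, $\left(x\lambda-\frac{y}{\lambda}\right)^2\ge 0$) to the pair $x\lambda$ and $\frac{y}{\lambda}$ to get $4xy\le\left(x\lambda+\frac{y}{\lambda}\right)^2$. No differences worth noting.
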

\begin{proof}
Our choice of $\lambda$ implies that $4x_0y_0 = \left ( x_0 \lambda + \frac{y_0}{\lambda} \right )^2$. 
Recall that, by the AM--GM inequality, $\sqrt{ab} \leq \frac{a+b}{2} \Rightarrow 4ab \leq (a + b)^2$ for all $a,b \in \reals^+$.
Hence
$$
\frac{xy}{x_0y_0} = \frac{4xy}{4x_0y_0} = \frac{4\left( x\lambda \right )\left ( \frac{y}{\lambda} \right )}{\left ( x_0\lambda + \frac{y_0}{\lambda}
        \right )^2} \leq \left ( \frac{x\lambda + \frac{y}{\lambda}}{x_0\lambda + \frac{y_0}{\lambda}} \right )^2.\qed
$$
\end{proof}

Now we are ready to show Lemma~\ref{coro:alpha_approx}. Let $\Wonesub^*$ denote the optimal solution to Problem~\ref{prob:weakrootedwiener} and
set 
$
\lambda = \sqrt{\frac{\sum_{u \in \Wonesub^*} d_G(r, u)}{|\Wonesub^*|}}. 
$
It is easy to see that $\lambda \in [1/\sqrt{2}, \sqrt{|V(G)|}]$ as all distances $d_G(r, u)$ but one (i.e., $d_G(r, u)$ which is equal to 0) are in the range $[1, |V(G)|]$ and $|A^*| \geq 2$.
Let $\Wtwosub^*$ (resp., $\Wtwosub$) denote the optimal solution (resp., an $\alpha$-approximate solution) to Problem~\ref{prob:weird}.
By our choice of $\lambda$, Lemma~\ref{lem:sum_product} implies that
\begin{eqnarray*}
\lefteqn{\frac{\Wonew(\Wtwosub,r)}{\Wonew(\Wonesubw^*,r)} = \frac{|\Wtwosub|\sum_{u \in \Wtwosub} d_G(r,u)}{|\Wonesubw^*|\sum_{u \in \Wonesubw^*} d_G(r,u)}}\\
& \leq & \left ( \frac{|\Wtwosub|\lambda + \frac{1}{\lambda}\sum_{u \in \Wtwosub} d_G(r,u)}{|\Wonesubw^*|\lambda + \frac{1}{\lambda}\sum_{u \in \Wonesubw^*} d_G(r,u)} \right )^2\\
&  = & \left ( \frac{\Wtwo(\Wtwosub, r)}{\Wtwo(\Wonesubw^*, r)}  \right )^2 \ \leq \ \left ( \frac{\Wtwo(\Wtwosub, r)}{\Wtwo(\Wtwosub^*, r)}  \right
        )^2,
\end{eqnarray*}
where the last inequality follows from $\Wtwo(\Wonesubw^*, r) \ge \Wtwo(\Wtwosub^*, r)$ (due to the optimality of $B^*$ for Problem~\ref{prob:weird}).
To complete the proof, note that $\frac{\Wtwo(\Wtwosub, r)}{\Wtwo(\Wtwosub^*, r)} \leq \alpha$ holds by hypothesis.\qed


\mycomment{
\subsection{Proof of Lemma~\ref{lem:query_roots} (Section~\ref{sec:approxalg})} \label{sec:more_proofs_query_roots}
    For any vertex $x \in V(T)$, let $d(x) = \sum_{u \in V(T)} d_T(u, x)$. 
    It suffices to show that $d(x^*) - d(r) \leq 2d(r)$.
    To this end, partition $V(T)$ into levels according to the distance to $r$: $L_i = \{ u \in V(T) \mid d_T(r, u) = i \}$.
    Let $\ell = d_T(r, x^*)$ and write $L_{\leq \ell} = \bigcup_{j \leq \ell} L_j$ and $L_{> \ell} = \bigcup_{j > \ell} L_j$.
    On the one hand,
    \begin{eqnarray}\label{dx_ub}
    \lefteqn{d(x^*) - d(r) \ = \sum_{u \in V(T)} (d_T(u, x^*) - d_T(u, r))} \\
                &\leq & \sum_{u \in V(T)} | d_T(u, x^*) - d_T(u, r) | 
           \ \leq \ \ (|L_{\le \ell}| + |L_{> \ell}|)~\ell\nonumber. 
    \end{eqnarray}       
    On the other hand, observe that by the choice of $x^*$, 
    it is guaranteed that $|L_0| \leq |L_1| \leq \ldots \leq |L_{\ell}|$, as every vertex at level $i < \ell$ has at least one child (and they are distinct as $H$ is acyclic).
    This implies that we can partition $L_{\leq \ell}$ into a collection of pairs $\{a, b\}$ where $a \neq b$ and $d_T(r, a) + d_T(r, b) \geq \ell$, possibly along
    with a singleton element from $L_{\geq \ell / 2}$.
    Therefore, the average distance from the elements of $L_{\leq \ell}$ to $r$
    is at least $\ell / 2$. Furthermore, every element of $L_{\geq \ell}$ is at distance $> \ell$ from $r$ by definition.
    Hence
    \begin{equation}\label{dr_lb}
    d(r) \geq |L_{\le \ell}| \frac{\ell}{2} + |L_{> \ell}| (\ell + 1).
    \end{equation}
    Combining Equations~\eqref{dx_ub} and~\eqref{dr_lb} yields the result.\qed

}
\subsection{Proof of Theorem~\ref{ip_model} (Section~\ref{sec:lowerbounds})}\label{sec:more_proofs_ip} We have to show that an optimal solution to~\eqref{ip1} gives an optimal solution to our problem, and viceversa.

It is clear that any solution $S \subseteq V(G)$ to \ourprob\ yields a feasible solution to~\eqref{ip1}: set
$y_u = 1$ and $p_{uv} = 1$ iff $u, v \in S$, and for each $s, t \in S$, pick a shortest path $z_0 = s, z_1, \ldots, z_t
= t$ from $s$ to $t$ in $S$ and
set $f_{z_i,z_{i+1}}^{st} = 1$; set all other variables to $0$.
This satisfies all constraints and the objective function coincides with $\W(S)$.

Conversely, consider an optimal solution to~\eqref{ip1} and let $S = \{ u \in V \mid y_u = 1 \}$; note that $S \supseteq Q$. We show that
the objective function is at least $\W(G[S])$. For any $s, t \in S$, $p_{st} \ge y_s+y_t-1 \ge 1$. The constraints now imply that
we can route $p_{st} \ge 1$ units of flow from $s$ to $t$, where the capacity of directed edge $(u, v)$ is at
most~$y_u$. Note that once~$y_u$ and~$p_{st}$ have been fixed, all remaining constraints involve only flow variables and constants, and only flow variables with the same pair $s, t$. Therefore, the only
way to minimize the objective function is to find, for each $s, t$, a min-cost flow of value $p_{st}$ (where the cost and capacity of every edge is one), i.e., a
shortest-path from $s$ to~$t$ in $S$. Thus there is such a path~$p$,
and the sum of $f^{st}_{uv}$ for the directed edges $(u, v)$ of $p$ is at least $d_S(s, t)$. As this holds for all $s, t$, the result follows.\qed


\end{document}